\documentclass[12pt,technote, onecolumn, draft]{IEEEtran}
\usepackage{latexsym}
\usepackage{mathrsfs}
\usepackage{multirow}
\usepackage{amsfonts,amssymb,amsmath,amsthm,bm}
\usepackage{color}
\usepackage{cite}
\usepackage{comment}
\newtheorem{theorem}{Theorem}

\newtheorem{definition}{Definition}
\newtheorem{remark}{Remark}
\newtheorem{corollary}{Corollary}
\newtheorem{lemma}{Lemma}
\newtheorem{proposition}{Proposition}

\begin{document}

\title{Constructions and Characterizations of $s$-Plateaued Partitions $^{\dag}$}
\author{Jiaxin Wang, Yadi Wei, Fang-Wei Fu, Jin Li, Fulin Li
\IEEEcompsocitemizethanks{\IEEEcompsocthanksitem Jiaxin Wang, Jin Li and Fulin Li are with the School of Mathematics, Hefei University of Technology, Hefei 230601, China, Emails: wjiaxin@hfut.edu.cn, lijin\_0102@126.com, lflsxx66@163.com, Yadi Wei and Fang-Wei Fu are with the Chern Institute of Mathematics and LPMC, Nankai University, Tianjin 300071, China, Emails: wydecho@mail.nankai.edu.cn, fwfu@nankai.edu.cn.
}
\thanks{$^\dag$This research is supported by the National Key Research and Development Program of China (Grant No. 2022YFA1005000), the National Natural Science Foundation of China (Grant Nos. 62371259 and 12471491), the Fundamental Research Funds for the Central Universities of China (Nankai University), and the Nankai Zhide Foundation.}
\thanks{manuscript submitted June 26, 2026}
}

\maketitle

\begin{abstract}
 Bent partitions play a significant role in constructing bent functions and have rich connections with coding theory and combinatorics. In this paper, we introduce $s$-plateaued partitions, which generalize the bent partitions. Let $\Gamma=\{A_{i}, 1 \leq i \leq K\}$ be a partition of $V_{n}^{(p)}$, where $V_{n}^{(p)}$ is an $n$-dimensional vector space over the prime field $\mathbb{F}_{p}$ and $p \mid K$. Then $\Gamma$ is called an $s$-plateaued partition of $V_{n}^{(p)}$ of depth $K$ if each $p$-ary function $f: V_{n}^{(p)} \rightarrow \mathbb{F}_{p}$ for which every $j \in \mathbb{F}_{p}$ has exactly $\frac{K}{p}$ of sets $A_{i}$ in $\Gamma$ in its preimage set, is a $p$-ary $s$-plateaued function. By using an $s$-plateaued partition, a large number of $p$-ary $s$-plateaued functions, vectorial $s$-plateaued functions and generalized $s$-plateaued functions can be constructed. In particular, $0$-plateaued partitions are just bent partitions. In general, $s$-plateaued partitions are much more complicated than bent partitions. We analyze the possible cardinality of $A_{i}$ of an $s$-plateaued partition. We give some explicit constructions of $s$-plateaued partitions for which any generated $p$-ary $s$-plateaued function has no nonzero linear structure. We give a characterization of an $s$-plateaued partition $\Gamma=\{A_{i}, 1 \leq i \leq K\}$, where $p$ is odd, $K \geq 5$ and $-A_{i}=A_{i}, 1 \leq i \leq K$. Based on which, we show that if $p \geq 5$, then the preimage set partition of a $p$-ary $s$-plateaued function $f: V_{n}^{(p)} \rightarrow \mathbb{F}_{p}$ with $f(x)=f(-x)$ is an $s$-plateaued partition if and only if $f$ is of $(p-1)$-form, where $n+s$ is even.
 When $s=0$, we partially address an open problem on whether a bent partition $\Gamma$ of $V_{n}^{(p)}$ of depth $p^{\frac{n}{2}}$ must be obtained from spreads. 
\end{abstract}

\begin{IEEEkeywords}
Plateaued partition; Bent partition; $p$-ary plateaued function; Vectorial plateaued function; Generalized plateaued function
\end{IEEEkeywords}

\section{Introduction}

As an important class of cryptographic functions,  $s$-plateaued functions were introduced in \cite{ZZ2001On}. 
In particular, $0$-plateaued functions are bent functions. Plateaued functions can possess many good cryptographic properties, such as high nonlinearity, resiliency, low autocorrelation, without nonzero linear structure and satisfying propagation criteria (see e.g. \cite{Mesnager2016Be}). Besides their desirable cryptographic properties, plateaued functions have important applications in coding theory, sequences and combinatorics (see e.g. \cite{BFT2018Ex,MOS2019Li,MS2020Se,Olmez2015Pl,WWF2026Se}). As generalizations, vectorial $s$-plateaued functions \cite{Carlet2015Bo,MOS2016Re} and generalized $s$-plateaued functions \cite{MTQ2017Ge} were introduced. Vectorial $s$-plateaued functions can be used to construct partial geometric difference sets \cite{CO2021Gr}, phase orthogonal sequence sets for (QS)CDMA communications \cite{ZPZ2022Ph} and  self-orthogonal minimal codes \cite{RPZW2024Se}. 
 
 Let $V_{n}^{(p)}$ be an $n$-dimensional vector space over the prime field $\mathbb{F}_{p}$ and $K$ be a positive integer divisible by $p$. A partition $\Gamma=\{A_{i}, 1 \leq i \leq K\}$ of $V_{n}^{(p)}$ is said to be a bent partition if any $p$-ary function $f: V_{n}^{(p)} \rightarrow \mathbb{F}_{p}$ that takes each value $j \in \mathbb{F}_{p}$ on exactly of $\frac{K}{p}$ of sets $A_{i}$ is a $p$-ary bent function \cite{AM2022Be}. Bent partitions play a quite powerful role in constructing $p$-ary bent functions, vectorial $p$-ary bent functions and generalized $p$-ary bent functions. Bent partitions also have important applications in constructing linear codes, partial difference sets, LP-packings,
 association schemes and Hadamard matrices (see e.g. \cite{AAKM2024Be,AKM2024Am,AKM2022Be,AKMO2025Be,WFWY2024A}).

In this paper, we generalize bent partitions to $s$-plateaued partitions. For a partition  $\Gamma=\{A_{i}, 1 \leq i \leq K\}$ of $V_{n}^{(p)}$, if each $p$-ary function $f: V_{n}^{(p)} \rightarrow \mathbb{F}_{p}$ that takes every value $j \in \mathbb{F}_{p}$ on exactly of $\frac{K}{p}$ of sets $A_{i}$ is a $p$-ary $s$-plateaued function, then $\Gamma$ is called an $s$-plateaued partition. A $0$-plateaued partition is just a bent partition. Utilizing a non-trivial $s$-plateaued partition allows us to generate numerous $p$-ary $s$-plateaued functions, vectorial $p$-ary $s$-plateaued functions and generalized $p$-ary $s$-plateaued functions. Any $p$-ary bent function from $V_{n}^{(p)}$ to $\mathbb{F}_{p}$ is unbalanced and its Walsh support is $V_{n}^{(p)}$. Overall, due to the fact that plateaued functions can be either balanced or unbalanced, and the Walsh support of a plateaued function is in general indeterminable, $s$-plateaued partitions are far more intricate than bent partitions. We present the possible cardinality of $A_{i}$ of an $s$-plateaued partition. We provide a general way to construct $s$-plateaued partitions. Based on which, some explicit constructions of $s$-plateaued partitions are given. Particularly, some constructions produce a large number of $p$-ary $s$-plateaued functions without nonzero linear structure. When $p$ is odd, $K \geq 5$, we provide a characterization of $s$-plateaued partitions $\Gamma=\{A_{i}, 1 \leq i \leq K\}$ with  $-A_{i}=A_{i}, 1 \leq i \leq K$. For a $p$-ary $s$-plateaued function $f: V_{n}^{(p)} \rightarrow \mathbb{F}_{p}$ with $f(x)=f(-x)$, where $p \geq 5$, we show that the preimage set partition of $f$ is an $s$-plateaued partition if and only if $f$ is of $(p-1)$-form, where $n+s$ is even. When $s=0$, we partially address an open problem on the relationship between bent partitions of depth $p^{\frac{n}{2}}$ and spreads.

The rest of the paper is organized as follows. Section II provides the needed preliminaries. In Section III, we introduce the definition of $s$-plateaued partitions. 
We analyze the possible cardinality of $A_{i}$ of an $s$-plateaued partition $\Gamma=\{A_{i}, 1 \leq i \leq K\}$. In Section IV, a general way to construct $s$-plateaued partitions is provided. Then we present some explicit constructions of $s$-plateaued partitions for which any generated $p$-ary $s$-plateaued function has no nonzero linear structure. In Section V, we give a characterization of $s$-plateaued partitions $\Gamma=\{A_{i}, 1 \leq i \leq K\}$ with $-A_{i}=A_{i}, 1 \leq i \leq K$, where  $p$ is odd and $K \geq 5$. We investigate the relations between $s$-plateaued partitions of depth $p$ with $-A_{i}=A_{i}$ and $p$-ary $s$-plateaued functions of $(p-1)$-form, where $p \geq 5$. Furthermore, we partially resolve an open problem regarding the relations between bent partitions of depth $p^{\frac{n}{2}}$ and spreads. In Section VI, we conclude this paper and list some problems for further research.

\section{Preliminaries}
In this section, we present some results on plateaued functions, number fields and presemifields. First, we fix some notations that are used throughout this paper.
\subsection{Some notations}
\begin{itemize}
	\item Let $p$ be a prime.
	\item For an integer $m$, let $\zeta_{m}$ be a complex primitive $m$-th root of unity.
	\item Let $\mathbb{F}_{p^n}$ be the finite field with $p^n$ elements.
	\item Let $\mathbb{F}_{p}^{n}$ be the vector space of $n$-tuples over $\mathbb{F}_{p}$.
	\item Let $V_{n}^{(p)}$ be an $n$-dimensional vector space over $\mathbb{F}_{p}$. 
	\item Let $\langle,\rangle_{n}$ be an inner product of $V_{n}^{(p)}$. If $V_{n}^{(p)}=\mathbb{F}_{p}^{n}$, let $\langle x, y\rangle_{n}=x \cdot y$, where $\cdot$ is the usual dot product. If $V_{n}^{(p)}=\mathbb{F}_{p^n}$, let $\langle x, y\rangle_{n}=\mathrm{Tr}_{1}^{n}(xy)$, where $\mathrm{Tr}_{m}^{n}$ is the trace function from $\mathbb{F}_{p^n}$ to $\mathbb{F}_{p^m}$ ($m$ is a divisor of $n$). If $V_{n}^{(p)}=V_{n_{1}}^{(p)} \times \cdots \times V_{n_{t}}^{(p)}$, let $\langle x, y\rangle_{n}=\sum_{i=1}^{t}\langle x_{i}, y_{i}\rangle_{n_{i}}$, where $x=(x_{1}, \dots, x_{t}), x_{i} \in V_{n_{i}}^{(p)}, 1 \leq i \leq t$.
	\item Let $\mathbb{Z}_{m}$ be the ring of integers modulo $m$.
	\item For $F: A \rightarrow B$ and $I \subseteq B$, let $D_{F, I}=\{x \in A: F(x) \in I\}$. When $I=\{i\}$, simply denote $D_{F, \{i\}}$ by $D_{F, i}$. If $F$ is injective, let $F^{-1}(b)=a$, where $F(a)=b$.
	\item For $S \subseteq V_{n}^{(p)}$, let $S^{*}=\{x \in S: x \neq 0\}$, and let $\chi_{a}(S)=\sum_{x \in S}\chi_{a}(x)$, where $\chi_{a}(x)=\zeta_{p}^{\langle a, x\rangle_{n}}$, $a \in V_{n}^{(p)}$.
	\item Let $\delta_{S}(x)$ be the indicator function. If $S=\{i\}$, simply denote $\delta_{\{i\}}(x)$ by $\delta_{i}(x)$. 
	\item For $x=0 \in \mathbb{F}_{p^n}$, for convention we denote $x^{-1}=x^{p^n-2}=0$.
	\item Let $\eta_{n}$ be the quadratic character of $\mathbb{F}_{p^n}$, that is, $\eta_{n}(a)=1$ if $a$ is a square element in $\mathbb{F}_{p^n}^{*}$, and $\eta_{n}(a)=-1$ if $a$ is a non-square element in $\mathbb{F}_{p^n}^{*}$. For convention, we denote $\eta_{n}(0)=0$.
\end{itemize}

\subsection{Some results on plateaued functions}

Let $f$ be a $p$-ary function from $V_{n}^{(p)}$ to $\mathbb{F}_{p}$. If there exists an integer $1 \leq \ell \leq p-1$ such that $f(ax)=a^{\ell}f(x)$ for any $a \in \mathbb{F}_{p}^{*}$ and $x \in V_{n}^{(p)}$, then $f$ is called of $\ell$-form. The \emph{Walsh transform} $W_{f}$ of $f: V_{n}^{(p)}\rightarrow \mathbb{F}_{p}$ is defined by
\begin{equation*}
	W_{f}(a)=\sum_{x \in V_{n}^{(p)}}\zeta_{p}^{f(x)-\langle a, x\rangle_{n}} \text{ for any } a \in V_{n}^{(p)}.
\end{equation*}
The $p$-ary function $f$ can be recovered by the inverse Walsh transform
\begin{equation*}
	\zeta_{p}^{f(x)}=\frac{1}{p^n}\sum_{a \in V_{n}^{(p)}}W_{f}(a)\zeta_{p}^{\langle a, x\rangle_{n}} \text{ for any } x \in V_{n}^{(p)}.
\end{equation*}
Denote $S_{f}=\{a \in V_{n}^{(p)}: W_{f}(a) \neq 0\}$, which is said to be the \emph{Walsh support} of $f$. Notice that $f$ is balanced (that is, $|D_{f, i}|=p^{n-1}, i \in \mathbb{F}_{p}$) if and only if $0 \notin S_{f}$. If $|W_{f}(a)|=p^{\frac{n+s}{2}}$ for all $a \in S_{f}$, then $f$ is called a \emph{$p$-ary $s$-plateaued} function. When $p=2$, that is, $f$ is a Boolean $s$-plateaued function, then $n+s$ must be even. For a $p$-ary $s$-plateaued function $f: V_{n}^{(p)}\rightarrow \mathbb{F}_{p}$, $|S_{f}|=p^{n-s}$ by the Parseval's identity $\sum_{a \in V_{n}^{(p)}}|W_{f}(a)|^{2}=p^{2n}$. A $0$-plateaued function $f: V_{n}^{(p)}\rightarrow \mathbb{F}_{p}$ is called a \emph{bent} function and $S_{f}=V_{n}^{(p)}$. An $n$-plateaued function is an affine function. In this paper, we only consider $s$-plateaued functions, where $s \neq n$. For a $p$-ary $s$-plateaued function $f: V_{n}^{(p)} \rightarrow \mathbb{F}_{p}$, when $p=2$, then
\begin{equation*}
	W_{f}(a)=2^{\frac{n+s}{2}}(-1)^{f^{*}(a)} \text{ for any } a \in S_{f},
\end{equation*}
and when $p$ is an odd prime, then for $a \in S_{f}$,
\begin{equation*}
	W_{f}(a)=\left\{\begin{split}
		\pm p^{\frac{n+s}{2}}\zeta_{p}^{f^{*}(a)}, & \text{ if } p^{n+s} \equiv 1 \pmod 4,\\
		\pm \sqrt{-1} p^{\frac{n+s}{2}} \zeta_{p}^{f^{*}(a)}, & \text{ if } \ p^{n+s} \equiv 3 \pmod 4,
	\end{split}\right.
\end{equation*}
where $f^{*}$ is a function from $S_{f}$ to $\mathbb{F}_{p}$, called the \textit{dual} of $f$. For a $p$-ary $s$-plateaued function $f: V_{n}^{(p)}\rightarrow \mathbb{F}_{p}$, let
\begin{equation*}
	\epsilon_{f}(x)=p^{-\frac{n+s}{2}}\zeta_{p}^{-f^{*}(x)}W_{f}(x), x \in S_{f}.
\end{equation*}
Note that $\epsilon_{f}(x) \in \{\pm 1, \pm \sqrt{-1}\}$. If $\epsilon_{f}$ is a constant function, then $f$ is called \textit{weakly regular}, otherwise $f$ is called \textit{non-weakly regular}. In particular, if $\epsilon_{f}=1$, then $f$ is called \textit{regular}. All Boolean $s$-plateaued functions are regular.

A vectorial $p$-ary function $F: V_{n}^{(p)}\rightarrow V_{m}^{(p)}$ is called \textit{vectorial $s$-plateaued} if all \textit{component functions} $F_{c}: V_{n}^{(p)}\rightarrow \mathbb{F}_{p}, c \in {V_{m}^{(p)}}^{*}$, defined as $F_{c}(x)=\langle c, F(x)\rangle_{m}$ are $p$-ary $s$-plateaued functions. Every $p$-ary $s$-plateaued function is vectorial $s$-plateaued. A vectorial $0$-plateaued function is called \emph{vectorial bent}. For a vectorial bent function $F: V_{n}^{(p)}\rightarrow V_{m}^{(p)}$,  if there exists a vectorial bent function $G: V_{n}^{(p)}\rightarrow V_{m}^{(p)}$ such that $(F_{c})^{*}=G_{\sigma(c)}, c \in {V_{m}^{(p)}}^{*}$, where $\sigma$ is some permutation of $V_{m}^{(p)}$ with $\sigma(0)=0$, then $F$ is called \emph{vectorial dual-bent}. The vectorial bent function $G$ is called a \textit{vectorial dual} of $F$ and denoted by $F^{*}$. We call a vectorial dual-bent function $F: V_{n}^{(p)} \rightarrow V_{m}^{(p)}$ satisfying \emph{Condition A} if there exists a vectorial dual $F^{*}$ such that $(F_{c})^{*}=(F^{*})_{c}, c \in {V_{m}^{(p)}}^{*}$, and $F_{c}$ are all weakly regular with $\epsilon_{F_{c}}=\epsilon$, where $\epsilon \in \{\pm 1\}$ is a constant. For the constructions of vectorial dual-bent functions satisfying Condition A, please refer to \cite{AAKM2025Be, AFKMWW2026A,WFW2023Be, WWF2026Fu}.

For a generalized $p$-ary function $f: V_{n}^{(p)} \rightarrow \mathbb{Z}_{p^m}$, define the generalized Walsh transform $W_{f}$ as $W_{f}(a)=\sum_{x \in V_{n}^{(p)}}\zeta_{p^m}^{f(x)}\zeta_{p}^{-\langle a, x\rangle_{n}}, a \in V_{n}^{(p)}$. If $|W_{f}(a)| \in \{p^{\frac{n+s}{2}}, 0\}$ for any $a \in V_{n}^{(p)}$, then $f$ is called a \emph{generalized $p$-ary $s$-plateaued} function. If $m=1$, then generalized $p$-ary $s$-plateaued functions reduce to $p$-ary $s$-plateaued functions.

The following lemma gives the value distribution of unbalanced
$p$-ary $s$-plateaued functions. 

\begin{lemma}[\cite{OP2020Du}]\label{pre1}
Let $f: V_{n}^{(p)} \rightarrow \mathbb{F}_{p}$ be a unbalanced $p$-ary $s$-plateaued function, that is, $0 \in S_{f}$.
\begin{itemize}
	\item When $p=2$, then $$|D_{f, 0}|=2^{n-1}+(-1)^{f^{*}(0)}2^{\frac{n+s}{2}-1}, |D_{f, 1}|=2^{n-1}-(-1)^{f^{*}(0)}2^{\frac{n+s}{2}-1}.$$
	\item When $p$ is odd and $n+s$ is even, then $$|D_{f, f^{*}(0)}|=p^{n-1}-\epsilon_{f}(0)p^{\frac{n+s}{2}-1}+\epsilon_{f}(0)p^{\frac{n+s}{2}}, |D_{f, i}|=p^{n-1}-\epsilon_{f}(0)p^{\frac{n+s}{2}-1}, i \in \mathbb{F}_{p} \backslash \{f^{*}(0)\}.$$
	\item When $p$ is odd and $n+s$ is odd, then $$|D_{f, f^{*}(0)}|=p^{n-1}, |D_{f, i}|=p^{n-1}+\epsilon_{f}(0)\sqrt{\eta_{1}(-1)}p^{\frac{n+s-1}{2}}\eta_{1}(f^{*}(0)-i), i \in \mathbb{F}_{p} \backslash \{f^{*}(0)\}.$$
\end{itemize}
\end{lemma}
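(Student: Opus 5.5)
The plan is to express each preimage size through the Walsh value at $0$, which $W_{f}(0)$ ties to all $|D_{f,i}|$ at once. Write $N_{i}=|D_{f,i}|$ for $i \in \mathbb{F}_{p}$. Then
\[
W_{f}(0)=\sum_{i \in \mathbb{F}_{p}} N_{i}\zeta_{p}^{i}, \qquad \sum_{i} N_{i}=p^{n}.
\]
Since $0 \in S_{f}$, we also have $W_{f}(0)=\epsilon_{f}(0)p^{\frac{n+s}{2}}\zeta_{p}^{f^{*}(0)}$, where for $p$ odd and $n+s$ odd the factor $\epsilon_{f}(0)$ includes the appropriate power of $\sqrt{-1}$.

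Next I equate the two expressions as elements of $\mathbb{Z}[\zeta_{p}]$. The elements $1,\zeta_{p},\dots,\zeta_{p}^{p-2}$ form a $\mathbb{Z}$-basis, and the only relation among $1,\zeta_{p},\dots,\zeta_{p}^{p-1}$ is $\sum_{i}\zeta_{p}^{i}=0$. Hence if $\sum_{i} N_{i}\zeta_{p}^{i}=\sum_{i} c_{i}\zeta_{p}^{i}$ with integer $c_{i}$, then $N_{i}=c_{i}+t$ for a single integer $t$. The constant $t$ is then fixed by the condition $\sum_{i} N_{i}=p^{n}$.

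The cases split by $p$ and the parity of $n+s$.
\begin{itemize}
\item \textbf{Case $p=2$.} We have $N_{0}-N_{1}=W_{f}(0)=(-1)^{f^{*}(0)}2^{\frac{n+s}{2}}$ and $N_{0}+N_{1}=2^{n}$, which gives the claimed values immediately.
\item \textbf{Case $p$ odd, $n+s$ even.} Here $W_{f}(0)=\epsilon_{f}(0)p^{\frac{n+s}{2}}\zeta_{p}^{f^{*}(0)}$ with $\epsilon_{f}(0)=\pm1$. So $c_{f^{*}(0)}=\epsilon_{f}(0)p^{\frac{n+s}{2}}$ and $c_{i}=0$ otherwise. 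Summing gives $pt+\epsilon_{f}(0)p^{\frac{n+s}{2}}=p^{n}$, so $t=p^{n-1}-\epsilon_{f}(0)p^{\frac{n+s}{2}-1}$, which is exactly the stated distribution.
\item \textbf{Case $p$ odd, $n+s$ odd.} This case needs $\sqrt{-1}$, or more precisely $\sqrt{\eta_{1}(-1)p}$, written in the basis of powers of $\zeta_{p}$. I would use the quadratic Gauss sum
\[
G=\sum_{j\in\mathbb{F}_{p}}\eta_{1}(j)\zeta_{p}^{j}=\sqrt{\eta_{1}(-1)}\sqrt{p}.
\]
Here $\sqrt{\eta_{1}(-1)}$ equals $1$ or $\sqrt{-1}$, which matches the form of $W_{f}(0)$ in the case $p^{n+s}\equiv 1$ or $3 \pmod 4$. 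We can then write
\[
W_{f}(0)=\epsilon' p^{\frac{n+s-1}{2}}\,G\,\zeta_{p}^{f^{*}(0)}=\epsilon' p^{\frac{n+s-1}{2}}\sum_{j}\eta_{1}(j)\zeta_{p}^{j+f^{*}(0)},
\]
with a sign $\epsilon'=\pm1$ determined by $\epsilon_{f}(0)$ and the choice of $\sqrt{\eta_{1}(-1)}$. Reindexing with $i=j+f^{*}(0)$ gives $c_{i}=\epsilon' p^{\frac{n+s-1}{2}}\eta_{1}(i-f^{*}(0))$.

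Since $\sum_{j}\eta_{1}(j)=0$, the normalization forces $t=p^{n-1}$. Finally, $\eta_{1}(i-f^{*}(0))=\eta_{1}(-1)\eta_{1}(f^{*}(0)-i)$, so this sign can be absorbed to reach the stated form, with $c_{f^{*}(0)}=0$.
\end{itemize}

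The main obstacle is the odd $n+s$ case. Correctly converting $\sqrt{-1}$ and $\sqrt{p}$ into the Gauss sum requires tracking the sign conventions between $\epsilon_{f}(0)$, the chosen square root $\sqrt{\eta_{1}(-1)}$, and the factor $\eta_{1}(-1)$ that appears after reindexing. I would first check the computation on a small example, such as the quadratic form $f(x)=x^{2}$ on $\mathbb{F}_{p}$, to confirm the signs before fixing the final expression.
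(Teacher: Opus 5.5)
Your proof is correct. It is the standard argument behind this lemma, which the paper does not prove but quotes from \cite{OP2020Du}: expand $W_f(0)=\sum_i|D_{f,i}|\zeta_p^i$, compare coefficients modulo the single relation $\sum_i\zeta_p^i=0$, and, for $n+s$ odd, rewrite $\sqrt{p}$ through the quadratic Gauss sum. The sign bookkeeping you worried about works out: since $\sqrt{p}=G/\sqrt{\eta_1(-1)}$ (this uses Gauss's sign determination, not just $G^2=\eta_1(-1)p$), you get $\epsilon'=\epsilon_f(0)/\sqrt{\eta_1(-1)}$ and hence $\epsilon'\eta_1(-1)=\epsilon_f(0)\sqrt{\eta_1(-1)}$, which is exactly the stated coefficient. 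The test case $f(x)=x^2$ on $\mathbb{F}_p$ confirms this for both residue classes of $p$ modulo $4$.
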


For a $p$-ary function $f: V_{n}^{(p)}\rightarrow \mathbb{F}_{p}$, if $f(x+a)-f(x)$ is a constant function, then $a$ is called a \emph{linear structure} of $f$. Any $p$-ary bent function has no nonzero linear structure. The following proposition generalizes \cite[Corollary 3.1]{HPWZ2019De} of the case of $p=2$ to any characteristic.

\begin{proposition}\label{Pronew}
	Let $f: V_{n}^{(p)}\rightarrow \mathbb{F}_{p}$ be an $s$-plateaued function and $S_{f}$ be its Walsh support. Denote $S_{f}=v+E$, where $v \in S_{f}$. Then $f$ has no nonzero linear structure if and only if $E$ contains a basis of $V_{n}^{(p)}$.
\end{proposition}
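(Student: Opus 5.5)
The plan is to characterize linear structures directly in terms of the Walsh support, using the Walsh transform and its inverse, and then translate the condition into linear algebra on $E=S_{f}-v$. The plateaued hypothesis only serves to fix the notation $S_{f}=v+E$; the equivalence itself should hold for any $p$-ary function.

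\emph{Step 1 (necessity of an orthogonality condition).} Suppose $a \in V_{n}^{(p)}$ is a linear structure, i.e.\ $f(x+a)-f(x)=c$ for all $x$, for some constant $c\in\mathbb{F}_{p}$. Substituting $x\mapsto x+a$ in the definition of $W_{f}(b)$ gives
\[W_{f}(b)=\sum_{x}\zeta_{p}^{f(x+a)-\langle b,x+a\rangle_{n}}=\zeta_{p}^{c-\langle b,a\rangle_{n}}W_{f}(b).\]
Hence every $b\in S_{f}$ satisfies $\langle a,b\rangle_{n}=c$.

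\emph{Step 2 (sufficiency).} Conversely, assume $\langle a,b\rangle_{n}=c$ for all $b\in S_{f}$. Apply the inverse Walsh transform at the point $x+a$:
\[\zeta_{p}^{f(x+a)}=p^{-n}\sum_{b\in S_{f}}W_{f}(b)\zeta_{p}^{\langle b,x\rangle_{n}+\langle b,a\rangle_{n}}=\zeta_{p}^{c}\zeta_{p}^{f(x)}.\]
So $f(x+a)-f(x)=c$ for all $x$, and $a$ is a linear structure. Combining Steps 1 and 2: $a$ is a linear structure of $f$ if and only if $\langle a,\cdot\rangle_{n}$ is constant on $S_{f}=v+E$. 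Since $v\in S_{f}$ forces the constant to be $\langle a,v\rangle_{n}$, this is equivalent to $\langle a,e\rangle_{n}=0$ for all $e\in E$, that is, $a\in E^{\perp}$.

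\emph{Step 3 (linear algebra).} By Steps 1 and 2, the set of linear structures of $f$ is exactly $E^{\perp}=(\mathrm{span}\,E)^{\perp}$. The inner product $\langle,\rangle_{n}$ is nondegenerate, so $(\mathrm{span}\,E)^{\perp}=\{0\}$ if and only if $\mathrm{span}\,E=V_{n}^{(p)}$. This holds exactly when $E$ contains a basis of $V_{n}^{(p)}$, which proves the proposition.

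The only point requiring care is the nondegeneracy of $\langle,\rangle_{n}$ for each model of $V_{n}^{(p)}$ in the notation list (dot product, trace form $\mathrm{Tr}_{1}^{n}(xy)$, and orthogonal sums of these). This is standard, so I expect no real obstacle.
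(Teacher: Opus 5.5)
Your proof is correct, but it takes a different route from the paper.

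\textbf{The paper's route.} It works through the autocorrelation. It first proves
\[
\sum_{x}\zeta_{p}^{f(x+a)-f(x)}=p^{-n}\sum_{z}|W_{f}(z)|^{2}\zeta_{p}^{\langle z,a\rangle_{n}}.
\]
It then uses the plateaued property $|W_{f}(z)|^{2}=p^{n+s}$ on $S_{f}$ to rewrite the right-hand side as $p^{s}\zeta_{p}^{\langle v,a\rangle_{n}}\sum_{z\in E}\zeta_{p}^{\langle z,a\rangle_{n}}$. Finally it applies the equality case of the triangle inequality twice: $a$ is a linear structure iff the autocorrelation has modulus $p^{n}$, iff the character sum over $E$ has modulus $|E|=p^{n-s}$, iff $a\in E^{\perp}$ (using $0\in E$).

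\textbf{Your route.} You characterize linear structures directly. Translation invariance of $W_{f}$ forces $\langle a,\cdot\rangle_{n}$ to be constant on $S_{f}$. The inverse Walsh transform gives the converse.

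\textbf{What each buys.} Your argument is more elementary, since it needs no modulus or equality-case reasoning. It also identifies the constant explicitly as $f(x+a)-f(x)=\langle a,v\rangle_{n}$. And it shows that the plateaued hypothesis plays no role: for every $p$-ary $f$, the set of linear structures is exactly the subspace $(S_{f}-v)^{\perp}$. The paper's route instead yields an explicit autocorrelation formula for plateaued functions. That formula carries more information, since its modulus quantifies how far a given $a$ is from being a linear structure, and it fits the spectral style used throughout the paper.

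Both proofs finish with the same step: $E^{\perp}=(\mathrm{span}\,E)^{\perp}$, combined with nondegeneracy of $\langle,\rangle_{n}$.
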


\begin{proof}
	For any $a \in V_{n}^{(p)}$,
	\begin{equation}\label{1}
		\begin{split}
			\sum_{x \in V_{n}^{(p)}}\zeta_{p}^{f(x+a)-f(x)}&=\frac{1}{p^n}\sum_{x, y \in V_{n}^{(p)}}\zeta_{p}^{f(y)-f(x)}\sum_{z \in V_{n}^{(p)}}\zeta_{p}^{\langle z, x-y+a\rangle_{n}}\\
			&=\frac{1}{p^n}\sum_{z \in V_{n}^{(p)}}\zeta_{p}^{\langle z, a\rangle_{n}}\sum_{x \in V_{n}^{(p)}}\zeta_{p}^{-f(x)+\langle z, x\rangle_{n}}\sum_{y \in V_{n}^{(p)}}\zeta_{p}^{f(y)-\langle z, y\rangle_{n}}\\
		&=\frac{1}{p^n}\sum_{z \in V_{n}^{(p)}}|W_{f}(z)|^{2}\zeta_{p}^{\langle z, a\rangle_{n}}=\frac{1}{p^n}\sum_{z \in S_{f}}p^{n+s}\zeta_{p}^{\langle z, a\rangle_{n}}=p^{s}\zeta_{p}^{\langle v, a\rangle_{n}}\sum_{z \in E}\zeta_{p}^{\langle z, a\rangle_{n}}.
		\end{split}
	\end{equation}
	It is easy to see that $a$ is a linear structure if and only if $|\sum_{x \in V_{n}^{(p)}}\zeta_{p}^{f(x+a)-f(x)}|=p^n$. Then by Eq. \eqref{1}, $a$ is a linear structure if and only if $|\sum_{z \in E}\zeta_{p}^{\langle z, a\rangle_{n}}|=p^{n-s}$. Note that $|E|=p^{n-s}$ and $0 \in E$. Thus, $a$ is a linear structure if and only if $\langle z, a\rangle_{n}=0$ for all $z \in E$, if and only if $a \in E^{\bot}$, where $E^{\bot}=\{x \in V_{n}^{(p)}: \langle z, x\rangle_{n}=0 \text{ for all } z \in E\}$ is a linear subspace. As easily seen, $E^{\bot}=<E>^{\bot}$, where $<E>$ is the linear space spanned by $E$. Then $E^{\bot}=\{0\}$ if and only if $E$ contains a basis of $V_{n}^{(p)}$.
\end{proof}

\subsection{Some results on number fields}
In this subsection, we present some results on number fields. For more information on number fields, the reader is referred to \cite{Feng2018Al,IR1990A}.
 
Let $L$ be a number field, that is, $L$ is a finite dimensional extension field of the rational number
field $\mathbb{Q}$. For $\beta \in L$, if there exists a monic polynomial $f(x) \in \mathbb{Z}[x]$ such that $f(\beta)=0$, then $\beta$ is called an algebraic integer. The set of all algebraic integers in $L$ forms a ring, denoted by $\mathcal{O}_{L}$. It is clear that $\mathcal{O}_{\mathbb{Q}}=\mathbb{Z}$. Each nonzero ideal $I$ of $\mathcal{O}_{L}$ with $I \neq \mathcal{O}_{L}$ can be uniquely (up to the order) expressed as $I=\mathcal{P}_{1}^{e_{1}}\cdots\mathcal{P}_{g}^{e_{g}}$, where $\mathcal{P}_{i}, 1 \leq i \leq g$, are distinct prime ideals of $\mathcal{O}_{L}$ and $e_{i}, 1 \leq i \leq g$, are positive integers. If $I=p\mathcal{O}_{L}$, then $p\mathbb{Z}=\mathcal{P}_{i} \cap \mathbb{Z}$ for any $1 \leq i \leq g$.

We will consider the cyclotomic field $L=\mathbb{Q}(\zeta_{p})$, where $p$ is odd. The ring of algebraic integers in $L$ is $\mathcal{O}_{L}=\mathbb{Z}[\zeta_{p}]$. Denote $p^{*}=\eta_{1}(-1)p$. Then $\sqrt{p^{*}} \in \mathcal{O}_{L}$. The group of roots of unity in $\mathcal{O}_{L}$ is $\{\pm \zeta_{p}^{i}, 0 \leq i \leq p-1\}$.
It is known that $p\mathcal{O}_{L}=((1-\zeta_{p})\mathcal{O}_{L})^{p-1}$, where $(1-\zeta_{p})$ is a prime ideal of $\mathcal{O}_{L}$. For any $1 \leq i \leq p-1$, $1+\zeta_{p}^{i}$ is a unit of $\mathcal{O}_{L}$ and $(1-\zeta_{p}^{i})\mathcal{O}_{L}=(1-\zeta_{p})\mathcal{O}_{L}$.

\subsection{Some results on presemifields}

Suppose $\circ$ is a binary operation defined on $(\mathbb{F}_{p^n}, +)$ such that $x \circ y=0$ implies $x=0$ or $y=0$, and $(x+y)\circ z=x\circ z+y \circ z, z\circ (x+y)=z\circ x+z\circ y$ for all $x, y, z \in \mathbb{F}_{p^n}$. Then $(\mathbb{F}_{p^n}, +, \circ)$ is called a \textit{presemifield}. For a presemifield $P=(\mathbb{F}_{p^n}, +, \circ)$, the presemifield $P^{T}=(\mathbb{F}_{p^n}, +, \star)$, called the \textit{transpose} of $P$, is obtained by $\mathrm{Tr}_{1}^{n}(z(x\circ y))=\mathrm{Tr}_{1}^{n}(x(z\star y)) \text{ for all }x, y, z \in \mathbb{F}_{p^n}$. If $x\circ (cy)=c(x\circ y)$ for any $x, y \in \mathbb{F}_{p^n}, c \in \mathbb{F}_{p^m}$, where $m \mid n$, then $P$ is called \textit{right $\mathbb{F}_{p^m}$-linear}. Some right $\mathbb{F}_{p^m}$-linear presemifields are presented in \cite{AKM2023Ge}.

\section{Basic properties of $s$-plateaued partitions}

We generalize the concept of bent partitions to $s$-plateaued partitions. A $0$-plateaued partition is just a bent partition introduced in \cite{AM2022Be}.

\begin{definition}\label{Definition 1}
Let $K$ be a positive integer divisible by $p$. Let $\Gamma=\{A_{i}, 1 \leq i \leq K\}$ be a partition of $V_{n}^{(p)}$. Assume that every $p$-ary function $f: V_{n}^{(p)} \rightarrow \mathbb{F}_{p}$ for which every $r \in \mathbb{F}_{p}$ has exactly $\frac{K}{p}$ of sets $A_{i}$ in $\Gamma$ in its preimage set, is a $p$-ary $s$-plateaued function. Then $\Gamma$ is called an $s$-plateaued partition of $V_{n}^{(p)}$ of depth $K$.
\end{definition}

\begin{remark}
Obviously, the preimage set partition of any Boolean $s$-plateaued function forms an $s$-plateaued partition of depth $2$. Notice that any permutation of $\mathbb{F}_{3}$ is affine. Therefore, the preimage set partition of any ternary $s$-plateaued function forms an $s$-plateaued partition of depth $3$. These $s$-plateaued partitions are viewed as trivial. Furthermore, it is easy to check that if $\Theta=\{A_{i}, 1 \leq i \leq K\}$ is a bent partition of $V_{n}^{(p)}$, then $\Gamma=\{A_{i} \times V_{s}^{(p)}, 1 \leq i \leq K\}$ is an $s$-plateaued partition of $V_{n}^{(p)} \times V_{s}^{(p)}$ for which the generated $p$-ary $s$-plateaued functions are of the form $f(x, y)=h(x)$, where $h$ is any generated $p$-ary bent function by $\Theta$. Such an $s$-plateaued partition is also viewed as trivial since it only generates partially bent functions \cite{Carlet1993Pa}. In this paper, we are interested in non-trivial $s$-plateaued partitions.
\end{remark}

By using a non-trivial $s$-plateaued partition, one can obtain a large number of vectorial $s$-plateaued functions and generalized $s$-plateaued functions.

\begin{theorem}\label{Theorem 2.1}
Let $K=p^{m} k$, where $m \geq 1$ and $p \nmid k$. Let $\Gamma=\{A_{i}, 1 \leq i \leq K\}$ be a partition of $V_{n}^{(p)}$. Then the following statements are pairwise equivalent.

(i) $\Gamma=\{A_{i}, 1 \leq i \leq K\}$ is an $s$-plateaued partition.

(ii) Let $F: V_{n}^{(p)} \rightarrow V_{m}^{(p)}$ for which every $r \in V_{m}^{(p)}$ has exactly $k$ of sets $A_{i}$ in $\Gamma$ in its preimage set. Then $F$ is a vectorial $s$-plateaued function.

(iii) Let $g: V_{n}^{(p)} \rightarrow \mathbb{Z}_{p^m}$ for which every $r \in \mathbb{Z}_{p^m}$ has exactly $k$ of sets $A_{i}$ in $\Gamma$ in its preimage set. Then $g$ is a generalized $s$-plateaued function.
\end{theorem}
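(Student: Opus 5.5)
The plan is to prove (i)$\Rightarrow$(ii), (ii)$\Rightarrow$(i), (i)$\Rightarrow$(iii) and (iii)$\Rightarrow$(i). The converse directions go through composition with simple maps, and the forward directions reduce each object back to $p$-ary functions built from $\Gamma$.

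\emph{(i)$\Rightarrow$(ii).} Let $F$ be as in (ii) and fix $c \in {V_{m}^{(p)}}^{*}$. The map $r \mapsto \langle c, r\rangle_{m}$ is a surjective linear form $V_{m}^{(p)} \rightarrow \mathbb{F}_{p}$, so each of its fibres has size $p^{m-1}$. Each $r$ has exactly $k$ blocks of $\Gamma$ in its preimage, so each $j \in \mathbb{F}_{p}$ has exactly $kp^{m-1}=K/p$ blocks in the preimage of $F_{c}$. By (i), $F_{c}$ is $s$-plateaued.

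\emph{(ii)$\Rightarrow$(i).} Let $f$ be as in Definition \ref{Definition 1}. For each $j$, the $K/p=kp^{m-1}$ blocks in $f^{-1}(j)$ can be split into $p^{m-1}$ groups of $k$ blocks. Identify $V_{m}^{(p)}=\mathbb{F}_{p}\times V_{m-1}^{(p)}$ and define $F$ by sending the $t$-th group of $f^{-1}(j)$ to $(j,t)$. Then $F$ satisfies the hypothesis of (ii), and $f=F_{c}$ for $c=(1,0)$. So $f$ is $s$-plateaued.

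\emph{(iii)$\Rightarrow$(i).} Given $f$, apply the same grouping and map the $t$-th group of $f^{-1}(j)$ to $j+pt \in \mathbb{Z}_{p^m}$, with $0\le t<p^{m-1}$. The resulting $g$ satisfies $g \equiv f \pmod p$, but it is not true that $p^{m-1}g=f$. The fix is to choose the labels instead as $g=p^{m-1}f+h$, where $h$ takes values in $\{0,\dots,p^{m-1}-1\}$ and enumerates the groups. This is a bijection from (value of $f$, group index) onto $\mathbb{Z}_{p^m}$, so $g$ satisfies the hypothesis of (iii). We still need $f$ itself to be plateaued. The cleanest route is to note that $p^{m-1}g$, read modulo $p^m$, equals $p^{m-1}h \bmod p^m$. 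Since $h$ is arbitrary, set $h=f$ and $g=f+pu$, where $u$ enumerates the groups. Then $p^{m-1}g \equiv p^{m-1}f \pmod{p^m}$, and $\zeta_{p^m}^{p^{m-1}g}=\zeta_{p}^{f}$. So it suffices to show that if $g$ is generalized $s$-plateaued then so is $p^{m-1}g$, i.e.\ that (iii) is stable under the multiplier $p^{m-1}$. That is not automatic, so I will instead handle the generalized case through the standard decomposition.

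\emph{Generalized case via decomposition.} Write $W_{g}(a)=\sum_{j\in\mathbb{Z}_{p^m}}\zeta_{p^m}^{j}\,\widehat{1_{g^{-1}(j)}}(a)$, and recall the known characterization (Mesnager–Tang–Qi): $g$ is generalized $s$-plateaued if and only if every function $x\mapsto \mathrm{Tr}$-type combination $c\cdot g$, realised through the $p$-ary components of the $p$-adic expansion of $g$ (or of $g+p^{m-1}\ell(\cdot)$), is $s$-plateaued with suitable Walsh support conditions. The proof will go as follows. Each $p$-ary function arising from the $p$-adic digits of $g$, together with their $\mathbb{F}_p$-combinations, is constant on every block of $\Gamma$. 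Moreover, $g$ being a balanced labelling with $k$ blocks per value makes each such combination balanced over blocks, i.e.\ $K/p$ blocks per value. So (i) gives plateauedness of all of them, and then $W_g$ is a Gauss-type combination of these Walsh transforms. I expect this step to be the main obstacle: establishing $|W_{g}(a)|\in\{0,p^{(n+s)/2}\}$ from plateauedness of the component functions, which requires the disjointness of their Walsh supports. I would first try the identity $W_{g}(a)=p^{-(m-1)}\sum_{c}\sum_{j}\zeta_{p^m}^{j}\ldots$ and exploit the fact that the vectorial function $F$ built from the digits is vectorial $s$-plateaued by (ii). Then I would use the relation between vectorial plateaued functions and generalized plateaued functions from the cited literature to conclude.
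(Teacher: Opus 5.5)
Your arguments for (i)$\Rightarrow$(ii) and (ii)$\Rightarrow$(i) are correct and match the paper's. The two implications involving (iii) are not established, and in both the missing ingredient is the same: the digit characterization of \cite[Corollary 16]{MTQ2017Ge}, which the paper uses.

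Write $g=g^{(0)}p^{m-1}+g^{(1)}p^{m-2}+\cdots+g^{(m-1)}$ with $g^{(t)}:V_{n}^{(p)}\rightarrow\mathbb{F}_{p}$. Then $g$ is generalized $s$-plateaued if and only if $g^{(0)}+H(g^{(1)},\dots,g^{(m-1)})$ is $s$-plateaued for \emph{every} $H:\mathbb{F}_{p}^{m-1}\rightarrow\mathbb{F}_{p}$, nonlinear ones included.

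\textbf{(i)$\Rightarrow$(iii).} With this characterization the direction is pure counting. Fix the lower digits $(j_{1},\dots,j_{m-1})$; then the condition $h(x)=j$ forces a unique top digit. So $D_{h,j}$ is a union of $p^{m-1}$ fibres $D_{g,r}$, i.e.\ of exactly $K/p$ blocks, and (i) applies. Your plan instead tries to control $|W_{g}(a)|$ from plateauedness of the $\mathbb{F}_{p}$-linear combinations of the digits (``$F$ is vectorial $s$-plateaued by (ii)''). That does not suffice. Component functions that are individually $s$-plateaued can have incompatible Walsh supports and phases, and then $|W_{g}(a)|$ can fall outside $\{0,p^{\frac{n+s}{2}}\}$. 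The disjointness of supports you anticipate needing is not the relevant condition. You leave this step as ``the main obstacle'', so this direction is unproved.

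\textbf{(iii)$\Rightarrow$(i).} Your first construction was the right one. Take $g=p^{m-1}f+h$, with $h\in\{0,\dots,p^{m-1}-1\}$ indexing the $p^{m-1}$ groups of $k$ blocks inside each $D_{f,j}$; this $g$ satisfies the hypothesis of (iii). Since $f=g^{(0)}$ is the top digit, the ``only if'' part of the characterization with $H\equiv 0$ immediately gives that $f$ is $s$-plateaued.

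Instead you switched to $g=f+pu$, placing $f$ in the lowest digit, and reduced to showing that $p^{m-1}g$ is plateaued. As you note, this is not automatic. In fact it fails for a general generalized plateaued $g$. For example, the $\mathbb{Z}_{4}$-valued $g=2x_{1}x_{2}+x_{1}$ on $\mathbb{F}_{2}^{2}$ has all generalized Walsh values of modulus $2$, yet its lowest digit $x_{1}$ is not bent. You then leave this direction without completing it.

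Keeping $f$ as the most significant digit and invoking the characterization with arbitrary $H$ closes both gaps.
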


\begin{proof}
If $m=1$, then the result follows from the definition of $s$-plateaued partitions. In the following, we consider the case of $m \geq 2$.

$(i) \Rightarrow (ii)$: Let $F: V_{n}^{(p)} \rightarrow V_{m}^{(p)}$ for which every $r \in V_{m}^{(p)}$ has exactly $k$ of sets $A_{i}$ in $\Gamma$ in its preimage set. Denote $D_{F, r}=\bigcup_{i=1}^{k}A_{b_{i}^{(r)}}, r \in V_{m}^{(p)}$. For any $c \in {V_{m}^{(p)}}^{*}$, let $l^{(c)}(x)=\langle c, x\rangle_{m}, x \in V_{m}^{(p)}$. Denote $D_{l^{(c)}, j}=\{x_{1}^{(c, j)}, \dots, x_{p^{m-1}}^{(c, j)}\}, j \in \mathbb{F}_{p}$. Then for any $c \in {V_{m}^{(p)}}^{*}$ and $j \in \mathbb{F}_{p}$, 
$D_{F_{c}, j}=\bigcup_{t=1}^{p^{m-1}}\bigcup_{i=1}^{k}A_{b_{i}^{(x_{t}^{(c, j)})}}$. By the definition of $s$-plateaued partitions, $F_{c}$ is a $p$-ary $s$-plateaued function, and then $F$ is a vectorial $s$-plateaued function.

$(ii) \Rightarrow (i)$: Let $f: V_{n}^{(p)} \rightarrow \mathbb{F}_{p}$ for which every $j \in \mathbb{F}_{p}$ has exactly $\frac{K}{p}$ of sets $A_{i}$ in $\Gamma$ in its preimage set. Denote $D_{f, j}=\bigcup_{i=1}^{\frac{K}{p}}A_{d_{i}^{(j)}}, j \in \mathbb{F}_{p}$. Let $c \in {V_{m}^{(p)}}^{*}$ be fixed. Let $F: V_{n}^{(p)} \rightarrow V_{m}^{(p)}$ be given by $D_{F, x_{t}^{(c, j)}}=\bigcup_{i=1+(t-1)k}^{tk}A_{d_{i}^{(j)}}, 1 \leq t \leq p^{m-1}, j \in \mathbb{F}_{p}$, where $\langle c, x_{t}^{(c, j)}\rangle_{m}=j, 1 \leq t \leq p^{m-1}, j \in \mathbb{F}_{p}$. Then every $r \in V_{m}^{(p)}$ has exactly $k$ of sets $A_{i}$ in $\Gamma$ in the preimage set of $F$, and $F$ is a vectorial $s$-plateaued function. Notice that $f(x)=F_{c}(x)$. Thus, $f$ is a $p$-ary $s$-plateaued function, and $\Gamma$ is an $s$-plateaued partition.

$(i) \Rightarrow (iii)$: Let $g: V_{n}^{(p)} \rightarrow \mathbb{Z}_{p^m}$ for which every $r \in \mathbb{Z}_{p^m}$ has exactly $k$ of sets $A_{i}$ in $\Gamma$ in its preimage set. Denote $D_{g, r}=\bigcup_{i=1}^{k}A_{b_{i}^{(r)}}, r \in \mathbb{Z}_{p^m}$. By \cite[Corollary 16]{MTQ2017Ge}, $g$ is a generalized $s$-plateaued function if and only if for any $H: \mathbb{F}_{p}^{m-1} \rightarrow \mathbb{F}_{p}$, $h(x)=g^{(0)}(x)+H(g^{(1)}(x), \dots, g^{(m-1)}(x))$ is $s$-plateaued, where $g(x)=g^{(0)}(x)p^{m-1}+g^{(1)}(x)p^{m-2}+\dots+g^{(m-2)}(x)p+g^{(m-1)}(x)$ and $g^{(t)}: V_{n}^{(p)} \rightarrow \mathbb{F}_{p}, t=0, 1, \dots, m-1$. For $j \in \mathbb{F}_{p}$, $D_{h, j}=\bigcup_{j_{1}, \dots, j_{m-1} \in \mathbb{F}_{p}}\{x \in V_{n}^{(p)}: g^{(0)}(x)=j-H(j_{1}, \dots, j_{m-1}), g^{(1)}(x)=j_{1}, \dots, g^{(m-1)}(x)=j_{m-1}\}=\bigcup_{r \in R_{j}}D_{g, r}=\bigcup_{r \in R_{j}}\bigcup_{i=1}^{k}A_{b_{i}^{(r)}}$, where $R_{j}=\{(j-H(j_{1}, \dots, j_{m-1}))p^{m-1}+j_{1}p^{m-2}+\dots+j_{m-1}: j_{1}, \dots, j_{m-1} \in \mathbb{F}_{p}\}$. Note that $|R_{j}|=p^{m-1}$. Hence, $h$ is a $p$-ary $s$-plateaued function generated by $\Gamma$ and $g$ is a generalized $s$-plateaued function.

$(iii) \Rightarrow (i)$: Let $f: V_{n}^{(p)} \rightarrow \mathbb{F}_{p}$ for which every $j \in \mathbb{F}_{p}$ has exactly $\frac{K}{p}$ of sets $A_{i}$ in $\Gamma$ in its preimage set. Denote $D_{f, j}=\bigcup_{i=1}^{\frac{K}{p}}A_{d_{i}^{(j)}}, j \in \mathbb{F}_{p}$. Let $c=(1, c_{1}, \dots, c_{m-1}) \in {\mathbb{F}_{p}^{m}}^{*}$ be fixed. Let $F=(f^{(0)}, \dots, f^{(m-1)}): V_{n}^{(p)} \rightarrow \mathbb{F}_{p}^{m}$ be given by $D_{F, x_{t}^{(c, j)}}=\bigcup_{i=1+(t-1)k}^{tk}A_{d_{i}^{(j)}}, 1 \leq t \leq p^{m-1}, j \in \mathbb{F}_{p}$, where $c \cdot x_{t}^{(c, j)}=j, 1 \leq t \leq p^{m-1}, j \in \mathbb{F}_{p}$. Then $f(x)=F_{c}(x)=f^{(0)}(x)+c_{1}f^{(1)}(x)+\dots+c_{m-1}f^{(m-1)}(x)$. Let $h: V_{n}^{(p)} \rightarrow \mathbb{Z}_{p^m}$ be defined by $h(x)=f^{(0)}(x)p^{m-1}+\dots+f^{(m-2)}(x)p+f^{(m-1)}(x)$. For any $r=r_{0}p^{m-1}+\dots+r_{m-2}p+r_{m-1} \in \mathbb{Z}_{p^m}$, where $r_{0}, \dots, r_{m-2}, r_{m-1} \in \mathbb{F}_{p}$, $D_{h, r}=D_{F, (r_{0}, \dots, r_{m-1})}, r \in \mathbb{Z}_{p^m}$. Then every $r \in \mathbb{Z}_{p^m}$ has exactly $k$ of sets $A_{i}$ in $\Gamma$ in the preimage set of $h$, and $h$ is a generalized $s$-plateaued function. Therefore, $f$ is a $p$-ary $s$-plateaued function by \cite[Corollary 16]{MTQ2017Ge}, and $\Gamma$ is an $s$-plateaued partition.
\end{proof}

\begin{remark}
When $s=0$, \cite[Theorem 5, Remark 5]{AKM2022Be} shows $(i) \Rightarrow (ii)$, \cite[Proposition 1, Remark 5]{AKM2022Be} shows $(i) \Rightarrow (iii)$, \cite[Theorem 1]{WWF2026Fu} and \cite[Remark 5]{AKM2022Be}
show $(i) \Leftrightarrow (ii)$. 
\end{remark}

The following proposition analyzes the possible cardinality of $A_{i}$ of an $s$-plateaued partition  $\Gamma=\{A_{i}, 1 \leq i \leq K\}$, which is much more complicated than bent partitions.

\begin{proposition} \label{Proposition 1}
Let $\Gamma=\{A_{i}, 1 \leq i \leq K\}$ be an $s$-plateaued partition of $V_{n}^{(p)}$, where $K \geq 4$. Without loss of generality, suppose $|A_{1}| \leq |A_{2}| \leq \cdots \leq |A_{K}|$.

(i) If $\Gamma$ generates both balanced and unbalanced $p$-ary $s$-plateaued functions simultaneously, then $p=2$, or $p=3$ and $n+s$ is odd. Furthermore, when $p=2$, the cardinality of $A_{i}$ satisfies one of the following:
\begin{itemize} 
\item $|A_{1}|=|A_{2}|=\frac{2^{n}+2^{\frac{n+s}{2}}}{K}-2^{\frac{n+s}{2}-1}, |A_{i}|=\frac{2^{n}+2^{\frac{n+s}{2}}}{K}, 3 \leq i \leq K$.
\item $|A_{K-1}|=|A_{K}|=\frac{2^{n}-2^{\frac{n+s}{2}}}{K}+2^{\frac{n+s}{2}-1}, |A_{i}|=\frac{2^{n}-2^{\frac{n+s}{2}}}{K}, 1 \leq i \leq K-2$.
\item $K$ is a power of $2$ (denote $K=2^m$), $|A_{1}|=2^{n-m}-2^{\frac{n+s}{2}-1}, |A_{K}|=2^{n-m}+2^{\frac{n+s}{2}-1}, |A_{i}|=2^{n-m}, 2 \leq i \leq K-1$.
\end{itemize}
When $p=3$, the cardinality of $A_{i}$ satisfies one of the following:
\begin{itemize}
\item $K=6$, $|A_{1}|=|A_{2}|=|A_{3}|=\frac{3^{n-1}-3^{\frac{n+s-1}{2}}}{2}, |A_{4}|=|A_{5}|=|A_{6}|=\frac{3^{n-1}+3^{\frac{n+s-1}{2}}}{2}$.
 \item $K$ is a power of $3$ (denote $K=3^m$), $|A_{1}|=3^{n-m}-3^{\frac{n+s-1}{2}},|A_{K}|=3^{n-m}+3^{\frac{n+s-1}{2}}, |A_{i}|=3^{n-m}, 2 \leq i \leq K-1$.
\end{itemize}
(ii) If the generated $p$-ary $s$-plateaued functions by $\Gamma$ are all balanced (called balanced plateaued partition), then $K$ is a power of $p$ (denote $K=p^m$) and $|A_{i}|=p^{n-m}, 1 \leq i \leq K$.

(iii) If the generated $p$-ary $s$-plateaued functions by $\Gamma$ are all unbalanced (called unbalanced plateaued partition), then $n+s$ is even. Furthermore, the cardinality of $A_{i}$ satisfies one of the following:
\begin{itemize}
	\item $|A_{1}|=\frac{p^{n}+p^{\frac{n+s}{2}}}{K}-p^{\frac{n+s}{2}}, |A_{i}|=\frac{p^{n}+p^{\frac{n+s}{2}}}{K}, 2 \leq i \leq K$.
	\item $|A_{K}|=\frac{p^{n}-p^{\frac{n+s}{2}}}{K}+p^{\frac{n+s}{2}}, |A_{i}|=\frac{p^{n}-p^{\frac{n+s}{2}}}{K}, 1 \leq i \leq K-1$.
\end{itemize}
\end{proposition}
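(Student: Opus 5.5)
The plan is to turn the $s$-plateaued condition into purely arithmetic constraints on the numbers $a_i=|A_i|$, using Lemma \ref{pre1}, and then to exploit how freely generated functions can be chosen. Every assignment of the $K$ blocks to the $p$ values, with $K/p$ blocks per value, yields a generated function. So we may take a generated $f$ and apply two moves: swap one block from $D_{f,j}$ with one block from $D_{f,j'}$, or relabel the values by an arbitrary permutation of $\mathbb{F}_p$. The result is again $s$-plateaued, so its value distribution must again be one of the three patterns in Lemma \ref{pre1}. Write $t=\frac{n+s}{2}$. The allowed multisets of class sizes $|D_{f,j}|$ are:
\begin{itemize}
\item the balanced pattern, all sizes equal to $p^{n-1}$;
\item the pattern for $n+s$ even, with $p-1$ sizes equal to $p^{n-1}-\epsilon p^{t-1}$ and one size equal to $p^{n-1}-\epsilon p^{t-1}+\epsilon p^{t}$, where $\epsilon=\pm1$;
\item the pattern for $n+s$ odd (only for odd $p$), with one size $p^{n-1}$ and the others $p^{n-1}\pm p^{(n+s-1)/2}$, the signs following $\eta_1$, so $\frac{p-1}{2}$ of each.
\end{itemize}
The case $p=2$ is read off the first bullet of the lemma.

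The basic step is as follows. If two blocks $A_u\subseteq D_{f,j}$ and $A_v\subseteq D_{f,j'}$ are swapped, the two class sizes change by $\mp d$ with $d=a_u-a_v$, and all other class sizes stay fixed. Comparing the old and new patterns forces $d$ to lie in a very small set: $d=0$, or $d=\pm p^{t}$ (moving the exceptional class in the even pattern), or $d=\pm 2p^{t-1}$ when $p=2$, or $d=\pm p^{(n+s-1)/2}$ or $\pm 2p^{(n+s-1)/2}$ in the odd pattern. The hypothesis $K\ge 4$ guarantees enough blocks to realise any pair $(u,v)$ in any pair of classes.

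For (ii), every class has size $p^{n-1}$ for every assignment. Swapping any two blocks therefore gives $a_u=a_v$, so all $a_i=p^n/K$. Since this is an integer and $K$ divides $p^n$, we get $K=p^m$ and $a_i=p^{n-m}$.

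For (iii), first rule out $n+s$ odd. In that pattern, relabelling the values alone would move the class of size $p^{n-1}$ to the value $f^*(0)$ of any other function. Using the sign structure of $\eta_1$, I would then exhibit a swap producing a class-size multiset that fits neither pattern; the unbalanced alternative has no class of size $p^{n-1}$, which will be the key point. Hence $n+s$ is even. Now the swap constraint $d\in\{0,\pm p^t\}$ (together with $\pm 2^{t}$ when $p=2$, which I would exclude using the remaining non-exceptional classes) shows that the $a_i$ take at most two values, differing by $p^t$. Counting which classes may be exceptional, I would show that only one block can carry the odd value. Solving $\sum a_i=p^n$ with $K-1$ equal blocks then yields exactly the two listed possibilities, according to $\epsilon=\pm1$.

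For (i), a balanced and an unbalanced generated function coexist. Since swaps connect any two assignments, there is a single swap turning a balanced function into an unbalanced one. After that swap, exactly two class sizes have left $p^{n-1}$, by $+d$ and $-d$. In the even pattern, $p-1$ classes share a value different from $p^{n-1}$, so at most two classes can differ from $p^{n-1}$ only if $p=2$. In the odd pattern, $p-1=2$ forces $p=3$ and $d=\pm 3^{(n+s-1)/2}$. This gives the restriction on $p$. The cardinalities then follow by the same bookkeeping as in (ii) and (iii). When $p=2$, the blocks equal $2^n/K$ up to deviations $\pm 2^{t-1}$, concentrated either in two blocks of the same sign or in one $+$ and one $-$ block, which forces $2^n/K$ to be an integer and hence $K=2^m$. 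When $p=3$, the deviations are $\pm 3^{(n+s-1)/2}$, either one $+$ and one $-$ block (forcing $K=3^m$) or three blocks of each sign. In the latter case, the requirement that every assignment is plateaued forces $K/3=2$, i.e.\ $K=6$.

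I expect the main obstacle to be this exhaustive case analysis: checking that every configuration not listed produces some assignment whose class-size multiset fits none of the patterns. The hardest parts should be the mixed $p=3$ case and the exclusion of $n+s$ odd in (iii). I would organise these by first listing the possible values of $d$ and then testing two or three carefully chosen double swaps.
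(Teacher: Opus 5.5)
Your route to the restriction on $p$ in (i) is correct and differs from the paper's. The paper subtracts $W_{B(F)}(0)=0$ from $W_{B(P(F))}(0)$ and compares powers of $(1-\zeta_p)\mathcal{O}_L$. You instead note that one swap moves only two class sizes off $p^{n-1}$, while the even unbalanced pattern has all $p$ classes off $p^{n-1}$ and the odd one has $p-1$ of them. This is shorter and avoids cyclotomic fields. Parts (ii) and the two-value analysis in (iii) follow the paper.

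The genuine gap is the exclusion of $n+s$ odd in (iii) when $p=3$. Relabelling, on which your plan rests, gives nothing there, because every permutation of $\mathbb{F}_3$ is affine. Indeed, every assignment of the sizes $3^{n-1}, 3^{n-1}\pm 3^{\frac{n+s-1}{2}}$ to the three values is realised by some choice of $f^{*}(0)$ and $\epsilon_f(0)$. (For $p\ge 5$ relabelling does work, but the contradiction is positional, via $g^{*}(0)=f^{*}(0)$ and $\epsilon_g(0)=\epsilon_f(0)$ as in the paper, since relabelling never changes the multiset of class sizes.) Your announced key point is also false: in the odd pattern $D_{f,f^{*}(0)}$ has exactly $p^{n-1}$ elements. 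Finally, the hypothesis $K\ge 6$ must enter. For $K=3$ the claim fails: the preimage partition of an unbalanced ternary bent function with $n$ odd is a counterexample. Your plan never says where $K\ge 6$ is used.

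One way to close this gap uses the fact that for $p=3$ and $n+s$ odd all class sizes of a generated function are distinct. Hence swapping blocks $u,v$ lying in classes of sums $x\neq y$ forces $a_u-a_v\in\{0,x-y\}$. Fix $u,v$ with $a_u\neq a_v$ in different classes; then their class sums always differ by $a_u-a_v$. Swapping a block $w$ of the third class with a block $w'\neq u$ of $u$'s class leaves $v$'s class untouched and must preserve that difference, so $a_w=a_{w'}$. Repeating this with $v$ in place of $u$ shows that all other $K-2$ blocks have a common size $X$. Putting $u,v$ in one class (possible as $K/3\ge 2$) then leaves two classes of sum $\frac{K}{3}X$, a contradiction.

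Separately, the case analysis in (i) is only announced, and the bookkeeping you sketch is inaccurate. In the same-sign case for $p=2$ the common size is $\frac{2^n\pm 2^{\frac{n+s}{2}}}{K}$, not $\frac{2^n}{K}$. The $p=3$, $K=6$ configuration consists of two values differing by $3^{\frac{n+s-1}{2}}$. Three blocks each deviating by $\pm 3^{\frac{n+s-1}{2}}$ from a base would instead give a gap $2\cdot 3^{\frac{n+s-1}{2}}$, which is one of the configurations that must be excluded. You still have to carry out the split into three cases, together with its ternary analogue:
two values differing by $2^{\frac{n+s}{2}}$ (impossible);
two values differing by $2^{\frac{n+s}{2}-1}$ (minority frequency exactly $2$);
three values with extreme frequencies $1$.
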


\begin{proof}
$(i)$ We first show that when $p \geq 5$, or $p=3$ and $n+s$ is even, then the generated $p$-ary $s$-plateaued functions by $\Gamma$ are all balanced or all unbalanced. Define map $F: V_{n}^{(p)}\rightarrow \{1, 2, \dots, K\}$ as
\begin{equation*}
	F(x)=i, \text{ if } x \in A_{i}, 1\leq i \leq K.
\end{equation*}
By the definition of $s$-plateaued partitions, for any balanced map $B$ from $\{1, 2, \dots, K\}$ to $\mathbb{F}_{p}$ (that is, $|D_{B, j}|=\frac{K}{p}, j \in \mathbb{F}_{p}$), $B(F(x))$ is a $p$-ary $s$-plateaued function generated by $\Gamma$. Let $f, f'$ be any $p$-ary $s$-plateaued functions generated by $\Gamma$. Denote $D_{f, j}=\bigcup_{i=1}^{\frac{K}{p}}A_{b_{i}^{(j)}}$, and $D_{f', j}=\bigcup_{i=1}^{\frac{K}{p}}A_{{b^{'}_{i}}^{(j)}}$, where $j \in \mathbb{F}_{p}$. Then $f(x)=\widetilde{B}(F(x))$ and $f'(x)=\widetilde{B}'(F(x))$, where $\widetilde{B}$ and $\widetilde{B}'$ are the balanced maps from $\{1, 2, \dots, K\}$ to $\mathbb{F}_{p}$ defined by
\begin{equation*}
	\widetilde{B}(b_{i}^{(j)})=j, \widetilde{B}'({b^{'}_{i}}^{(j)})=j, 1\leq i \leq \frac{K}{p}, j \in \mathbb{F}_{p}.
\end{equation*}
Define the permutation $\widetilde{P}$ of $\{1, 2, \dots, K\}$ as
\begin{equation*}
	\widetilde{P}({b^{'}_{i}}^{(j)})=b_{i}^{(j)}, 1 \leq i \leq \frac{K}{p}, j \in \mathbb{F}_{p}.
\end{equation*}
Then $f'(x)=\widetilde{B}(\widetilde{P}(F(x)))$. We claim that if $\delta_{S_{B(P(F(x)))}}(0)=\delta_{S_{B(F(x))}}(0)$ for any balanced map $B$ from $\{1, 2, \dots, K\}$ to $\mathbb{F}_{p}$ and any transposition permutation $P$ of $\{1, 2, \dots, K\}$, then $\delta_{S_{f}}(0)=\delta_{S_{f'}}(0)$. Since any permutation can be decomposed into transposition permutations, then there are transposition permutations $P_{1}, P_{2}, \dots, P_{t}$ such that $\widetilde{P}=P_{1} P_{2} \cdots P_{t}$. If $t=1$, then the claim holds. We consider the case of $t\geq 2$. Define $B^{(i)}=\widetilde{B}(P_{1} \cdots P_{i}), 1\leq i \leq t-1$. Then $B^{(t-1)}, \dots, B^{(1)}$ are balanced maps from $\{1, 2, \dots, K\}$ to $\mathbb{F}_{p}$, and we obtain
\begin{equation*}
	\begin{split}
		\delta_{S_{f'}}(0)&=\delta_{S_{\widetilde{B}(P_{1} \cdots P_{t}(F(x)))}}(0)=\delta_{S_{B^{(t-1)}(P_{t}(F(x)))}}(0)=\delta_{S_{B^{(t-1)}(F(x))}}(0)\\
		&=\delta_{S_{B^{(t-2)}(P_{t-1}(F(x)))}}(0)=\delta_{S_{B^{(t-2)}(F(x))}}(0)=\cdots \\
		&=\delta_{S_{B^{(1)}(F(x))}}(0)=\delta_{S_{\widetilde{B}(P_{1}(F(x)))}}(0)=\delta_{S_{\widetilde{B}(F(x))}}(0)=\delta_{S_{f}}(0).
	\end{split}
\end{equation*}
Therefore, the claim holds. Let $B$ be any fixed balanced map from $\{1, 2, \dots, K\}$ to $\mathbb{F}_{p}$, and $P$ be any fixed transposition permutation of $\{1, 2, \dots, K\}$. Denote $P=(i_{0} i_{1})$, where $1 \leq i_{0}<i_{1} \leq K$. In the following, we only need to prove that $\delta_{S_{B(P(F(x)))}}(0)=\delta_{S_{B(F(x))}}(0)$ if $p \geq 5$, or $p=3$ and $n+s$ is even. We use the proof by contradiction. W.l.o.g, assume that $0 \in S_{B(P(F(x)))}, 0 \notin S_{B(F(x))}$. We have
\begin{equation}\label{p1 1}
	\begin{split}
		\epsilon_{B(P(F))}(0)p^{\frac{n+s}{2}}\zeta_{p}^{(B(P(F)))^{*}(0)}&=W_{B(P(F))}(0)\\
		&=\sum_{x \in V_{n}^{(p)}}\zeta_{p}^{B(P(F(x)))}=\sum_{i=1}^{K}\zeta_{p}^{B(P(i))}|A_{i}|\\
		&=\sum_{i \in \{1, 2, \dots, K\} \backslash \{i_{0}, i_{1}\}} \zeta_{p}^{B(i)}|A_{i}|+\zeta_{p}^{B(i_{1})}|A_{i_{0}}|+\zeta_{p}^{B(i_{0})}|A_{i_{1}}|,
	\end{split}
\end{equation}
and 
\begin{equation} \label{p1 2}
	\begin{split}
		0=W_{B(F)}(0)&=\sum_{x \in V_{n}^{(p)}}\zeta_{p}^{B(F(x))}=\sum_{i=1}^{K}\zeta_{p}^{B(i)}|A_{i}|\\
		&=\sum_{i \in \{1, 2, \dots, K\} \backslash \{i_{0}, i_{1}\}} \zeta_{p}^{B(i)}|A_{i}|+\zeta_{p}^{B(i_{0})}|A_{i_{0}}|+\zeta_{p}^{B(i_{1})}|A_{i_{1}}|.
	\end{split}
\end{equation}
Subtracting Eq. \eqref{p1 1} from Eq. \eqref{p1 2}, we get
\begin{equation}\label{p1 3}
\epsilon_{B(P(F))}(0)p^{\frac{n+s}{2}}\zeta_{p}^{(B(P(F)))^{*}(0)}=(\zeta_{p}^{B(i_{0})}-\zeta_{p}^{B(i_{1})})(|A_{i_{1}}|-|A_{i_{0}}|).
\end{equation}
Case I: $B(i_{0})=B(i_{1})$ or $|A_{i_{0}}|=|A_{i_{1}}|$. This case contradicts Eq. \eqref{p1 3}.\\
Case II: $B(i_{0}) \neq B(i_{1})$ and $|A_{i_{0}}| \neq |A_{i_{1}}|$. Denote $|A_{i_{1}}|-|A_{i_{0}}|=p^{t}r$, where $t \geq 0$ is an integer and $r$ is a positive integer with $p \nmid r$. Let $L=\mathbb{Q}(\zeta_{p})$ and  $\mathcal{O}_{L}=\mathbb{Z}[\zeta_{p}]$. Notice that $\epsilon_{B(P(F))}(0)p^{\frac{n+s}{2}} \in \mathcal{O}_{L}$ and $(\epsilon_{B(P(F))}(0)p^{\frac{n+s}{2}})^{2}=\pm p^{n+s}$. Since $p \mathcal{O}_{L}=((1-\zeta_{p})\mathcal{O}_{L})^{p-1}$, then $(\epsilon_{B(P(F))}(0)p^{\frac{n+s}{2}} \mathcal{O}_{L})^{2}=((1-\zeta_{p})\mathcal{O}_{L})^{(p-1)(n+s)}$. By the
uniqueness of the decomposition of $\epsilon_{B(P(F))}(0)p^{\frac{n+s}{2}} \mathcal{O}_{L}$, we have $\epsilon_{B(P(F))}(0)p^{\frac{n+s}{2}}\mathcal{O}_{L}=((1-\zeta_{p})\mathcal{O}_{L})^{\frac{(p-1)(n+s)}{2}}$. Then by Eq. \eqref{p1 3} and  $(\zeta_{p}^{B(i_{0})}-\zeta_{p}^{B(i_{1})})\mathcal{O}_{L}=(1-\zeta_{p}^{B(i_{1})-B(i_{0})})\mathcal{O}_{L}=(1-\zeta_{p})\mathcal{O}_{L}$, we obtain
\begin{equation}\label{p1 4}
	((1-\zeta_{p})\mathcal{O}_{L})^{\frac{(p-1)(n+s)}{2}}=((1-\zeta_{p})\mathcal{O}_{L})^{1+(p-1)t}
	\cdot r\mathcal{O}_{L}.
\end{equation}
Since $p \nmid r$, then $(1-\zeta_{p}) \mathcal{O}_{L} \nmid r\mathcal{O}_{L}$. Then by Eq. \eqref{p1 4}, we get $r=1$ and then $\frac{(p-1)(n+s)}{2}=1+(p-1)t$, which is impossible when $p \geq 5$, or $p=3$ and $n+s$ is even (since $\frac{p-1}{2}(n+s-2t)=1$ implies that $p=3$ and $n+s=2t+1$). Thus, when $p \geq 5$, or $p=3$ and $n+s$ is even, the generated $p$-ary $s$-plateaued functions by $\Gamma$ are all balanced or all unbalanced.

In the following, for the case of $p=2$, or $p=3, 2 \nmid (n+s)$, by using the value distribution of $p$-ary plateaued functions, we analyze the possible values of $|A_{i}|$ if $\Gamma$ generates both balanced and unbalanced $p$-ary $s$-plateaued functions simultaneously. Let $A_{i}, A_{j}$ be arbitrary fixed sets in $\Gamma$ with $|A_{i}| \neq |A_{j}|$ (such $A_{i}, A_{j}$ must exist otherwise the generated $p$-ary plateaued functions are all balanced). W.l.o.g, suppose $|A_{i}|>|A_{j}|$.

When $p=2$, let $f$ be a Boolean $s$-plateaued function generated by $\Gamma$ with $A_{i} \subseteq D_{f, 0}, A_{j} \subseteq D_{f, 1}$. Let $g$ be the Boolean $s$-plateaued function generated by $\Gamma$ with  $D_{g, 0}=A_{j} \cup D_{f, 0} \backslash A_{i}$. Then by the value distribution of Boolean $s$-plateaued functions, $|A_{i}|-|A_{j}|=|D_{f, 0}|-|D_{g, 0}|=2^{\frac{n+s}{2}-1}$ or $2^{\frac{n+s}{2}}$. By the arbitrariness of $A_{i}$ and $A_{j}$, only the following cases may occur:
\begin{itemize}
	\item[Case (1)] $|A_{i}|, 1 \leq i \leq K$, take two different values $X$ and $X+\varepsilon 2^{\frac{n+s}{2}}$, where the frequency of $X$ is greater than or equal to that of $X+\varepsilon 2^{\frac{n+s}{2}}$, and $\varepsilon \in \{\pm 1\}$ is a constant. 
	In fact, this case cannot occur. Note that the frequency of $X+\varepsilon 2^{\frac{n+s}{2}}$ is at least $2$ otherwise the generated Boolean plateaued functions are all unbalanced. Let $A_{i_{1}}, A_{i_{2}}, A_{i_{3}}, A_{i_{4}}$ be with $|A_{i_{1}}|=|A_{i_{2}}|=X, |A_{i_{3}}|=|A_{i_{4}}|=X+\varepsilon 2^{\frac{n+s}{2}}$. Let $f'$ be a Boolean $s$-plateaued function generated by $\Gamma$ with $A_{i_{1}} \cup A_{i_{2}} \subseteq D_{f', 0}$, $A_{i_{3}} \cup A_{i_{4}} \subseteq D_{f', 1}$ . Let $g'$ be the Boolean $s$-plateaued function generated by $\Gamma$ with $D_{g', 0}=A_{i_{3}} \cup A_{i_{4}} \cup D_{f', 0} \backslash (A_{i_{1}} \cup A_{i_{2}})$. Then $|D_{f', 0}|-|D_{g', 0}|=\pm 2^{\frac{n+s}{2}+1}$, which contradicts the value distribution of Boolean plateaued functions.
	\item[Case (2)] $|A_{i}|, 1 \leq i \leq K$, take two different values $X$ and $X+\varepsilon 2^{\frac{n+s}{2}-1}$, where the frequency of $X$ is greater than or equal to that of $X+\varepsilon 2^{\frac{n+s}{2}-1}$, and $\varepsilon \in \{\pm 1\}$ is a constant. The frequency of $X+\varepsilon 2^{\frac{n+s}{2}-1}$ is at least $2$ otherwise the generated Boolean plateaued functions are all unbalanced. We show the frequency of $X+\varepsilon 2^{\frac{n+s}{2}-1}$ is $2$. To the contrary, assume its frequency is at least $3$. Then $K \geq 6$ and $\frac{K}{2} \geq 3$. Let $A_{i_{t}}, 1
	\leq t \leq 6$, be with $|A_{i_{1}}|=|A_{i_{2}}|=|A_{i_{3}}|=X, |A_{i_{4}}|=|A_{i_{5}}|=|A_{i_{6}}|=X+\varepsilon 2^{\frac{n+s}{2}-1}$. Let $f''$ be a Boolean $s$-plateaued function generated by $\Gamma$ with $A_{i_{1}} \cup A_{i_{2}} \cup A_{i_{3}} \subseteq D_{f'', 0}, A_{i_{4}} \cup A_{i_{5}} \cup A_{i_{6}} \subseteq D_{f'', 1}$. 
	Let $g''$ be the Boolean $s$-plateaued function generated by $\Gamma$ with $D_{g'', 0}=A_{i_{4}} \cup A_{i_{5}} \cup A_{i_{6}} \cup D_{f'', 0} \backslash (A_{i_{1}} \cup A_{i_{2}} \cup A_{i_{3}})$. Then $|D_{f'', 0}|-|D_{g'', 0}|=\pm 3 \cdot 2^{\frac{n+s}{2}-1}$, which contradicts the value distribution of Boolean plateaued functions. Since the frequency of $X+\varepsilon 2^{\frac{n+s}{2}-1}$ is $2$, then by the value distribution of Boolean $s$-plateaued functions, we get $X=\frac{2^{n}-\varepsilon 2^{\frac{n+s}{2}}}{K}$, and the corresponding values $|A_{i}|, 1 \leq i \leq K$, can be obtained.
	\item[Case (3)] $|A_{i}|, 1 \leq i \leq K$, take three different values $X$, $X+2^{\frac{n+s}{2}-1}$ and $X+2^{\frac{n+s}{2}}$. We claim that the frequency of $X$ and $X+2^{\frac{n+s}{2}}$ are both $1$. To the contrary, assume that the frequency of $X$ or $X+2^{\frac{n+s}{2}}$ is at least $2$. We only consider the case of the frequency of $X$ is at least $2$ since the other case is similar. Let $A_{i_{1}}, A_{i_{2}}, A_{i_{3}}, A_{i_{4}}$ be with $|A_{i_{1}}|=|A_{i_{2}}|=X, |A_{i_{3}}|=X+2^{\frac{n+s}{2}-1},|A_{i_{4}}|=X+2^{\frac{n+s}{2}}$. Let $f'''$ be a Boolean $s$-plateaued function generated by $\Gamma$ with $A_{i_{1}} \cup A_{i_{2}} \subseteq D_{f''', 0}$, $A_{i_{3}} \cup A_{i_{4}} \subseteq D_{f''', 1}$. Let $g'''$ be the Boolean $s$-plateaued function generated by $\Gamma$ with $D_{g''', 0}=A_{i_{3}} \cup A_{i_{4}} \cup D_{f''', 0} \backslash (A_{i_{1}} \cup A_{i_{2}})$. Then $|D_{g''', 0}|-|D_{f''', 0}|=3 \cdot 2^{\frac{n+s}{2}-1}$, which contradicts the value distribution of Boolean plateaued functions. Thus, the claim holds. Since the frequency of $X$ and $X+2^{\frac{n+s}{2}}$ are both $1$, then by the value distribution of Boolean $s$-plateaued functions, we get $X=\frac{2^n}{K}-2^{\frac{n+s}{2}-1}$. As $X$ is an integer, $K=2^m$ for some positive integer. Then the corresponding values $|A_{i}|, 1 \leq i \leq K$, can be obtained.
\end{itemize}

When $p=3$ and $n+s$ is odd, let $f$ be a ternary $s$-plateaued function generated by $\Gamma$ with $A_{i} \subseteq D_{f, 0}, A_{j} \subseteq D_{f, 1}$. Let $g$ be a ternary $s$-plateaued function with $D_{g, 0}=A_{j} \cup D_{f, 0} \backslash A_{i}$. Then by the value distribution of ternary $s$-plateaued functions, $|A_{i}|-|A_{j}|=|D_{f, 0}|-|D_{g, 0}|=3^{\frac{n+s-1}{2}}$ or $2 \cdot 3^{\frac{n+s-1}{2}}$. By the arbitrariness of $A_{i}$ and $A_{j}$, only the following cases may occur:
\begin{itemize}
	\item[Case (1)] $|A_{i}|, 1 \leq i \leq K$, take two different values $X$ and $X+ \varepsilon 2 \cdot 3^{\frac{n+s-1}{2}}$, where the frequency of $X$ is greater than or equal to that of $X+\varepsilon 2 \cdot 3^{\frac{n+s-1}{2}}$, and $\varepsilon \in \{\pm 1\}$ is a constant. 
	With the similar arguments as for $p=2$, it is easy to show that this case cannot occur.
	\item[Case (2)] $|A_{i}|, 1 \leq i \leq K$, take two different values $X$ and $X+ \varepsilon 3^{\frac{n+s-1}{2}}$, where the frequency of $X$ is greater than or equal to that of $X+\varepsilon 3^{\frac{n+s-1}{2}}$, and $\varepsilon \in \{\pm 1\}$ is a constant. The frequency of $X+\varepsilon 3^{\frac{n+s-1}{2}}$ is at least $2$ otherwise the generated ternary plateaued functions are all unbalanced. When $\frac{K}{3} \geq 3$, with the similar arguments as for $p=2$, it is easy to show that the possible frequency of $X+\varepsilon 3^{\frac{n+s-1}{2}}$ can only be $2$. When $\frac{K}{3} \geq 3$, or $K=6$ and the frequency of $X+\varepsilon 3^{\frac{n+s-1}{2}}$ is $2$, let $A_{i_{1}}, A_{i_{2}}$ be with $|A_{i_{1}}|=|A_{i_{2}}|=X+\varepsilon 3^{\frac{n+s-1}{2}}$. Then $|A_{i}|=X, i \neq i_{1}, i_{2}$. Let $f''$ be a ternary $s$-plateaued function generated by $\Gamma$ with $A_{i_{1}} \subseteq D_{f'', 0}$, $A_{i_{2}} \subseteq D_{f'', 1}$. Then $|D_{f'', 0}|=|D_{f'', 1}| \neq |D_{f'', 2}|$, which contradicts the value distribution of ternary plateaued functions when $n+s$ is odd. Thus, $K=6$ and the frequency of $X+\varepsilon 3^{\frac{n+s-1}{2}}$ is $3$. By the value distribution of ternary $s$-plateaued functions, we get $X=\frac{3^{n-1}-\varepsilon 3^{\frac{n+s-1}{2}}}{2}$, and the corresponding values $|A_{i}|, 1 \leq i \leq 6$, can be obtained.
	\item[Case (3)] $|A_{i}|, 1 \leq i \leq K$, take three different values $X$, $X+3^{\frac{n+s-1}{2}}$ and $X+2 \cdot 3^{\frac{n+s-1}{2}}$. With the similar arguments as for $p=2$, the frequency of $X$ and  $X+2 \cdot 3^{\frac{n+s-1}{2}}$ are both 1. Then by the value distribution of ternary $s$-plateaued functions, we get $X=\frac{3^n}{K}-3^{\frac{n+s-1}{2}}$. As X is an integer, $K=3^m$ for some positive integer. Then the corresponding values $|A_{i}|, 1 \leq i \leq K$, can be obtained.
\end{itemize}

$(ii)$ If the generated $p$-ary $s$-plateaued functions are all balanced, then the sum of the cardinality of any $\frac{K}{p}$ of sets $A_{i}$ are $p^{n-1}$. Hence, $|A_{i}|, 1 \leq i \leq K$, are the same and $|A_{i}|=\frac{p^n}{K}$. Since $|A_{i}|$ is an integer, then $K=p^m$ for some positive integer $m$.

$(iii)$ Obviously, $n+s$ is even if $p=2$. Recall that $K \neq 3$. In the following, we show that $n+s$ is also even if $p$ is odd. We first prove that when $p \geq 5$ and $n+s$ is odd, there is no unbalanced $s$-plateaued partition of $V_{n}^{(p)}$. Let $f$ be a $p$-ary $s$-plateaued function generated by $\Gamma$. Let $\alpha, \beta \in \mathbb{F}_{p} \backslash \{f^{*}(0)\}$ with $\eta_{1}(f^{*}(0)-\alpha) \neq \eta_{1}(f^{*}(0)-\beta)$. Define the $p$-ary $s$-plateaued function $g$ as $D_{g, \alpha}=D_{f, \beta}, D_{g, \beta}=D_{f, \alpha}, D_{g, i}=D_{f, i}, i \in \mathbb{F}_{p} \backslash \{\alpha, \beta\}$. Since $|D_{g, i}|=|D_{f, i}|, i \in \mathbb{F}_{p} \backslash \{\alpha, \beta\}$, then by the value distribution of $p$-ary plateaued functions, we have that $f^{*}(0)=g^{*}(0), \epsilon_{f}(0)=\epsilon_{g}(0)$. As $\eta_{1}(f^{*}(0)-\alpha) \neq \eta_{1}(f^{*}(0)-\beta)$, by the value distribution of $p$-ary plateaued functions, $|D_{f, \alpha}| \neq |D_{g, \beta}|$, which is a contradiction. Now we prove that when $p=3, 2 \nmid (n+s), K \geq 6$, there is no unbalanced $s$-plateaued partition of $V_{n}^{(3)}$ of depth $K$. Let $A_{i_{1}}, A_{i_{2}}, A_{i_{3}}$ be arbitrary sets in $\Gamma$. W.l.o.g., suppose $|A_{i_{1}}| \geq |A_{i_{2}}| \geq |A_{i_{3}}|$. Let $\{C_{1}, C_{2}, C_{3}\}$ be a partition of $V_{n}^{(3)} \backslash (A_{i_{1}} \cup A_{i_{2}} \cup A_{i_{3}})$ for which $C_{j} \ (j=1, 2, 3)$ is the union of $\frac{K}{3}-1$ of sets $A_{i}$ in $\Gamma$. W.l.o.g, suppose $|C_{1}| \geq |C_{2}| \geq |C_{3}|$. By $|A_{i_{1}}|+|C_{1}| \geq |A_{i_{2}}|+|C_{2}| \geq |A_{i_{3}}|+|C_{3}|$ and the value distribution of ternary $s$-plateaued functions, we have that $|A_{i_{1}}|+|C_{1}|=3^{n-1}+3^{\frac{n+s-1}{2}}$, $|A_{i_{2}}|+|C_{2}|=3^{n-1}$, $|A_{i_{3}}|+|C_{3}|=3^{n-1}-3^{\frac{n+s-1}{2}}$. Let $f$ be the ternary $s$-plateaued function generated by $\Gamma$ with $D_{f, 0}=A_{i_{1}} \cup C_{3}$, $D_{f, 1}=A_{i_{2}} \cup C_{2}$ and $D_{f, 2}=A_{i_{3}} \cup C_{1}$. Then $|D_{f, 1}|=3^{n-1}$ and thus $|A_{i_{1}}|+|C_{3}|=|D_{f, 0}| \neq 3^{n-1}$ by the value distribution of ternary plateaued functions. It is easy to see that $|A_{i_{1}}|+|C_{3}|=3^{n-1}-3^{\frac{n+s-1}{2}}$ implies $|A_{i_{1}}|=|A_{i_{2}}|=|A_{i_{3}}|$, and $|A_{i_{1}}|+|C_{3}|=3^{n-1}+3^{\frac{n+s-1}{2}}$ implies $|C_{1}|=|C_{2}|=|C_{3}|$ and then $|A_{i_{1}}|>|A_{i_{2}}|>|A_{i_{3}}|$. Further, by the arbitrariness of $A_{i_{1}}, A_{i_{2}}, A_{i_{3}}$, we have that $|A_{K}|>|A_{K-1}|>\cdots>|A_{1}|$, which is impossible by $K \geq 6$ and the value distribution of ternary $s$-plateaued functions. Therefore, $n+s$ is even.

In the following, we compute the possible values of $|A_{r}|, 1 \leq r \leq K$. Let $A_{i}, A_{j}$ be arbitrary fixed sets in $\Gamma$ with $|A_{i}|\neq |A_{j}|$. W.o.l.g., suppose $|A_{i}|>|A_{j}|$. Let $f$ be a $p$-ary $s$-plateaued function generated by $\Gamma$ with $A_{i} \subseteq D_{f, 0}$, $A_{j} \subseteq D_{f, 1}$. Let $g$ be the $p$-ary $s$-plateaued function generated by $\Gamma$ for which $D_{g, 0}=A_{j} \cup D_{f, 0} \backslash A_{i}$, $D_{g, 1}=A_{i} \cup D_{f, 1} \backslash A_{j}$, and when $p \geq 3$, for $k \neq i, j$, $D_{g, k}=D_{f, k}$. By the value distribution of unbalanced $p$-ary $s$-plateaued functions, we obtain $|A_{i}|-|A_{j}|=|D_{f, 0}|-|D_{g, 0}|=p^{\frac{n+s}{2}}$. By the arbitrariness of $A_{i}, A_{j}$, we have that the values of $|A_{r}|, 1 \leq r \leq K$, take two different values $X$ and $X+\varepsilon p^{\frac{n+s}{2}}$, where the frequency of $X$ is greater than or equal to that of $X+\varepsilon p^{\frac{n+s}{2}}$, and $\varepsilon \in \{\pm 1\}$ is a constant. Using the value distribution of unbalanced $p$-ary $s$-plateaued functions, we get that the frequency of $X+\varepsilon p^{\frac{n+s}{2}}$ is $1$, and $X=\frac{p^n-\varepsilon p^{\frac{n+s}{2}}}{K}$. Then the corresponding values $|A_{r}|, 1 \leq r \leq K$, can be obtained.
\end{proof}

\section{Constructions of $s$-plateaued partitions}
In this section, we will present some constructions of $s$-plateaued partitions for which any generated $p$-ary $s$-plateaued function has no nonzero linear structure.

The following proposition provides a general way to obtain $s$-plateaued partitions. 

\begin{proposition} \label{Proposition 2}
Suppose $F: V_{n}^{(p)} \rightarrow V_{m}^{(p)}$ is a vectorial $s$-plateaued function (where $p^m>3$) satisfying the following conditions:
\begin{itemize}
	\item The Walsh supports of all component functions $S_{F_{c}}, c \in {V_{m}^{(p)}}^{*}$, are all the same (denoted by $S$).
	\item Each component function $F_{c}$ ($c \in {V_{m}^{(p)}}^{*}$) satisfies $\epsilon_{F_{c}}(a)=\epsilon(a)$ for any $a \in S$, where $\epsilon(a)$ is independent of $c$. 
\end{itemize}
Then $\{D_{F, i}, i \in V_{m}^{(p)}\}$ is an $s$-plateaued partition if any one of the following holds.

(i) There exist functions $G: S \rightarrow V_{m}^{(p)}$ and $g: S \rightarrow \mathbb{F}_{p}$ such that 
\begin{equation}\label{Pr2 0}
	(F_{c})^{*}(x)=G_{c}(x)+g(x), x \in S, c \in  {V_{m}^{(p)}}^{*}.
\end{equation}

(ii) $(F_{d})^{*}(x)+(F_{e})^{*}(x)-(F_{d+e})^{*}(x), d \neq -e \in  {V_{m}^{(p)}}^{*}$, are the same function from $S$ to $\mathbb{F}_{p}$.
\end{proposition}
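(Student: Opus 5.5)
The plan is to show directly that every function $f=B\circ F$ is $s$-plateaued, where $B:V_{m}^{(p)}\rightarrow\mathbb{F}_{p}$ is balanced (each $j\in\mathbb{F}_{p}$ has exactly $p^{m-1}$ preimages). This is exactly what Definition~\ref{Definition 1} requires for the partition $\{D_{F,i}, i\in V_{m}^{(p)}\}$ of depth $K=p^m$: every generated $p$-ary function is of the form $B\circ F$ for such a $B$.

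\textbf{Expanding $W_{f}$ in the component Walsh transforms.} Expand $\zeta_{p}^{B(y)}$ by Fourier inversion on $V_{m}^{(p)}$:
\[
\zeta_{p}^{B(y)}=\sum_{c\in V_{m}^{(p)}}\widehat{B}(c)\zeta_{p}^{\langle c,y\rangle_{m}},
\qquad
\widehat{B}(c)=p^{-m}\sum_{y\in V_{m}^{(p)}}\zeta_{p}^{B(y)-\langle c,y\rangle_{m}}.
\]
Since $B$ is balanced, $\widehat{B}(0)=p^{-m}\,p^{m-1}\sum_{j\in\mathbb{F}_{p}}\zeta_{p}^{j}=0$. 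Substituting $y=F(x)$ and summing against $\zeta_{p}^{-\langle a,x\rangle_{n}}$ gives
\[
W_{f}(a)=\sum_{c\neq 0}\widehat{B}(c)\,W_{F_{c}}(a).
\]
If $a\notin S$, every $W_{F_{c}}(a)$ vanishes, so $W_{f}(a)=0$. If $a\in S$, the two hypotheses give
\[
W_{F_{c}}(a)=\epsilon(a)\,p^{\frac{n+s}{2}}\,\zeta_{p}^{(F_{c})^{*}(a)},
\]
with $\epsilon(a)$ independent of $c$, so the task reduces to evaluating $\sum_{c\neq0}\widehat{B}(c)\zeta_{p}^{(F_{c})^{*}(a)}$.

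\textbf{Case (i).} Substitute $(F_{c})^{*}(a)=\langle c,G(a)\rangle_{m}+g(a)$ and add back the zero term $c=0$. Fourier inversion then yields
\[
W_{f}(a)=\epsilon(a)\,p^{\frac{n+s}{2}}\,\zeta_{p}^{g(a)}\,\zeta_{p}^{B(G(a))}.
\]
This has absolute value $p^{\frac{n+s}{2}}$ for every $a\in S$, so $f$ is $s$-plateaued with $S_{f}=S$.

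\textbf{Case (ii), reduced to case (i).} Let $g$ be the common function $(F_{d})^{*}+(F_{e})^{*}-(F_{d+e})^{*}$, and set $h_{c}(x)=(F_{c})^{*}(x)-g(x)$ for $c\neq0$, with $h_{0}=0$. Then
\[
h_{d}+h_{e}=h_{d+e}\quad\text{whenever } d,e\neq0,\ d\neq -e.
\]
The main obstacle is to extend this to full additivity on $V_{m}^{(p)}$, because the relation is only given for $d\neq -e$.

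For $p$ odd, I would first prove $h_{kd}=kh_{d}$ for $1\leq k\leq p-1$ by induction. The step $kd=d+(k-1)d$ is legitimate since $(k-1)d\neq -d$ for these $k$. Taking $k=p-1$ gives $h_{-d}=-h_{d}$, which settles the pair $d=-e$. Combined with $h_{0}=0$, this gives $h_{d}+h_{e}=h_{d+e}$ for all $d,e$.

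For $p=2$, we have $-e=e$, so the only excluded pairs are $d=e$ (and pairs involving $0$). Both of these are trivially consistent with $h_{0}=0$, since $h_{d}+h_{d}=0=h_{2d}$ in characteristic $2$.

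The hypothesis $p^{m}>3$ is only used to make sure that admissible pairs $d\neq -e$ exist, so that (ii) is not vacuous. I will double-check the small cases at this point.

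Once additivity holds, $c\mapsto h_{c}(x)$ is $\mathbb{F}_{p}$-linear for each fixed $x\in S$. Because $\langle,\rangle_{m}$ is nondegenerate, there is a unique $G(x)\in V_{m}^{(p)}$ with $h_{c}(x)=\langle c,G(x)\rangle_{m}=G_{c}(x)$. This is exactly Eq.~\eqref{Pr2 0}, so case (i) applies and completes the proof.
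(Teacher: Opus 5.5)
Your proposal is correct and follows essentially the paper's route. The expansion $W_{f}(a)=\sum_{c\neq 0}\widehat{B}(c)W_{F_{c}}(a)$ is the same computation the paper carries out through $\chi_{-a}(D_{F,i})$ and \cite[Proposition 3]{WF2023Ne}, and your induction $h_{kd}=kh_{d}$ is the paper's identity $(F_{ac})^{*}=a(F_{c})^{*}-(a-1)g$, with your ``additivity plus nondegeneracy of $\langle,\rangle_{m}$'' step a slightly cleaner replacement for its induction over basis coordinates. Your argument never uses $p^{m}>3$: for $p^{m}=3$, condition (ii) holds automatically since $2(F_{1})^{*}-(F_{2})^{*}=2(F_{2})^{*}-(F_{1})^{*}$ modulo $3$, so that hypothesis only excludes trivial cases rather than ensuring non-vacuity.
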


\begin{proof}
We first show that if $(i)$ holds, then $\{D_{F, i}, i \in V_{m}^{(p)}\}$ is an $s$-plateaued partition. For any $a \in V_{n}^{(p)}$ and $i \in V_{m}^{(p)}$, by \cite[Proposition 3]{WF2023Ne},
	\begin{equation*}
		\chi_{-a}(D_{F, i})=p^{n-m}\delta_{0}(a)+p^{-m}\sum_{c \in {V_{m}^{(p)}}^{*}}W_{F_{c}}(a)\zeta_{p}^{-\langle c, i\rangle_{m}}.
	\end{equation*}
Let $B: V_{m}^{(p)} \rightarrow \mathbb{F}_{p}$ be any fixed balanced function.
If $a \notin S$, then $\chi_{-a}(D_{F, i})=p^{n-m}\delta_{0}(a)$. In this case, we have
\begin{equation*}	
W_{B(F)}(a)=\sum_{x \in V_{n}^{(p)}}\zeta_{p}^{B(F(x))-\langle a, x\rangle _{n}}=\sum_{i \in V_{m}^{(p)}}\zeta_{p}^{B(i)}\chi_{-a}(D_{F, i})=
p^{n-m}\delta_{0}(a)\sum_{i \in V_{m}^{(p)}}\zeta_{p}^{B(i)}=0.
\end{equation*}

If $a \in S$, then 
\begin{equation*}
	\begin{split}
		\chi_{-a}(D_{F, i})&=p^{n-m}\delta_{0}(a)+\epsilon(a) p^{\frac{n+s}{2}-m}\sum_{c \in {V_{m}^{(p)}}^{*}}\zeta_{p}^{(F_{c})^{*}(a)-\langle c, i\rangle_{m}}\\
		&=p^{n-m}\delta_{0}(a)+\epsilon(a) p^{\frac{n+s}{2}-m}\zeta_{p}^{g(a)}\sum_{c \in {V_{m}^{(p)}}^{*}}\zeta_{p}^{\langle c, G(a)-i\rangle_{m}}\\
		&=p^{n-m}\delta_{0}(a)+\epsilon(a) p^{\frac{n+s}{2}-m}\zeta_{p}^{g(a)}(p^m \delta_{i}(G(a))-1).
	\end{split}
\end{equation*}
In this case, we have
\begin{equation*}	
	\begin{split}
	W_{B(F)}(a)&=\sum_{i \in V_{m}^{(p)}}\zeta_{p}^{B(i)}\chi_{-a}(D_{F, i})\\
	&=(p^{n-m}\delta_{0}(a)-\epsilon(a) p^{\frac{n+s}{2}-m}\zeta_{p}^{g(a)})\sum_{i \in V_{m}^{(p)}}\zeta_{p}^{B(i)}+\epsilon(a) p^{\frac{n+s}{2}}\zeta_{p}^{B(G(a))+g(a)}\\
	&=\epsilon(a) p^{\frac{n+s}{2}}\zeta_{p}^{B(G(a))+g(a)}.
	\end{split}
\end{equation*}
Therefore, $B(F(x))$ is an $s$-plateaued function, and by the definition of $s$-plateaued partitions, $\{D_{F, i}, i \in V_{m}^{(p)}\}$ is an $s$-plateaued partition.

In the following, we show that $(i)$ and $(ii)$ are equivalent. Obviously, $(i)$ implies $(ii)$. Suppose that $(ii)$ holds. Then there is $g: S \rightarrow \mathbb{F}_{p}$ such that
\begin{equation} \label{Pr2 1}
	(F_{d})^{*}(x)-(F_{e})^{*}(x)-(F_{d-e})^{*}(x)=-g(x), x \in S, d \neq e \in {V_{m}^{(p)}}^{*}.
\end{equation} 
Let $\{\beta_{1}, \dots, \beta_{m}\}$ be a basis of $V_{m}^{(p)}$ over $\mathbb{F}_{p}$. For any $x \in S$, let $G(x) \in V_{m}^{(p)}$ be defined by the equations: $\langle \beta_{i}, G(x)\rangle_{m}=(F_{\beta_{i}})^{*}(x)-g(x), i=1, 2, \dots, m$. For any $c \in {V_{m}^{(p)}}^{*}$, denote $c=c_{i_{1}}\beta_{i_{1}}+\dots+c_{i_{t}}\beta_{i_{t}}$, where $c_{i_{j}} \in \mathbb{F}_{p}^{*}, 1\leq j \leq t, 1\leq i_{1} < \dots <i_{t}\leq m, 1\leq t\leq m$. Then
\begin{equation} \label{Pr2 2}
	G_{c}(x)=\sum_{j=1}^{t}c_{i_{j}}\langle \beta_{i_{j}}, G(x)\rangle_{m}=\sum_{j=1}^{t}c_{i_{j}}(F_{\beta_{i_{j}}})^{*}(x)-g(x)\sum_{j=1}^{t}c_{i_{j}}, x \in S.
\end{equation}
	For any $a \in \mathbb{F}_{p}^{*}$ and $c \in {V_{m}^{(p)}}^{*}$, we first show
	\begin{equation} \label{Pr2 3}
		(F_{ac})^{*}(x)=a(F_{c})^{*}(x)-(a-1)g(x), x \in S.
	\end{equation}
	Obviously, when $a=1$, Eq. \eqref{Pr2 3} holds. Suppose Eq. \eqref{Pr2 3} holds for $a=k$, where $1\leq k \leq p-2$. When $a=k+1$, by Eq. \eqref{Pr2 1}, 
	\begin{equation*}
		\begin{split}
			(F_{(k+1)c})^{*}(x)&=(F_{kc})^{*}(x)+(F_{c})^{*}(x)-g(x)=k(F_{c})^{*}(x)-(k-1)g(x)+(F_{c})^{*}(x)-g(x)\\
			&=(k+1)(F_{c})^{*}(x)-kg(x).
		\end{split}
	\end{equation*}
Then Eq. \eqref{Pr2 3} also holds for $a=k+1$. By induction, Eq. \eqref{Pr2 3} holds. When $t=1$, by Eqs. \eqref{Pr2 2} and \eqref{Pr2 3}, Eq. \eqref{Pr2 0} holds. Suppose Eq. \eqref{Pr2 0} holds for $t=r$. When $t=r+1$, denote $c'=\sum_{j=1}^{r}c_{i_{j}}\beta_{i_{j}}$, by Eq. \eqref{Pr2 1}, \begin{equation*}
	\begin{split}
		(F_{c})^{*}(x)&=(F_{c'+c_{i_{r+1}}\beta_{i_{r+1}}})^{*}(x)=(F_{c'})^{*}(x)+(F_{c_{i_{r+1}}\beta_{i_{r+1}}})^{*}(x)-g(x)\\
		&=G_{c'}(x)+g(x)+G_{c_{i_{r+1}}\beta_{i_{r+1}}}(x)+g(x)-g(x)=G_{c}(x)+g(x).
	\end{split}
\end{equation*}
Then Eq. \eqref{Pr2 0} also holds for $t=r+1$. By induction, Eq. \eqref{Pr2 0} holds. This completes the proof.
\end{proof}

\begin{remark} \label{Remark 1}
Keep the same notations as in Proposition \ref{Proposition 2}. If $\{D_{F, i}, i \in V_{m}^{(p)}\}$ is an $s$-plateaued partition obtained by Proposition \ref{Proposition 2}, then by the proof of Proposition \ref{Proposition 2}, for any generated $s$-plateaued function $f(x)=B(F(x))$ (where $B$ is any balanced function from $V_{m}^{(p)}$ to $\mathbb{F}_{p}$), the Walsh support $S_{f}=S$ and $\epsilon_{f}(x)=\epsilon(x)$, $f^{*}(x)=B(G(x))+g(x), x \in S$.
\end{remark}

In the following, we will give some explicit constructions of vectorial $s$-plateaued functions satisfying the conditions in Proposition \ref{Proposition 2}. With these constructions, $s$-plateaued partitions can be obtained. First, we give the following lemma.

\begin{lemma}\label{new1}
Let $r, t, m$ be positive integers with $p^r>3, t<m$. Define $F: \mathbb{F}_{p^m} \times \mathbb{F}_{p^t} \rightarrow \mathbb{F}_{p^r}$ as 
$F(x, y)=\phi(x, y)+H(y)$, where $H: \mathbb{F}_{p^t} \rightarrow \mathbb{F}_{p^r}$, and $\phi: \mathbb{F}_{p^m} \times \mathbb{F}_{p^t} \rightarrow \mathbb{F}_{p^r}$ satisfies that for any given $y \in \mathbb{F}_{p^t}$, the map $x\mapsto \phi(x, y)$ is linear. For any $c \in \mathbb{F}_{p^r}^{*}$, let $\pi^{[c]}: \mathbb{F}_{p^t} \rightarrow \mathbb{F}_{p^m}$ be given by 
\begin{equation*}
	\mathrm{Tr}_{1}^{r}(c\phi(x, y))=\mathrm{Tr}_{1}^{m}(\pi^{[c]}(y)x), x \in \mathbb{F}_{p^m}, y \in \mathbb{F}_{p^t}.
\end{equation*}
Assume that the following conditions are satisfied:
\begin{itemize}
	\item For any $c \in \mathbb{F}_{p^r}^{*}$, $\pi^{[c]}$ is an injection, $\pi^{[c]}(\mathbb{F}_{p^t}), c \in \mathbb{F}_{p^r}^{*}$, are the same (denoted by $S_{1}$), and $(\pi^{[d]})^{-1}(x)+(\pi^{[e]})^{-1}(x)=(\pi^{[d+e]})^{-1}(x)$ for any $d \neq -e \in \mathbb{F}_{p^r}^{*}$ and $x \in S_{1}$;
	\item $\mathrm{Tr}_{1}^{r}(cH((\pi^{[c]})^{-1}(x))), c \in \mathbb{F}_{p^r}^{*}$, are the same function from $S_{1}$ to $\mathbb{F}_{p}$.
\end{itemize}
Then $\{D_{F, i}, i \in \mathbb{F}_{p^r}\}$ is an $(m-t)$-plateaued partition.
\end{lemma}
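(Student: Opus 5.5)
We will derive the claim from Proposition \ref{Proposition 2}, with $n=m+t$, target space $V_{r}^{(p)}=\mathbb{F}_{p^r}$ and $s=m-t$. The first step is to compute the Walsh transform of every component function $F_{c}(x,y)=\mathrm{Tr}_{1}^{r}(cF(x,y))$, $c\in\mathbb{F}_{p^r}^{*}$. Take $(a,b)\in\mathbb{F}_{p^m}\times\mathbb{F}_{p^t}$. By the definition of $\pi^{[c]}$ we have
\begin{equation*}
W_{F_{c}}(a,b)=\sum_{y\in\mathbb{F}_{p^t}}\zeta_{p}^{\mathrm{Tr}_{1}^{r}(cH(y))-\mathrm{Tr}_{1}^{t}(by)}\sum_{x\in\mathbb{F}_{p^m}}\zeta_{p}^{\mathrm{Tr}_{1}^{m}((\pi^{[c]}(y)-a)x)}.
\end{equation*}
The inner sum is $p^{m}\delta_{a}(\pi^{[c]}(y))$. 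The map $\pi^{[c]}$ is injective with image $S_{1}$, so
\begin{equation*}
W_{F_{c}}(a,b)=p^{m}\zeta_{p}^{\mathrm{Tr}_{1}^{r}(cH((\pi^{[c]})^{-1}(a)))-\mathrm{Tr}_{1}^{t}(b(\pi^{[c]})^{-1}(a))}
\end{equation*}
when $a\in S_{1}$, and $W_{F_{c}}(a,b)=0$ otherwise.

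Since $p^{m}=p^{\frac{(m+t)+(m-t)}{2}}$, this calculation gives several facts at once:
\begin{itemize}
\item Each $F_{c}$ is $(m-t)$-plateaued, so $F$ is vectorial $(m-t)$-plateaued.
\item Every component has the same Walsh support $S=S_{1}\times\mathbb{F}_{p^t}$, because $S_{1}$ does not depend on $c$ by the first hypothesis.
\item $\epsilon_{F_{c}}\equiv 1$ for all $c$, so the second condition of Proposition \ref{Proposition 2} holds with $\epsilon(a,b)=1$.
\item The duals are $(F_{c})^{*}(a,b)=\mathrm{Tr}_{1}^{r}(cH((\pi^{[c]})^{-1}(a)))-\mathrm{Tr}_{1}^{t}(b(\pi^{[c]})^{-1}(a))$ on $S$.
\end{itemize}
The number $n+s=2m$ is even, so $\epsilon$ is real; this matches the normalization in the definition of the dual.

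Next we verify condition (ii) of Proposition \ref{Proposition 2}. Take $d\neq -e$ in $\mathbb{F}_{p^r}^{*}$. By the second hypothesis, the $H$-parts of $(F_{d})^{*}$, $(F_{e})^{*}$ and $(F_{d+e})^{*}$ all equal one common function $h(a)$. Their contribution to $(F_{d})^{*}+(F_{e})^{*}-(F_{d+e})^{*}$ is therefore $h(a)+h(a)-h(a)=h(a)$. The linear parts contribute
\begin{equation*}
-\mathrm{Tr}_{1}^{t}\bigl(b[(\pi^{[d]})^{-1}(a)+(\pi^{[e]})^{-1}(a)-(\pi^{[d+e]})^{-1}(a)]\bigr).
\end{equation*}
This is $0$ by the additivity hypothesis in the first condition. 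So the combination equals $h(a)$, which is independent of $d,e$, and condition (ii) holds. Since $p^{r}>3$, Proposition \ref{Proposition 2} applies, and $\{D_{F,i}, i\in\mathbb{F}_{p^r}\}$ is an $(m-t)$-plateaued partition.

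The main obstacle is the bookkeeping around the inner product. We must use $\langle (x,y),(a,b)\rangle=\mathrm{Tr}_{1}^{m}(ax)+\mathrm{Tr}_{1}^{t}(by)$ on the product space, and the Walsh transform carries a minus sign on this term. With those conventions, the Walsh values must come out exactly as $p^{m}\zeta_{p}^{(\cdot)}$ with coefficient $+1$, so that $\epsilon$ is independent of $c$. We should also state that $S_{1}\neq\emptyset$ and $s=m-t\neq n$, which holds because $t\geq 1$. Beyond these points the argument is a direct check.
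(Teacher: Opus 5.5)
Your proposal is correct and follows the paper's proof essentially step for step. You compute $W_{F_c}$ by summing over $x$ first, read off $S_{F_c}=S_1\times\mathbb{F}_{p^t}$, $\epsilon_{F_c}=1$ and the explicit dual, and then use the two hypotheses to verify condition (ii) of Proposition \ref{Proposition 2} (with $p^r>3$ supplying that proposition's side condition).
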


\begin{proof}
For any fixed $c \in \mathbb{F}_{p^r}^{*}$ and $y \in \mathbb{F}_{p^t}$, the map $x\mapsto \mathrm{Tr}_{1}^{r}(c\phi(x, y))$ is linear from $\mathbb{F}_{p^m}$ to $\mathbb{F}_{p}$. Since each linear map from $\mathbb{F}_{p^m}$ to $\mathbb{F}_{p}$ is of the form $\mathrm{Tr}_{1}^{m}(ax), x \in \mathbb{F}_{p^m}$, where $a \in \mathbb{F}_{p^m}$, then for any $c\in \mathbb{F}_{p^r}^{*}$ and $y \in \mathbb{F}_{p^t}$, there exists $\pi^{[c]}(y)\in \mathbb{F}_{p^m}$ such that
$\mathrm{Tr}_{1}^{r}(c\phi(x, y))=\mathrm{Tr}_{1}^{m}(\pi^{[c]}(y)x)$ for all $x \in \mathbb{F}_{p^m}$. For any $c \in \mathbb{F}_{p^r}^{*}$, $F_{c}(x, y)=\mathrm{Tr}_{1}^{r}(c\phi(x, y))+H_{c}(y)=\mathrm{Tr}_{1}^{m}(\pi^{[c]}(y)x)+H_{c}(y)$. For $a=(a_{1}, a_{2}) \in \mathbb{F}_{p^m} \times \mathbb{F}_{p^t}$, we have
\begin{equation*}
	\begin{split}
		W_{F_{c}}(a)&=\sum_{x \in \mathbb{F}_{p^m}, y \in \mathbb{F}_{p^t}}\zeta_{p}^{\mathrm{Tr}_{1}^{m}(\pi^{[c]}(y)x)+H_{c}(y)-\mathrm{Tr}_{1}^{m}(a_{1}x)-\mathrm{Tr}_{1}^{t}(a_{2}y)}\\
		&=\sum_{y \in \mathbb{F}_{p^t}}\zeta_{p}^{H_{c}(y)-\mathrm{Tr}_{1}^{t}(a_{2}y)}\sum_{x \in \mathbb{F}_{p^m}}\zeta_{p}^{\mathrm{Tr}_{1}^{m}((\pi^{[c]}(y)-a_{1})x)}\\
		&=\begin{cases}
			0, & \text{ if } a_{1} \notin S_{1},\\
			p^{m}\zeta_{p}^{-\mathrm{Tr}_{1}^{t}(a_{2}(\pi^{[c]})^{-1}(a_{1}))+H_{c}((\pi^{[c]})^{-1}(a_{1}))}, & \text{ if } a_{1} \in S_{1}.
		\end{cases}
	\end{split}
\end{equation*}
Then $F$ is a vectorial $(m-t)$-plateaued function for which $S_{F_{c}}=S_{1} \times \mathbb{F}_{p^t}$, $\epsilon_{F_{c}}(x, y)=1$, $(F_{c})^{*}(x, y)=-\mathrm{Tr}_{1}^{t}(y(\pi^{[c]})^{-1}(x))+\mathrm{Tr}_{1}^{r}(cH((\pi^{[c]})^{-1}(x))), x \in S_{1}, y \in \mathbb{F}_{p^t}$. Since $(\pi^{[d]})^{-1}(x)+(\pi^{[e]})^{-1}(x)=(\pi^{[d+e]})^{-1}(x)$ for any $d \neq -e \in \mathbb{F}_{p^r}^{*}$ and $x \in S_{1}$, and $\mathrm{Tr}_{1}^{r}(cH((\pi^{[c]})^{-1}(x))), c \in \mathbb{F}_{p^r}^{*}$, are the same function from $S_{1}$ to $\mathbb{F}_{p}$, then $(F_{d})^{*}(x, y)+(F_{e})^{*}(x, y)-(F_{d+e})^{*}(x, y), d \neq -e \in \mathbb{F}_{p^r}^{*}$, are the same function from $S_{1} \times \mathbb{F}_{p^t}$ to $\mathbb{F}_{p}$. By Proposition \ref{Proposition 2},  $\{D_{F, i}, i \in \mathbb{F}_{p^r}\}$ is an $(m-t)$-plateaued partition. 
\end{proof}

Based on Lemma \ref{new1}, by utilizing presemifields, we have the following construction of $s$-plateaued partitions.

\begin{theorem}\label{Theorem 0}
Let $r, t, m, u$ be positive integers with $p^r>3, r \mid t, t \mid m, r \neq t, t \neq m$, $u=p^{j_{0}}$ for some integer $0 \leq j_{0} \leq r-1$. Let $\mathbb{F}_{p^t}^{*}/\mathbb{F}_{p^r}^{*}=\{\bar{\beta}_{1}, \dots, \bar{\beta}_{\frac{p^t-1}{p^r-1}}\}$, and let $\bar{\alpha}_{i} \in \mathbb{F}_{p^m}^{*}/\mathbb{F}_{p^r}^{*}, 1 \leq i \leq \frac{p^t-1}{p^r-1}$. Let $\pi$ be a permutation of $\mathbb{F}_{p^m}$ satisfying $\pi(0)=0, \pi(c\beta_{i})=c^{u}\alpha_{\tau(i)}, 1 \leq i \leq \frac{p^t-1}{p^r-1}, c \in \mathbb{F}_{p^r}^{*}$, where $\tau$ is a permutation of $\{i, 1 \leq i \leq \frac{p^t-1}{p^r-1}\}$. Let $P=(\mathbb{F}_{p^m}, +, \circ)$ be a presemifield whose transpose $P^{T}=(\mathbb{F}_{p^m}, +, \star)$ is right $\mathbb{F}_{p^r}$-linear. Define $\psi: \mathbb{F}_{p^m} \times \mathbb{F}_{p^m} \rightarrow \mathbb{F}_{p^m}$ by $\psi(x, 0)=0$ and $\psi(x, y) \circ \pi(y)=x$ if $y \in \mathbb{F}_{p^m}^{*}$. Let $H: \mathbb{F}_{p^t} \rightarrow \mathbb{F}_{p^r}$ satisfy $H(0)=0$ and $H(c\beta_{i})=c^{-u}H(\beta_{i})$ for any $1 \leq i \leq \frac{p^t-1}{p^r-1}, c \in \mathbb{F}_{p^r}^{*}$. Then $\{D_{F, i}, i \in \mathbb{F}_{p^r}\}$ is an $(m-t)$-plateaued partition, where $F: \mathbb{F}_{p^m} \times \mathbb{F}_{p^t} \rightarrow \mathbb{F}_{p^r}$ is defined as 
\begin{equation*}
	F(x,y)=\mathrm{Tr}_{r}^{m}(\psi(x, y))+H(y). 
\end{equation*}
\end{theorem}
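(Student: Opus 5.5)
The plan is to apply Lemma \ref{new1} with $\phi(x,y)=\mathrm{Tr}_{r}^{m}(\psi(x,y))$. For fixed $y$, the map $x\mapsto \psi(x,y)$ is additive: when $y\neq 0$, $\psi(x,y)$ is the unique solution of $z\circ \pi(y)=x$, and $z\mapsto z\circ\pi(y)$ is an additive bijection because $P$ is a presemifield. Hence $\phi(\cdot,y)$ is $\mathbb{F}_{p}$-linear, and it remains to identify $\pi^{[c]}$ and check the two bullet conditions of Lemma \ref{new1}.

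\textbf{Computing $\pi^{[c]}$.} For $c\in\mathbb{F}_{p^r}^{*}$ we have $\mathrm{Tr}_{1}^{r}(c\,\mathrm{Tr}_{r}^{m}(\psi(x,y)))=\mathrm{Tr}_{1}^{m}(c\psi(x,y))$. Write $w=\psi(x,y)$, so that $x=w\circ\pi(y)$. Using the transpose identity $\mathrm{Tr}_{1}^{m}(z(w\circ v))=\mathrm{Tr}_{1}^{m}(w(z\star v))$, we want a $\lambda$ with $\mathrm{Tr}_{1}^{m}(\lambda(w\circ \pi(y)))=\mathrm{Tr}_{1}^{m}(cw)$ for all $w$, that is, $\lambda\star\pi(y)=c$. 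Since $v\mapsto\lambda\star v$ is injective, the map $\lambda\mapsto \lambda\star\pi(y)$ is a bijection for $y\neq0$. We therefore define $\pi^{[c]}(y)$ as the unique $\lambda$ with $\lambda\star\pi(y)=c$, and set $\pi^{[c]}(0)=0$.

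By right $\mathbb{F}_{p^r}$-linearity of $\star$, if $\lambda_1\star\pi(y)=1$ then $(\lambda_1)\star(c\pi(y))$ equals $c$. I will instead exploit the structure of $\pi$ on $\mathbb{F}_{p^t}$. Take $y=a\beta_i$ with $a\in\mathbb{F}_{p^r}^{*}$, so $\pi(y)=a^u\alpha_{\tau(i)}$. Let $\lambda_{i}$ be the solution of $\lambda_i\star\alpha_{\tau(i)}=1$. Then $\lambda\star(a^u\alpha_{\tau(i)})=a^u(\lambda\star\alpha_{\tau(i)})$, which gives $\pi^{[c]}(a\beta_i)=\lambda$, where $\lambda\star\alpha_{\tau(i)}=c a^{-u}$. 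Here we need the other argument to be linear, and this is the step I expect to be the main obstacle. The presemifield $\star$ is only right-linear, so I cannot simply pull $ca^{-u}$ out of the left argument. I plan to re-express things as follows: the set $\{\lambda:\lambda\star\alpha_{\tau(i)}\in\mathbb{F}_{p^r}^{*}\}$ is an additive $\mathbb{F}_p$-subspace minus zero, of the form $L_i^{*}$ with $L_i=\{\lambda: \lambda\star\alpha_{\tau(i)}\in\mathbb{F}_{p^r}\}$, and $\lambda\mapsto\lambda\star\alpha_{\tau(i)}$ is an additive bijection $L_i\to\mathbb{F}_{p^r}$. Let $\rho_i$ denote its inverse, which is additive. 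Then $\pi^{[c]}(a\beta_i)=\rho_i(ca^{-u})$.

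\textbf{Checking the bullets.} Injectivity of $\pi^{[c]}$ needs care: the $L_i$ must meet only in $0$ as $i$ varies, and within $L_i$ the values $ca^{-u}$ must be distinct as $a$ varies. The latter holds since $a\mapsto a^{-u}$ is a bijection of $\mathbb{F}_{p^r}^{*}$, because $u$ is a power of $p$ (and $\tau$ is a permutation). For the former, I will argue that any $\lambda\neq0$ lies in a single $L_i$: the map $v\mapsto\lambda\star v$ is injective and right $\mathbb{F}_{p^r}$-linear, so the $v$ with $\lambda\star v\in\mathbb{F}_{p^r}$ form a one-dimensional $\mathbb{F}_{p^r}$-space. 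Provided this space is exactly one $\mathbb{F}_{p^r}^{*}\alpha_{\tau(i)}$ class, $\lambda\in L_{i}$ for at most one $i$. Consequently $S_1=\bigcup_i L_i$ is independent of $c$. The inverse is $(\pi^{[c]})^{-1}(\lambda)=a\beta_i$, where $\lambda\in L_i$ and $ca^{-u}=\lambda\star\alpha_{\tau(i)}=:\mu$. This gives $a=(c/\mu)^{u^{-1}}$, using the inverse Frobenius on $\mathbb{F}_{p^r}$. This is additive in $c$, so $(\pi^{[d]})^{-1}+(\pi^{[e]})^{-1}=(\pi^{[d+e]})^{-1}$, with both sides vanishing correctly when $\lambda=0$. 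Finally, $\mathrm{Tr}_1^r(cH(a\beta_i))=\mathrm{Tr}_1^r(c a^{-u}H(\beta_i))=\mathrm{Tr}_1^r(\mu H(\beta_i))$, which is independent of $c$. Lemma \ref{new1} then yields the $(m-t)$-plateaued partition; the hypothesis $p^r>3$ is inherited from it, and $t<m$ is given.
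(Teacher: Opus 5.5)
Your proposal is correct and follows essentially the paper's route. Both apply Lemma \ref{new1} with $\phi=\mathrm{Tr}_{r}^{m}(\psi)$, identify $\pi^{[c]}(y)$ through the transpose identity as the solution of $\pi^{[c]}(y)\star\pi(y)=c$, pull out $a^{u}$ by right $\mathbb{F}_{p^r}$-linearity, and invert with the Frobenius power $v=p^{r-j_{0}}$, giving $(\pi^{[c]})^{-1}(\lambda)=c^{v}\mu^{-v}\beta_{i}$. The difference is only bookkeeping: the paper folds your coset-wise maps $\rho_i$ into one permutation $\rho(x)=\pi^{-1}(\lambda_x^{-1})$ of $\mathbb{F}_{p^m}$, so that $\pi^{[c]}(y)=\rho^{-1}(c^{-v}y)$ yields injectivity and the common image $\rho^{-1}(\mathbb{F}_{p^t})$ at once. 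Your $L_i$-intersection argument, which tacitly relies on the classes $\bar{\alpha}_{\tau(i)}$ being distinct, can be replaced by a single step: $\lambda\star(\pi(y)-\pi(y'))=0$ with $\lambda\neq 0$ forces $y=y'$.
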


\begin{proof}
Let $\phi: \mathbb{F}_{p^m} \times \mathbb{F}_{p^t} \rightarrow \mathbb{F}_{p^r}$ be defined by  $\phi(x, y)=\mathrm{Tr}_{r}^{m}(\psi(x, y)), x \in \mathbb{F}_{p^m}, y \in \mathbb{F}_{p^t}$. When $y \in \mathbb{F}_{p^t}^{*}$, for any $x_{1}, x_{2} \in \mathbb{F}_{p^m}$, from $(\psi(x_{1}, y)+\psi(x_{2}, y))\circ \pi(y)=\psi(x_{1}, y) \circ \pi(y)+ \psi(x_{2}, y) \circ \pi(y)=x_{1}+x_{2}=\psi(x_{1}+x_{2}, y) \circ \pi(y)$ and $\pi(y) \neq 0$, we have $\psi(x_{1}+x_{2}, y)=\psi(x_{1}, y)+\psi(x_{2}, y)$. When $y=0$, then $\psi(x_{1}+x_{2}, 0)=\psi(x_{1}, 0)+\psi(x_{2}, 0)=0$. Thus, for any $y \in \mathbb{F}_{p^t}$, $x \mapsto \phi(x, y)$ is linear. In the following, we determine $\pi^{[c]}: \mathbb{F}_{p^t} \rightarrow \mathbb{F}_{p^m}$ with $\mathrm{Tr}_{1}^{r}(c\phi(x, y))=\mathrm{Tr}_{1}^{m}(\pi^{[c]}(y)x), x \in \mathbb{F}_{p^m}, y \in \mathbb{F}_{p^t}$. When $x=0$, let $\lambda_{x}=0$; when $x \in \mathbb{F}_{p^m}^{*}$, let $\lambda_{x} \in \mathbb{F}_{p^m}^{*}$ be given by $x \star \lambda_{x}^{-1}=1$. Define $\rho: \mathbb{F}_{p^m} \rightarrow \mathbb{F}_{p^m}$ as $\rho(x)=\pi^{-1}(\lambda_{x}^{-1})$. Then $\rho$ is a permutation of $\mathbb{F}_{p^m}$ with $\rho(0)=0$. For any fixed $c \in \mathbb{F}_{p^r}^{*}$ and $x \in \mathbb{F}_{p^m}^{*}$, let $z_{c, x}=\rho^{-1}(c^{-1}x)$. Then $\pi^{-1}(\lambda_{z_{c, x}}^{-1})=\rho(z_{c, x})=c^{-1}x$ and $\lambda_{z_{c, x}}^{-1}=\pi(c^{-1}x)$. For $y=d\beta_{i} \in \mathbb{F}_{p^t}^{*}$, where $d \in \mathbb{F}_{p^r}^{*}$ and $1 \leq i \leq \frac{p^t-1}{p^r-1}$, we have $\pi(c^{-1}y)=\pi(c^{-1}d\beta_{i})=c^{-u}d^{u}\alpha_{\tau(i)}=c^{-u}\pi(d \beta_{i})=c^{-u}\pi(y)$, and thus $\lambda_{z_{c, y}}^{-1}=c^{-u}\pi(y)$. Let $v=p^{r-j_{0}}$. For any $c \in \mathbb{F}_{p^r}^{*}$ and $y \in \mathbb{F}_{p^t}^{*}$, since $P^{T}$ is right $\mathbb{F}_{p^r}$-linear, we have $1=z_{c, y} \star \lambda_{z_{c, y}}^{-1}=z_{c, y} \star c^{-u}\pi(y)=c^{-u}(z_{c, y} \star \pi(y))$, and then $c^{u}=z_{c, y} \star \pi(y)=\rho^{-1}(c^{-1}y) \star \pi(y)$, which implies that $c=\rho^{-1}(c^{-v}y) \star \pi(y)$ for any $c \in \mathbb{F}_{p^r}^{*}$. Therefore, for any $x \in \mathbb{F}_{p^m}$ and $y \in \mathbb{F}_{p^t}^{*}$, we have 
	\begin{equation}\label{Th0 1}
		\begin{split}
		\mathrm{Tr}_{1}^{r}(c\phi(x,y))&=\mathrm{Tr}_{1}^{m}(c \psi(x, y))=\mathrm{Tr}_{1}^{m}(\psi(x, y)(\rho^{-1}(c^{-v}y) \star \pi(y)))\\
		&=\mathrm{Tr}_{1}^{m}(\rho^{-1}(c^{-v}y)(\psi(x, y) \circ \pi(y)))
		=\mathrm{Tr}_{1}^{m}(\rho^{-1}(c^{-v}y)x).
		\end{split}
	\end{equation}
	By $\psi(x, 0)=0, \rho(0)=0$, Eq. \eqref{Th0 1} also holds for $y=0$. Thus, for any $c \in \mathbb{F}_{p^r}^{*}$ and $y \in \mathbb{F}_{p^t}$, we have $\pi^{[c]}(y)=\rho^{-1}(c^{-v}y)$. Since $\rho$ is a permutation of $\mathbb{F}_{p^m}$, then $\pi^{[c]}$ is an injection from $\mathbb{F}_{p^t}$ to $\mathbb{F}_{p^m}$. For any $c \in \mathbb{F}_{p^r}^{*}$, $c^{-v}\mathbb{F}_{p^t}=\mathbb{F}_{p^t}$, and $\pi^{[c]}(\mathbb{F}_{p^t})=\rho^{-1}(\mathbb{F}_{p^t})$ for all $c \in \mathbb{F}_{p^r}^{*}$. Denote $S_{1}=\rho^{-1}(\mathbb{F}_{p^t})$, which contains $0$ as $\rho(0)=0$. For any $x \in S_{1}$, it is easy to see that $(\pi^{[c]})^{-1}(x)=c^{v}\rho(x)$. Note that as $(\pi^{[c]})^{-1}(S_{1})= \mathbb{F}_{p^t}$, we have $\rho(x) \in \mathbb{F}_{p^t}$ for any $x \in S_{1}$. Since $v =p^{r-j_{0}}$, then $(\pi^{[d]})^{-1}(x)+(\pi^{[e]})^{-1}(x)=(\pi^{[d+e]})^{-1}(x)$ for any $d \neq -e \in \mathbb{F}_{p^r}^{*}$. For any $y \in \mathbb{F}_{p^t}^{*}$, denote $y=\gamma_{y}\beta_{i_{y}}$, where $\gamma_{y} \in \mathbb{F}_{p^r}^{*}$ and $1 \leq i_{y} \leq \frac{p^r-1}{p^t-1}$. Then for $x \in S_{1}^{*}$, $\mathrm{Tr}_{1}^{r}(cH((\pi^{[c]})^{-1}(x)))=\mathrm{Tr}_{1}^{r}(cH(c^{v}\rho(x)))=\mathrm{Tr}_{1}^{r}(cH(c^{v}\gamma_{\rho(x)}\beta_{i_{\rho(x)}}))=\mathrm{Tr}_{1}^{r}(\gamma_{\rho(x)}^{-u}H(\beta_{i_{\rho(x)}}))$, which is independent of $c$. For $x=0$, $\mathrm{Tr}_{1}^{r}(cH((\pi^{[c]})^{-1}(x)))=\mathrm{Tr}_{1}^{r}(cH(c^{v}\rho(x)))=0$. By Lemma \ref{new1}, $\{D_{F, i}, i \in \mathbb{F}_{p^r}\}$ is an $(m-t)$-plateaued partition. 
\end{proof}

When the presemifield considered is the finite field, we further have the following theorem, from which all generated $p$-ary plateaued functions have no nonzero linear structure.

\begin{theorem} \label{Theorem 1}
Let $r, t, m$ be positive integers with $p^r>3, r \mid t, r \mid m, r \neq t, m, t<m, \frac{m}{r} \leq \frac{p^t-1}{p^r-1}$. Let $\mathbb{F}_{p^t}^{*}/\mathbb{F}_{p^r}^{*}=\{\bar{\beta}_{1}, \dots, \bar{\beta}_{\frac{p^t-1}{p^r-1}}\}$, and let $\bar{\alpha}_{i} \in \mathbb{F}_{p^m}^{*}/\mathbb{F}_{p^r}^{*}, 1 \leq i \leq \frac{p^t-1}{p^r-1}$, and $\alpha_{1}, \dots, \alpha_{\frac{m}{r}}$ be linearly independent over $\mathbb{F}_{p^r}$. Define $\pi: \mathbb{F}_{p^t} \rightarrow \mathbb{F}_{p^m}$ by 
$\pi(0)=0, \pi(c\beta_{i})=\theta(c)\alpha_{\tau(i)}, 1 \leq i \leq \frac{p^t-1}{p^r-1}, c \in \mathbb{F}_{p^r}^{*}$, where $\theta$ is a permutation of $\mathbb{F}_{p^r}^{*}$ satisfying $\theta^{-1}(d^{-1})+\theta^{-1}(e^{-1})=\theta^{-1}((d+e)^{-1})$ for any $d \neq -e \in \mathbb{F}_{p^r}^{*}$, and $\tau$ is a permutation of $\{i, 1 \leq i \leq \frac{p^t-1}{p^r-1}\}$. Let $H: \mathbb{F}_{p^t} \rightarrow \mathbb{F}_{p^r}$ satisfy $H(0)=0$ and $H(c\beta_{i})=\theta(c)H(\beta_{i})$ for any $1 \leq i \leq \frac{p^t-1}{p^r-1}, c \in \mathbb{F}_{p^r}^{*}$. Then $\{D_{F, i}, i \in \mathbb{F}_{p^r}\}$ is an $(m-t)$-plateaued partition for which any generated $p$-ary plateaued function has no nonzero linear structure, where 
$F: \mathbb{F}_{p^m} \times \mathbb{F}_{p^t} \rightarrow \mathbb{F}_{p^r}$ is given by
\begin{equation}\label{Th1 0}
	F(x, y)=\mathrm{Tr}_{r}^{m}(x \pi(y))+H(y), x \in \mathbb{F}_{p^m}, y \in \mathbb{F}_{p^t}.
\end{equation}
\end{theorem}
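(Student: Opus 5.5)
We apply Lemma \ref{new1} with $\phi(x,y)=\mathrm{Tr}_{r}^{m}(x\pi(y))$, and then use Remark \ref{Remark 1} together with Proposition \ref{Pronew} for the linear-structure claim.

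\textbf{Step 1: linearity and $\pi^{[c]}$.} For fixed $y$, the map $x\mapsto\phi(x,y)$ is clearly $\mathbb{F}_{p}$-linear. For $c\in\mathbb{F}_{p^r}^{*}$,
\[
\mathrm{Tr}_{1}^{r}(c\,\mathrm{Tr}_{r}^{m}(x\pi(y)))=\mathrm{Tr}_{1}^{m}(c\pi(y)x),
\]
so $\pi^{[c]}(y)=c\pi(y)$. Next we check that $\pi$ is injective on $\mathbb{F}_{p^t}$. The images $\theta(c)\alpha_{\tau(i)}$ are nonzero. For distinct $i$, the classes $\bar\alpha_{\tau(i)}$ are distinct, which I read as implicit in the hypothesis that the $\bar\alpha_i$ are listed as distinct elements of $\mathbb{F}_{p^m}^{*}/\mathbb{F}_{p^r}^{*}$. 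For fixed $i$, injectivity follows because $\theta$ is a permutation. Hence each $\pi^{[c]}$ is injective.

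\textbf{Step 2: the conditions of Lemma \ref{new1}.} The image $S_1:=\pi^{[c]}(\mathbb{F}_{p^t})=\{0\}\cup\{\mathbb{F}_{p^r}^{*}\alpha_{\tau(i)}\}$ is independent of $c$, since $c\theta(\mathbb{F}_{p^r}^{*})=\mathbb{F}_{p^r}^{*}$. For $x=e\alpha_{\tau(i)}\in S_1^{*}$ we get
\[
(\pi^{[c]})^{-1}(x)=\theta^{-1}(c^{-1}e)\beta_i .
\]
So additivity in $c$ amounts to $\theta^{-1}(e/d)+\theta^{-1}(e/e')=\theta^{-1}(e/(d+e'))$. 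This follows from the hypothesis on $\theta$ applied to $d/e$ and $e'/e$, using $(d+e')/e=d/e+e'/e$. At $x=0$ the condition is trivial.

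For the second condition, $cH(\theta^{-1}(c^{-1}e)\beta_i)=c\cdot c^{-1}e\,H(\beta_i)=eH(\beta_i)$, which is independent of $c$. It is $0$ at $x=0$. Lemma \ref{new1} then gives an $(m-t)$-plateaued partition.

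\textbf{Step 3: no nonzero linear structure.} By the proof of Lemma \ref{new1} and Remark \ref{Remark 1}, every generated $f$ has Walsh support $S=S_1\times\mathbb{F}_{p^t}$. This set contains $0$, so we take $v=0$ and $E=S$ in Proposition \ref{Pronew}. It remains to show that $S_1\times\mathbb{F}_{p^t}$ spans $\mathbb{F}_{p^m}\times\mathbb{F}_{p^t}$ over $\mathbb{F}_p$. The set contains $\{0\}\times\mathbb{F}_{p^t}$ and $S_1\times\{0\}$. Moreover $S_1\supseteq\mathbb{F}_{p^r}\alpha_1\cup\dots\cup\mathbb{F}_{p^r}\alpha_{m/r}$, which is possible because $m/r\le (p^t-1)/(p^r-1)$ and $\tau$ is surjective. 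Since $\alpha_1,\dots,\alpha_{m/r}$ form an $\mathbb{F}_{p^r}$-basis, these lines span $\mathbb{F}_{p^m}$ over $\mathbb{F}_p$. Hence $E$ contains a basis, and Proposition \ref{Pronew} concludes.

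\textbf{Main obstacle.} The only delicate point is Step 2, namely translating the functional equation for $\theta$ into the additivity condition of Lemma \ref{new1} with the correct inversions. We must also confirm that $\theta^{-1}$ is defined on all the arguments involved. This holds because $d+e'\neq0$ ensures $e/(d+e')\in\mathbb{F}_{p^r}^{*}$.
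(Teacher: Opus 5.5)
Your proposal is correct and follows the paper's proof essentially step for step. You identify $\pi^{[c]}=c\pi$, verify the two hypotheses of Lemma~\ref{new1} by rescaling the functional equation for $\theta$ and using $H(c\beta_i)=\theta(c)H(\beta_i)$, and then obtain the absence of nonzero linear structures from the fact that $S_1\times\mathbb{F}_{p^t}$ spans $\mathbb{F}_{p^m}\times\mathbb{F}_{p^t}$, via Proposition~\ref{Pronew} and Remark~\ref{Remark 1}. You are right that injectivity of $\pi$ requires the classes $\bar{\alpha}_i$ to be distinct, and the paper uses this tacitly as well.
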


\begin{proof}
With the same notations as in Lemma \ref{new1}, $\pi^{[c]}(y)=c\pi(y), y \in \mathbb{F}_{p^t}$. By the definition of $\pi$, for every $c \in \mathbb{F}_{p^r}^{*}$, $c\pi$ is injective and $\{c\pi(y), y \in \mathbb{F}_{p^t}\}=\{0, \gamma\alpha_{i}, \gamma \in \mathbb{F}_{p^r}^{*}, 1 \leq i \leq \frac{p^t-1}{p^r-1}\}$. We denote this set by $S_{1}$. For any $x \in S_{1}^{*}$, denote $x=\gamma_{x} \alpha_{i_{x}}$, where $\gamma_{x} \in \mathbb{F}_{p^r}^{*}, 1 \leq i_{x} \leq \frac{p^t-1}{p^r-1}$. From the definition of $\pi$ and the properties of $\theta$ and $H$, for any $x \in S_{1}^{*}$ and $d \neq -e \in \mathbb{F}_{p^r}^{*}$, 
\begin{equation*}
	\begin{split}
	&(\pi^{[d]})^{-1}(x)+(\pi^{[e]})^{-1}(x)-(\pi^{[d+e]})^{-1}(x)=\pi^{-1}(d^{-1}x)+\pi^{-1}(e^{-1}x)-\pi^{-1}((d+e)^{-1}x)\\
	&=\pi^{-1}(d^{-1}\gamma_{x}\alpha_{i_{x}})+\pi^{-1}(e^{-1}\gamma_{x}\alpha_{i_{x}})-\pi^{-1}((d+e)^{-1}\gamma_{x}\alpha_{i_{x}})\\
	&=\theta^{-1}(d^{-1}\gamma_{x})\beta_{\tau^{-1}(i_{x})}+\theta^{-1}(e^{-1}\gamma_{x})\beta_{\tau^{-1}(i_{x})}-\theta^{-1}((d+e)^{-1}\gamma_{x})\beta_{\tau^{-1}(i_{x})}=0,
	\end{split}
\end{equation*}
and
\begin{equation*}
\begin{split}
\mathrm{Tr}_{1}^{r}(cH((\pi^{[c]})^{-1}(x)))&=\mathrm{Tr}_{1}^{r}(cH((\pi^{-1}(c^{-1}x)))=\mathrm{Tr}_{1}^{r}(cH(\pi^{-1}(c^{-1}\gamma_{x}\alpha_{i_{x}})))\\
&=\mathrm{Tr}_{1}^{r}(cH(\theta^{-1}(c^{-1}\gamma_{x})\beta_{\tau^{-1}(i_{x})}))=\mathrm{Tr}_{1}^{r}(\gamma_{x}H(\beta_{\tau^{-1}(i_{x})})).
\end{split}
\end{equation*}
Then by Lemma \ref{new1}, $\{D_{F, i}, i \in \mathbb{F}_{p^r}\}$ is an $(m-t)$-plateaued partition. Since $\{\alpha_{1}, \dots, \alpha_{\frac{m}{r}}\}$ is a basis of $\mathbb{F}_{p^m}$ over $\mathbb{F}_{p^r}$, then $S_{1} \times \mathbb{F}_{p^t}$ contains a basis of $\mathbb{F}_{p^m} \times \mathbb{F}_{p^t}$ over $\mathbb{F}_{p}$. Therefore any generated $p$-ary $(m-t)$-plateaued function has no nonzero linear structure by the proof of Lemma \ref{new1}, Proposition \ref{Pronew} and Remark \ref{Remark 1}. This completes the proof.
\end{proof}

\begin{remark}
Let $P(x)=\sum_{i=0}^{r-1}a_{i}x^{p^{i}}$ be a $p$-polynomial over $\mathbb{F}_{p^r}$ inducing a permutation of $\mathbb{F}_{p^r}$. Define $\theta(x)=\frac{1}{P^{-1}(x)}, x \in \mathbb{F}_{p^r}^{*}$. Then $\theta$ satisfies the condition in Theorem \ref{Theorem 1}.
\end{remark}

Based on Theorem \ref{Theorem 1} and the well-known direct sum construction, we obtain the following construction of $s$-plateaued partitions for which any generated $p$-ary $s$-plateaued function has no nonzero linear structure.

\begin{theorem} \label{Theorem 2}
Keep the same notations as in Theorem \ref{Theorem 1}. Let $F: \mathbb{F}_{p^m} \times \mathbb{F}_{p^t} \rightarrow \mathbb{F}_{p^r}$ be defined by Eq. \eqref{Th1 0}. Let $R: V_{k}^{(p)} \rightarrow \mathbb{F}_{p^r}$ be a vectorial dual-bent function satisfying Condition A. Define 
\begin{equation*}
	T(x, y, z)=P(F(x, y))+R(z), x \in \mathbb{F}_{p^m}, y \in \mathbb{F}_{p^t}, z \in V_{k}^{(p)},
\end{equation*}
where $P$ is a permutation of $\mathbb{F}_{p^r}$. Then $\{D_{T, i}, i \in \mathbb{F}_{p^r}\}$ is an $(m-t)$-plateaued partition for which any generated $p$-ary $(m-t)$-plateaued function has no nonzero linear structure.
\end{theorem}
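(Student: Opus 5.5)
The plan is to verify the two hypotheses of Proposition \ref{Proposition 2} for $T$ directly and then check condition (ii). After that, the absence of linear structures will come from Proposition \ref{Pronew} together with Remark \ref{Remark 1}.

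\textbf{Step 1: component functions.} For $c \in \mathbb{F}_{p^r}^{*}$ we have $T_{c}(x,y,z)=\mathrm{Tr}_{1}^{r}(cP(F(x,y)))+R_{c}(z)$, so $W_{T_{c}}(a,b)=W_{\mathrm{Tr}_{1}^{r}(cP(F))}(a)\,W_{R_{c}}(b)$. The function $\mathrm{Tr}_{1}^{r}(cP(F))$ is a balanced map $B_{c}=\mathrm{Tr}_{1}^{r}(cP(\cdot))$ composed with $F$. By Theorem \ref{Theorem 1}, $\{D_{F,i}\}$ is a plateaued partition obtained through Lemma \ref{new1} and Proposition \ref{Proposition 2}. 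Remark \ref{Remark 1} then gives that this function is $(m-t)$-plateaued, with Walsh support $S=S_{1}\times\mathbb{F}_{p^t}$, with $\epsilon=1$, and with dual $B_{c}(G(u))+g(u)$. Here $G$ and $g$ are the functions supplied by Proposition \ref{Proposition 2}(i) for $F$.

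Since $R$ is vectorial dual-bent satisfying Condition A, each $R_{c}$ is weakly regular bent with constant sign $\epsilon$ and dual $(R^{*})_{c}$. It follows that $T_{c}$ is $(m-t)$-plateaued over $V_{m+t+k}^{(p)}$. Its Walsh support $S\times V_{k}^{(p)}$ does not depend on $c$, and $\epsilon_{T_{c}}=\epsilon$ does not depend on $c$ either. The dual is
\[
(T_{c})^{*}(u,w)=\mathrm{Tr}_{1}^{r}(cP(G(u)))+g(u)+\mathrm{Tr}_{1}^{r}(c\,R^{*}(w)).
\]
(If $\epsilon=\pm\sqrt{-1}$ can occur, the product of signs is still constant in $c$, which is all that is needed.)

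\textbf{Step 2: condition (ii).} The dual is affine in $c$ up to the constant term $g(u)$. Therefore
\[
(T_{d})^{*}+(T_{e})^{*}-(T_{d+e})^{*}=g(u)
\]
for all $d\neq -e$, which is independent of $d$ and $e$. Equivalently, condition (i) holds with $\widetilde{G}(u,w)=P(G(u))+R^{*}(w)$. Proposition \ref{Proposition 2} applies because $p^{r}>3$, and it yields that $\{D_{T,i}\}$ is an $(m-t)$-plateaued partition.

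\textbf{Step 3: no linear structure.} By Remark \ref{Remark 1}, every generated function has Walsh support $S_{1}\times\mathbb{F}_{p^t}\times V_{k}^{(p)}$. This set contains $0$, so we may take $v=0$ and $E$ equal to the support itself. The set $S_{1}$ contains the $\mathbb{F}_{p^r}$-basis $\alpha_{1},\dots,\alpha_{m/r}$ together with all their $\mathbb{F}_{p^r}^{*}$-multiples, so it contains an $\mathbb{F}_{p}$-basis of $\mathbb{F}_{p^m}$. Taking the products of these vectors with the full factors $\mathbb{F}_{p^t}$ and $V_{k}^{(p)}$ gives a basis of the whole space. Proposition \ref{Pronew} then finishes the argument.

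The main obstacle is Step 1. Remark \ref{Remark 1} is stated only for balanced $B$, so I need to confirm that $B_{c}$ is balanced; this holds because $P$ is a permutation. I also need to track the sign $\epsilon(a)\cdot\epsilon$ carefully so that it is genuinely independent of $c$.
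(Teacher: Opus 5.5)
Your proposal is correct and follows essentially the same route as the paper. Both factor $W_{T_c}$ as a product, read off the support, sign and dual of $\mathrm{Tr}_1^r(cP(F))$ from Remark~\ref{Remark 1} (valid since $P$ is a permutation) and those of $R_c$ from Condition A, apply Proposition~\ref{Proposition 2} with $P(G)+R^{*}$ and $g$, and conclude via Proposition~\ref{Pronew} because $S_1\times\mathbb{F}_{p^t}\times V_k^{(p)}$ contains a basis. The only cosmetic difference is that you verify condition (ii) of Proposition~\ref{Proposition 2}, whereas the paper verifies the equivalent condition (i) directly.
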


\begin{proof}
For any $c \in \mathbb{F}_{p^r}^{*}$, $T_{c}(x, y, z)=P(F(x, y))_{c}+R_{c}(z), x \in \mathbb{F}_{p^m}, y \in \mathbb{F}_{p^t}, z \in V_{k}^{(p)}$, and then for any $a=(a_{1}, a_{2}, a_{3}) \in \mathbb{F}_{p^m} \times \mathbb{F}_{p^t} \times V_{k}^{(p)}$, 
\begin{equation*}
	W_{T_{c}}(a)=W_{P(F)_{c}}(a_{1}, a_{2})W_{R_{c}}(a_{3}).
\end{equation*}
Since $R$ is a vectorial dual-bent function satisfying Condition A, then $(R_{c})^{*}(a_{3})=(R^{*})_{c}(a_{3}), \epsilon_{R_{c}}=\epsilon$, where $\epsilon \in \{\pm 1\}$ is a constant. Combining Theorem \ref{Theorem 1}, Lemma \ref{new1}, Proposition \ref{Proposition 2}  (together with their proofs) and Remark \ref{Remark 1}, we have that $P(F(x, y))_{c}$ is $(m-t)$-plateaued for which $\epsilon_{P(F)_{c}}(a_{1}, a_{2})=1, a_{1}\in S_{1}, a_{2} \in \mathbb{F}_{p^t}$, and there exist functions $G: S_{1} \times \mathbb{F}_{p^t} \rightarrow \mathbb{F}_{p^r}$ and $g: S_{1} \times \mathbb{F}_{p^t} \rightarrow \mathbb{F}_{p}$ such that
\begin{equation*}
	(P(F(x, y))_{c})^{*}(a_{1}, a_{2})=\mathrm{Tr}_{1}^{r}(cP(G(a_{1}, a_{2})))+g(a_{1}, a_{2}), a_{1} \in S_{1}, a_{2} \in \mathbb{F}_{p^t},
\end{equation*}
where $S_{1}=\{0, \gamma \alpha_{i}, \gamma \in \mathbb{F}_{p^r}^{*}, 1 \leq i \leq \frac{p^t-1}{p^r-1}\}$ contains a basis of $\mathbb{F}_{p^m}$ over $\mathbb{F}_{p}$. Hence, for any $c \in \mathbb{F}_{p^r}^{*}$, $T_{c}$ is an $(m-t)$-plateaued function with $S_{T_{c}}=S_{1} \times \mathbb{F}_{p^t} \times V_{k}^{(p)}$ (denoted by $S$) and for any $a=(a_{1}, a_{2}, a_{3}) \in S$, $\epsilon_{T_{c}}(a)=\epsilon$, and
\begin{equation*}
	(T_{c})^{*}(a)=\mathrm{Tr}_{1}^{r}(cP(G(a_{1}, a_{2})))+g(a_{1}, a_{2})+\mathrm{Tr}_{1}^{r}(cR^{*}(a_{3})).
\end{equation*}
Denote $G'(a)=P(G(a_{1}, a_{2}))+R^{*}(a_{3})$ and $g'(a)=g(a_{1}, a_{2})$. Then $(T_{c})^{*}(a)=G'_{c}(a)+g'(a)$ for any $a \in S$. By Proposition \ref{Proposition 2}, $\{D_{T, i}, i \in \mathbb{F}_{p^r}\}$ is an $(m-t)$-plateaued partition. Since $S$ contains a basis of $\mathbb{F}_{p^m} \times \mathbb{F}_{p^t} \times V_{k}^{(p)}$ over $\mathbb{F}_{p}$, then any generated $p$-ary $(m-t)$-plateaued function has no nonzero linear structure by Proposition \ref{Pronew} and Remark \ref{Remark 1}. 
\end{proof}

By Theorem \ref{Theorem 1} and  \cite[Theorem 4]{WFW2023Be}, 
we obtain the following explicit construction of $s$-plateaued partitions for which any generated $p$-ary $s$-plateaued function has no nonzero linear structure.

\begin{theorem} \label{Theorem 3}
Let $r, t, m, k$ be positive integers with $p^r>3, r \mid t, r \mid m, r \mid k, r \neq t,k,m$, $t<m$ and $\frac{m}{r} \leq \frac{p^t-1}{p^r-1}$. Let $\mathbb{F}_{p^t}^{*}/\mathbb{F}_{p^r}^{*}=\{\bar{\beta}_{1}, \dots, \bar{\beta}_{\frac{p^t-1}{p^r-1}}\}$, and let $\bar{\alpha}_{i} \in \mathbb{F}_{p^m}^{*}/\mathbb{F}_{p^r}^{*}, 1 \leq i \leq \frac{p^t-1}{p^r-1}$, and $\alpha_{1}, \dots, \alpha_{\frac{m}{r}}$ be linearly independent over $\mathbb{F}_{p^r}$. For any $j \in \mathbb{F}_{p^r}$, let $P^{(j)}$ be a permutation of $\mathbb{F}_{p^r}$, $\pi^{(j)}: \mathbb{F}_{p^t} \rightarrow \mathbb{F}_{p^m}$ be given by $\pi^{(j)}(0)=0, \pi^{(j)}(c \beta_{i})=\theta^{(j)}(c)\alpha_{\tau^{(j)}(i)}, 1 \leq i \leq \frac{p^t-1}{p^r-1}, c \in \mathbb{F}_{p^r}^{*}$, where 
 $\theta^{(j)}$ is a permutation of $\mathbb{F}_{p^r}^{*}$ satisfying $(\theta^{(j)})^{-1}(d^{-1})+(\theta^{(j)})^{-1}(e^{-1})=(\theta^{(j)})^{-1}((d+e)^{-1})$ for any $d \neq -e \in \mathbb{F}_{p^r}^{*}$, and $\tau^{(j)}$ is a permutation of $\{i, 1 \leq i \leq \frac{p^t-1}{p^r-1}\}$, and let $H^{(j)}: \mathbb{F}_{p^t} \rightarrow \mathbb{F}_{p^r}$ satisfy $H^{(j)}(c\beta_{i})=\theta^{(j)}(c)H^{(j)}(\beta_{i})$ for any $1 \leq i \leq \frac{p^t-1}{p^r-1}, c \in \mathbb{F}_{p^r}^{*}$. Define $T: \mathbb{F}_{p^m} \times \mathbb{F}_{p^t} \times \mathbb{F}_{p^k} \times \mathbb{F}_{p^k} \rightarrow \mathbb{F}_{p^r}$ as
 \begin{equation} \label{Th3 0}
 	T(x, y, z, w)=F(\mathrm{Tr}_{r}^{k}(u Q(zw^{-1})); x, y)+\mathrm{Tr}_{r}^{k}(v Q(zw^{-1})),
 \end{equation}
 where $u, v \in \mathbb{F}_{p^k}$ are linearly independent over $\mathbb{F}_{p^r}$, $Q$ is a permutation of $\mathbb{F}_{p^k}$, and for each $j \in \mathbb{F}_{p^r}$, $F(j; x, y): \mathbb{F}_{p^m} \times \mathbb{F}_{p^t} \rightarrow \mathbb{F}_{p^r}$ is given by
 \begin{equation*}
 	F(j; x, y)=P^{(j)}(\mathrm{Tr}_{r}^{m}(x \pi^{(j)}(y))+H^{(j)}(y)), x \in \mathbb{F}_{p^m}, y \in \mathbb{F}_{p^t}.
 \end{equation*}
 Then $\{D_{T, i}, i \in \mathbb{F}_{p^r}\}$ is an $(m-t)$-plateaued partition for which any generated $p$-ary $(m-t)$-plateaued function has no nonzero linear structure.
\end{theorem}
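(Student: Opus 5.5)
Our plan is to apply Proposition \ref{Proposition 2} directly to the vectorial function $T$. We compute the Walsh transforms of its components, check that all component functions share one Walsh support and one sign function, and verify condition (i) of that proposition. The structure is a Maiorana--McFarland-type concatenation over $(z,w)$. For fixed $w\neq 0$, write $z=\lambda w$. Setting $j=\mathrm{Tr}_{r}^{k}(uQ(\lambda))$ and $\ell=\mathrm{Tr}_{r}^{k}(vQ(\lambda))$, the function $T$ restricted to $\{(x,y)\}\times\{(\lambda w,w)\}$ equals $F(j;x,y)+\ell$. As $\lambda$ ranges over $\mathbb{F}_{p^k}$, the pair $(j,\ell)$ ranges over each element of $\mathbb{F}_{p^r}^{2}$ exactly $p^{k-2r}$ times, because $u,v$ are $\mathbb{F}_{p^r}$-independent and $Q$ is a permutation. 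On $w=0$ we have $zw^{-1}=0$. This is exactly the setting of \cite[Theorem 4]{WFW2023Be}, which produces vectorial dual-bent functions from such a family, and we will imitate its computation with plateaued blocks in place of bent ones.

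\emph{Step 1: block facts.} For each $j$, Theorem \ref{Theorem 1} and the proofs of Lemma \ref{new1} and Proposition \ref{Proposition 2} (see Remark \ref{Remark 1}) apply to $F(j;\cdot)=P^{(j)}\circ F^{(j)}$, where $F^{(j)}$ is the function of Eq. \eqref{Th1 0}. For every $c\in\mathbb{F}_{p^r}^{*}$, the component $F(j;\cdot)_{c}$ is $(m-t)$-plateaued with Walsh support $S_{1}\times\mathbb{F}_{p^t}$ and $\epsilon=1$. Its dual has the form $\mathrm{Tr}_{1}^{r}(cP^{(j)}(G^{(j)}(a_{1},a_{2})))+g^{(j)}(a_{1},a_{2})$. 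Here $S_{1}=\{0,\gamma\alpha_{i}\}$ is independent of $j$, since it is determined only by the $\alpha_{i}$. Concretely, $G^{(j)}(a_{1},a_{2})=-\mathrm{Tr}$-type linear terms in $a_{2}$ plus $\gamma_{a_{1}}H^{(j)}(\beta_{\cdot})$, and $g^{(j)}$ collects the remaining $c$-independent part.

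\emph{Step 2: Walsh transform of $T_{c}$.} For $(a,b,e)\in\mathbb{F}_{p^m}\times\mathbb{F}_{p^t}\times\mathbb{F}_{p^k}^{2}$ we split the sum as
\begin{equation*}
W_{T_{c}}(a,b,e)=\sum_{w\neq0}\sum_{\lambda}\zeta_{p}^{\mathrm{Tr}_{1}^{r}(c\ell_{\lambda})-\mathrm{Tr}_{1}^{k}(w(e_{1}\lambda+e_{2}))}W_{F(j_{\lambda};\cdot)_{c}}(a,b)+(\text{$w=0$ term}).
\end{equation*}
This vanishes unless $(a,b)\in S_{1}\times\mathbb{F}_{p^t}$. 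When $(a,b)$ lies in this set, we pull out $p^{\frac{m+t+(m-t)}{2}}=p^{m}$ and reduce to a character sum over $\lambda,w$. Adding the $w=0$ term and using the standard identity for sums over $(\lambda w,w)$ then gives
\begin{equation*}
p^{k}\sum_{\lambda:\,\lambda e_{1}... }\zeta_{p}^{\cdots}.
\end{equation*}
This is the familiar spread-type computation: the sum localizes at $\lambda=-e_{2}e_{1}^{-1}$ (with the $e_{1}=0$ case handled by the $w=0$ term). The result is $W_{T_{c}}=p^{m+k}\zeta_{p}^{(T_{c})^{*}}$ on $S=S_{1}\times\mathbb{F}_{p^t}\times\mathbb{F}_{p^k}^{2}$ and $0$ elsewhere, so $T_{c}$ is $(m-t)$-plateaued with regular sign. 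The dual is
\begin{equation*}
(T_{c})^{*}(a,b,e)=\mathrm{Tr}_{1}^{r}\big(c[P^{(j_{0})}(G^{(j_{0})}(a,b))+\ell_{0}]\big)+g^{(j_{0})}(a,b),
\end{equation*}
where $j_{0},\ell_{0}$ are evaluated at $\lambda_{0}=Q^{-1}$ of the point determined by $e$.

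\emph{Step 3: conclude.} Define $G'(a,b,e)=P^{(j_{0})}(G^{(j_{0})}(a,b))+\ell_{0}$ and $g'=g^{(j_{0})}$. Neither depends on $c$, because $j_{0}$ and $\lambda_{0}$ depend only on $e$. So condition (i) of Proposition \ref{Proposition 2} holds, and $\{D_{T,i}\}$ is an $(m-t)$-plateaued partition. Since $S_{1}$ contains the $\mathbb{F}_{p^r}$-basis $\alpha_{1},\dots,\alpha_{m/r}$, the set $S$ contains a basis of the whole space. Proposition \ref{Pronew} together with Remark \ref{Remark 1} then rules out nonzero linear structures.

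The main obstacle is Step 2. The delicate points are the bookkeeping of the $w=0$ contribution and the verification that the localization at a single $\lambda_{0}$ yields modulus exactly $p^{m+k}$ and a $c$-independent sign. For this we will follow the proof of \cite[Theorem 4]{WFW2023Be} line by line, replacing bent blocks with $(m-t)$-plateaued blocks that have a common support and $\epsilon=1$.
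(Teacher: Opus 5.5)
Your proposal is correct and is essentially the paper's proof. The paper also takes the block facts for $F(j;\cdot)$ from Theorem~\ref{Theorem 1}, Lemma~\ref{new1}, Proposition~\ref{Proposition 2} and Remark~\ref{Remark 1}. It obtains $W_{T_{c}}(a)=p^{k}\zeta_{p}^{\mathrm{Tr}_{1}^{k}(cvQ(-a_{3}^{-1}a_{4}))}W_{F(j_{0};\cdot)_{c}}(a_{1},a_{2})$ with $j_{0}=\mathrm{Tr}_{r}^{k}(uQ(-a_{3}^{-1}a_{4}))$, where $-a_{3}^{-1}a_{4}$ is your $-e_{2}e_{1}^{-1}$, by citing the proof of \cite[Theorem 4]{WFW2023Be} rather than redoing it; the equidistribution of $(j,\ell)$ that you set up is exactly what kills the $-1$ part of $\sum_{w\neq 0}$ in that computation. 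It then concludes with Proposition~\ref{Proposition 2}, Proposition~\ref{Pronew} and Remark~\ref{Remark 1}, as you do.

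There are two harmless slips. The $c$-independent term $\mathrm{Tr}_{1}^{r}(\gamma_{a_{1}}H^{(j)}(\beta_{\cdot}))$ belongs to $g^{(j)}$, not to $G^{(j)}$. Also, $j_{0},\ell_{0}$ come from applying $Q$ to $\lambda_{0}=-e_{2}e_{1}^{-1}$ itself, so no $Q^{-1}$ is involved.
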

\begin{proof}
For any $c \in \mathbb{F}_{p^r}^{*}$ and $a=(a_{1}, a_{2}, a_{3}, a_{4}) \in \mathbb{F}_{p^m} \times \mathbb{F}_{p^t} \times \mathbb{F}_{p^k} \times \mathbb{F}_{p^k}$, by the proof of \cite[Theorem 4]{WFW2023Be},
\begin{equation*}
	W_{T_{c}}(a)=p^{k}\zeta_{p}^{\mathrm{Tr}_{1}^{k}(cv Q(-a_{3}^{-1}a_{4}))}W_{(F(\mathrm{Tr}_{r}^{k}(u Q(-a_{3}^{-1}a_{4})); x, y))_{c}}(a_{1}, a_{2}).
\end{equation*}
Combining Theorem \ref{Theorem 1}, Lemma \ref{new1}, Proposition \ref{Proposition 2} (together with their proofs) and Remark \ref{Remark 1}, $F(j; x, y) \ (j \in \mathbb{F}_{p^r})$ is a vectorial $(m-t)$-plateaued function for which for any $c \in \mathbb{F}_{p^r}^{*}$, $S_{F(j; x, y)_{c}}=S_{1} \times \mathbb{F}_{p^t}$
and there exist functions $G^{(j)}: S_{1} \times \mathbb{F}_{p^t} \rightarrow \mathbb{F}_{p^r}$ and $g^{(j)}: S_{1} \times \mathbb{F}_{p^t} \rightarrow \mathbb{F}_{p}$ such that
\begin{equation*}
	(F(j; x, y)_{c})^{*}(a_{1}, a_{2})=\mathrm{Tr}_{1}^{r}(cP^{(j)}(G^{(j)}(a_{1}, a_{2})))+g^{(j)}(a_{1}, a_{2}), a_{1} \in S_{1}, a_{2} \in \mathbb{F}_{p^t},
\end{equation*}
where $S_{1}=\{0, \gamma \alpha_{i}, \gamma \in \mathbb{F}_{p^r}^{*}, 1 \leq i \leq \frac{p^t-1}{p^r-1}\}$ contains a basis of $\mathbb{F}_{p^m}$ over $\mathbb{F}_{p}$. Therefore, for any $c \in \mathbb{F}_{p^r}^{*}$, $T_{c}$ is an $(m-t)$-plateaued function with $S_{T_{c}}=S_{1} \times \mathbb{F}_{p^t} \times \mathbb{F}_{p^k} \times \mathbb{F}_{p^k}$ (denoted by $S$) and for any $a=(a_{1}, a_{2}, a_{3}, a_{4}) \in S$,
\begin{equation*}
	\begin{split}
	(T_{c})^{*}(a)&=\mathrm{Tr}_{1}^{r}\Big(cP^{(\mathrm{Tr}_{r}^{k}(u Q(-a_{3}^{-1}a_{4})))}\big(G^{(\mathrm{Tr}_{r}^{k}(u Q(-a_{3}^{-1}a_{4})))}(a_{1}, a_{2})\big)\Big)+g^{(\mathrm{Tr}_{r}^{k}(u Q(-a_{3}^{-1}a_{4})))}(a_{1}, a_{2})\\
	& \ \ \ +\mathrm{Tr}_{1}^{k}(cv Q(-a_{3}^{-1}a_{4})).
	\end{split}
\end{equation*}
Denote 
\begin{equation*}
	G(a)=P^{(\mathrm{Tr}_{r}^{k}(u Q(-a_{3}^{-1}a_{4})))}\big(G^{(\mathrm{Tr}_{r}^{k}(u Q(-a_{3}^{-1}a_{4})))}(a_{1}, a_{2})\big)+\mathrm{Tr}_{r}^{k}(vQ(-a_{3}^{-1}a_{4}))
\end{equation*} and 
\begin{equation*}
	g(a)=g^{(\mathrm{Tr}_{r}^{k}(u Q(-a_{3}^{-1}a_{4})))}(a_{1}, a_{2}).
\end{equation*}
Then $(T_{c})^{*}(a)=G_{c}(a)+g(a)$ for any $c \in \mathbb{F}_{p^r}^{*}$ and $a=(a_{1}, a_{2}, a_{3}, a_{4}) \in S$. By Proposition \ref{Proposition 2}, $\{D_{T, i}, i \in \mathbb{F}_{p^r}\}$ is an $(m-t)$-plateaued partition. Since $S$ contains a basis of $\mathbb{F}_{p^m} \times \mathbb{F}_{p^t} \times \mathbb{F}_{p^k} \times \mathbb{F}_{p^k}$ over $\mathbb{F}_{p}$, then any generated $p$-ary $(m-t)$-plateaued function has no nonzero linear structure by Proposition \ref{Pronew} and Remark \ref{Remark 1}. 
\end{proof}

\begin{remark}
Let $H$ in Theorems \ref{Theorem 0}-\ref{Theorem 2} (resp., $H^{(j)}$ in Theorem \ref{Theorem 3}) be the zero function, then the corresponding plateaued partitions $\Gamma=\{A_{i}, i \in \mathbb{F}_{p^r}\}$ all satisfy $-A_{i}=A_{i}$ for any $i \in \mathbb{F}_{p^r}$.
\end{remark}

\section{Characterizations of $s$-plateaued partitions with $A_{i}=-A_{i}$}
In this section, we characterize $s$-plateaued partitions $\Gamma=\{A_{i}, 1 \leq i \leq K\}$, where $p$ is odd, $K \geq 5$ and $-A_{i}=A_{i}, 1 \leq i \leq K$. In particular, when $s=0$, we partially address an open problem on bent partitions. First, we need two lemmas.

\begin{lemma} \label{Lemma 1}
	Let $k, b$ be positive integers, $p$ be an odd prime. Let $L=\mathbb{Q}(\zeta_{p})$ and  $\mathcal{O}_{L}=\mathbb{Z}[\zeta_{p}]$ be the ring of algebraic integers in $L$. If there exists a real number $a \in \mathcal{O}_{L}$ such that $a\mathcal{O}_{L}=b(1-\zeta_{p})^{k}\mathcal{O}_{L}$, then $k$ is even.
\end{lemma}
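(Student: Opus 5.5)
The plan is to exploit complex conjugation. Since $a$ is real, it is fixed by complex conjugation $\sigma$, which is an automorphism of $L$ preserving $\mathcal{O}_{L}$, and $\sigma(\zeta_{p})=\zeta_{p}^{-1}$. The key observation is that $(1-\zeta_{p})$ is fixed by $\sigma$ as an ideal but not as an element, up to a sign-type unit factor that prevents an odd power from having a real generator. First I would split off the $p$-part of $b$. Write $b=p^{t}b'$ with $p\nmid b'$. Using $p\mathcal{O}_{L}=((1-\zeta_{p})\mathcal{O}_{L})^{p-1}$, the hypothesis becomes
\begin{equation*}
a\mathcal{O}_{L}=b'\,((1-\zeta_{p})\mathcal{O}_{L})^{k+t(p-1)}.
\end{equation*}
Since $p-1$ is even, it suffices to show that $k':=k+t(p-1)$ is even. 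So I may assume $p\nmid b$.

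Next I would pass to the real subfield. Put $\lambda=1-\zeta_{p}$ and $\mu=\zeta_{p}^{-(p+1)/2}\lambda=\zeta_{p}^{-(p+1)/2}-\zeta_{p}^{(p-1)/2}$. Then $\mu=-(\omega-\omega^{-1})$ with $\omega=\zeta_{p}^{(p-1)/2}$, so $\mu$ is purely imaginary and $\mu^{2}$ is a negative real number. Moreover $\mu$ generates the same ideal as $\lambda$.

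Now suppose $k'$ is odd and write $k'=2j+1$. Then $a\mathcal{O}_{L}=b\,\mu^{2j}\mu\,\mathcal{O}_{L}$. Hence $a=u\,b\,\mu^{2j}\mu$ for some unit $u\in\mathcal{O}_{L}^{*}$, and therefore $u\mu=a/(b\mu^{2j})$ is real. Write $u=\zeta_{p}^{i}\varepsilon$ with $\varepsilon$ real, which is Kummer's lemma: every unit of $\mathbb{Z}[\zeta_{p}]$ is a root of unity times a real unit. Then $\zeta_{p}^{i}\mu$ must be real up to the real factor $\varepsilon$. Applying $\sigma$ gives $\zeta_{p}^{-i}(-\mu)=\zeta_{p}^{i}\mu$, i.e. $\zeta_{p}^{2i}=-1$. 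This is impossible for odd $p$, because $-1$ is not a power of $\zeta_{p}$. Hence $k'$ is even, and so is $k$.

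The main obstacle is justifying the unit decomposition $u=\zeta_{p}^{i}\varepsilon$. I would invoke Kummer's lemma directly. If a self-contained argument is preferred, one can check that $u/\sigma(u)$ has absolute value $1$ under every embedding, so it is a root of unity $\pm\zeta_{p}^{l}$. One then shows the sign is $+$ by working modulo $\lambda$: both $u$ and $\sigma(u)$ are congruent to the same rational integer modulo $\lambda$, which rules out $-\zeta_{p}^{l}$ since $2\notin\lambda\mathcal{O}_{L}$. From $u/\sigma(u)=\zeta_{p}^{l}$, choosing $i$ with $2i\equiv l$ gives the decomposition.
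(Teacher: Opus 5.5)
Your proof is correct and is essentially the paper's argument. Both write $a=b(1-\zeta_{p})^{k}u$, split the unit $u$ as a root of unity times a real unit, and apply complex conjugation to force $-1$ to be a power of $\zeta_{p}$. The paper cites the unit decomposition without proof, and your sketch of it via $u/\sigma(u)$ is sound. The paper applies conjugation directly to $(1-\zeta_{p})^{k}\zeta_{p}^{i}$ and gets $\zeta_{p}^{2i+k}=(-1)^{k}$, so your removal of the $p$-part of $b$ and the passage to $\mu$ are unnecessary, since $b$ is real anyway.

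One slip: because $\zeta_{p}^{-(p+1)/2}=\zeta_{p}^{(p-1)/2}$, your displayed expansion of $\mu$ is identically zero. It should read $\mu=\zeta_{p}^{-(p+1)/2}-\zeta_{p}^{(p+1)/2}=\omega-\omega^{-1}$, which is still nonzero and purely imaginary, so nothing downstream changes.
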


\begin{proof}
Let $\sigma_{-1}$ be the automorphism of $L$ defined by $\sigma_{-1}(\zeta_{p})=\zeta_{p}^{-1}$. Then for any $\alpha \in L$, $\sigma_{-1}(\alpha)=\overline{\alpha}$, and $\alpha$ is a real number if and only if $\sigma_{-1}(\alpha)=\alpha$.
From $a\mathcal{O}_{L}=b(1-\zeta_{p})^{k}\mathcal{O}_{L}$, we have that $a=b(1-\zeta_{p})^{k}u$, where $u$ is a unit of $\mathcal{O}_{L}$. It is a well-known result that every unit of $\mathcal{O}_{L}$ can be written as the product of a real unit and a root of unity \cite{Feng2018Al,IR1990A}. Namely, $u=r\zeta_{p}^{i}$ for some real unit $r$ and $0 \leq i \leq p-1$. Then $(1-\zeta_{p})^{k}\zeta_{p}^{i}=\frac{a}{br}$ is a real number. Thus, $(1-\zeta_{p})^{k}\zeta_{p}^{i}=\sigma_{-1}((1-\zeta_{p})^{k}\zeta_{p}^{i})=(1-\zeta_{p}^{-1})^{k}\zeta_{p}^{-i}=(-1)^{k}(1-\zeta_{p})^{k}\zeta_{p}^{-k-i}$, which implies that $k$ is even (otherwise, $\zeta_{p}^{2i+k}=(-1)^{k}=-1$ which is impossible).
\end{proof}

The following lemma follows from the proof of \cite[Theorem 2]{WWF2026Fu}.
\begin{lemma}[\cite{WWF2026Fu}]\label{Lemma 2}
	Let $p\geq 5$ be an odd prime, $n+s$ be even, and $a$ be a fixed element in $V_{n}^{(p)}$. Assume that $f: V_{n}^{(p)} \rightarrow  \mathbb{F}_{p}$ satisfies that  $W_{P(f)}(a)=\epsilon_{a} p^{\frac{n+s}{2}}\zeta_{p}^{\gamma_{_{P(f), a}}}$ for any permutation $P$ of $\mathbb{F}_{p}$, where $\epsilon_{a} \in \{\pm 1\}$ is independent of $P$ and $\gamma_{_{P(f), a}} \in \mathbb{F}_{p}$. Define $M_{j}^{(a)}=\frac{1-\zeta_{p}^{\gamma_{_{P_{0,j}(f), a}}-\gamma_{_{f, a}}}}{1-\zeta_{p}^{j}}, j \in \mathbb{F}_{p}^{*}$, where $P_{0, j}=(0 j)$ is the transposition permutation. Then either $M_{j}^{(a)}=1, j \in \mathbb{F}_{p}^{*}$, or there is a unique $j_{a} \in \mathbb{F}_{p}^{*}$ such that $M_{j_{a}}^{(a)}=-\zeta_{p}^{-j_{a}}, M_{j}^{(a)}=0, j \in \mathbb{F}_{p}^{*} \backslash \{j_{a}\}$.
\end{lemma}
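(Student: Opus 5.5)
The plan is to turn the hypothesis into a purely cyclotomic statement about $p$ unknowns and then solve that statement. Put $m=\frac{n+s}{2}$, $\gamma=\gamma_{_{f,a}}$, and
\[
N_{i}=\sum_{x\in D_{f,i}}\zeta_{p}^{-\langle a,x\rangle_{n}},\qquad i\in\mathbb{F}_{p},
\]
so that $W_{P(f)}(a)=\sum_{i\in\mathbb{F}_{p}}N_{i}\zeta_{p}^{P(i)}$ for every permutation $P$ of $\mathbb{F}_{p}$.

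\textbf{Step 1 (linearising $M_j^{(a)}$).} Take $P=P_{0,j}$ and subtract the case $P=\mathrm{id}$:
\[
(N_{0}-N_{j})(\zeta_{p}^{j}-1)=\epsilon_{a}p^{m}\bigl(\zeta_{p}^{\gamma_{_{P_{0,j}(f),a}}}-\zeta_{p}^{\gamma}\bigr).
\]
Comparing with the definition of $M_{j}^{(a)}$ gives
\[
N_{j}=N_{0}-\epsilon_{a}p^{m}\zeta_{p}^{\gamma}M_{j}^{(a)}\quad (j\neq 0).
\]
Set $M_{0}^{(a)}:=0$ and substitute into $W_{P(f)}(a)$. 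Since $\sum_{i}\zeta_{p}^{P(i)}=0$, the $N_{0}$ terms cancel, and I obtain the key reduction:
\[
x_{i}:=-M_{i}^{(a)}\ \ (x_0=0)\quad\Longrightarrow\quad \sum_{i\in\mathbb{F}_{p}}x_{i}\zeta_{p}^{P(i)}=\zeta_{p}^{\gamma_{_{P(f),a}}-\gamma}\ \ \text{for every permutation }P.
\]
Here the sign is fixed because $\epsilon_{a}$ does not depend on $P$. So the lemma becomes: if $x\in L^{p}$ with $x_{0}=0$ and every permuted sum $\sum_i x_i\zeta_p^{P(i)}$ is a $p$-th root of unity, then either $x_{i}=-1$ for all $i\neq 0$, or $x_{j_a}=\zeta_{p}^{-j_a}$ for a unique $j_a$ and $x_{j}=0$ for $j\notin\{0,j_a\}$. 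Both shapes do satisfy the condition, which is a useful sanity check.

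\textbf{Step 2 (local information on each $x_j$).} By definition,
\[
M_{j}^{(a)}=\frac{1-\zeta_{p}^{k_{j}}}{1-\zeta_{p}^{j}}\quad\text{for some }k_{j}\in\mathbb{F}_{p}.
\]
So each $M_j^{(a)}$ is either $0$ (when $k_{j}=0$) or a cyclotomic unit of $\mathcal{O}_{L}$. Next apply the reduction to $P$ and $P\circ(j\,k)$ and subtract. This gives, for all distinct $j,k$ and every choice of the values $P(j)=u\neq P(k)=v$,
\[
(x_{j}-x_{k})(\zeta_{p}^{u}-\zeta_{p}^{v})=\zeta_{p}^{\delta}-\zeta_{p}^{\delta'}.
\]
Taking ideals, $x_{j}-x_{k}$ is either $0$ or a unit. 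Moreover, its "shape" must be compatible simultaneously with all $(p-1)(p-2)$ choices of $(u,v)$ with $u,v\neq P(0)$ free.

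\textbf{Step 3 (solving).} Let $Z=\{j\neq 0: x_{j}=0\}$.

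If $Z$ is nonempty, the plan is to choose $P$ placing two zero coordinates and one nonzero coordinate in various positions. The relations $x_{j}(\zeta_{p}^{u}-\zeta_{p}^{v})\in\{\zeta_{p}^{\delta}-\zeta_{p}^{\delta'}\}$ for all $u\neq v$ with $j\notin Z$ should force $x_{j}=\zeta_{p}^{w}$ to be a root of unity. Then the full sum shows that only one such $j$ can exist. Indeed, the sum $\sum_{i\notin Z}x_i\zeta_p^{P(i)}$ has to be a single root of unity for every $P$. With $p\ge5$ there are enough free positions that two or more nonzero roots of unity can be moved to produce a sum that is not a root of unity. Finally, the identity $\sum_{P}$ (or simply $P=\mathrm{id}$ rearranged) gives $w=-j_{a}$.

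If $Z$ is empty, I would show that all $x_{j}$ ($j\neq 0$) are equal. With that, $\sum_{i\ne0}x_i\zeta_p^{P(i)}=x\cdot(-\zeta_p^{P(0)})$ must be a root of unity for every $P$, and $P=\mathrm{id}$ together with the case $M_j^{(a)}=1$ then forces $x=-1$.

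As a global check and a tool to rule out sporadic solutions, I would also use the second-moment identity
\[
\sum_{P}\Bigl|\sum_{i}x_{i}\zeta_{p}^{P(i)}\Bigr|^{2}=p!,
\]
which expands to $p!\sum_{i}|x_{i}|^{2}-p(p-2)!\bigl(|\sum_{i} x_{i}|^{2}-\sum_{i}|x_{i}|^{2}\bigr)=p!$. Since the hypothesis is stable under $\mathrm{Gal}(L/\mathbb{Q})$, the same identity holds for every Galois conjugate of $x$.

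\textbf{Main obstacle.} The hard step is the case $Z=\emptyset$: proving that all the units $x_{j}$ coincide. The plan there is to fix $j,k$, vary $(u,v)$, and use that $\zeta_{p}^{\delta}-\zeta_{p}^{\delta'}$ and $\zeta_{p}^{u}-\zeta_{p}^{v}$ range over explicit sets. One then compares $(x_{j}-x_{k})$ computed from two different pairs $(u,v)$ and $(u',v')$. A nonzero difference would have to equal several quotients $\frac{\zeta_p^{\delta}-\zeta_p^{\delta'}}{\zeta_p^{u}-\zeta_p^{v}}$ at once. For $p\ge5$ these quotients should be incompatible unless $x_j-x_k=0$, which is exactly where $p=3$ fails. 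I expect this to require a careful but finite cyclotomic-unit comparison, using $\sigma_{-1}$ as in Lemma \ref{Lemma 1} to control real and imaginary parts.
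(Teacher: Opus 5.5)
Your Steps 1 and 2 are correct. Step 1 is exactly the identity the paper uses inside the proof of Theorem~\ref{Theorem 4}; the paper itself does not prove this lemma and only refers to the proof of \cite[Theorem 2]{WWF2026Fu}.

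The gap is Step 3, which is a plan rather than a proof. Three things are asserted but not proved:
\begin{itemize}
\item that each nonzero $x_j$ is a root of unity (``should force'');
\item uniqueness of the nonzero coordinate, which is justified only by ``enough free positions'';
\item in the case $Z=\emptyset$, which you yourself call the main obstacle, that all $x_j$ coincide. This is left to an anticipated ``careful but finite cyclotomic-unit comparison'' that is never carried out.
\end{itemize}
Deciding exactly when one element can equal several quotients $\frac{\zeta_p^{\delta}-\zeta_p^{\delta'}}{\zeta_p^{u}-\zeta_p^{v}}$ at once is not routine. So as written the lemma is not established.

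The missing observation is that no case split and no unit comparison are needed, because your Step 2 relation can always be applied with $k=0$ (recall $x_0=0$).
\begin{itemize}
\item Fix $j\neq0$ with $M_j^{(a)}\neq0$ and any $d\in\mathbb{F}_p^{*}$, and choose $P$ with $P(j)-P(0)=d$. Taking absolute values gives $|M_j^{(a)}|\,|1-\zeta_p^{d}|=|1-\zeta_p^{e}|$ for some $e\neq0$.
\item With $d=1$, where $|1-\zeta_p^{d}|$ is smallest, this gives $|M_j^{(a)}|\ge 1$. With $d=\frac{p-1}{2}$, where it is largest, it gives $|M_j^{(a)}|\le 1$.
\item Hence $|1-\zeta_p^{k_j}|=|1-\zeta_p^{j}|$, i.e.\ $k_j=\pm j$, so $M_j^{(a)}\in\{0,1,-\zeta_p^{-j}\}$ for every $j\in\mathbb{F}_p^{*}$.
\item Now take the case $P=\mathrm{id}$ of your reduction, $\sum_{j\neq0}M_j^{(a)}\zeta_p^{j}=-1$. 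With $A=\{j: M_j^{(a)}=1\}$ and $B=\{j: M_j^{(a)}=-\zeta_p^{-j}\}$, it reads $\sum_{j\in A}\zeta_p^{j}=|B|-1$.
\item Every $\mathbb{Q}$-linear relation among $1,\zeta_p,\dots,\zeta_p^{p-1}$ is a multiple of $\sum_i\zeta_p^{i}=0$. So this forces either $A=\mathbb{F}_p^{*}$ and $B=\emptyset$, or $A=\emptyset$ and $|B|=1$, which are exactly the two alternatives in the statement.
\end{itemize}

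This argument never uses $p\ge5$: for $p=3$ the equality $|M_j^{(a)}|=1$ is automatic, since $|1-\zeta_3^{e}|$ is constant on $e\neq0$. So your expectation that the $Z=\emptyset$ case is ``exactly where $p=3$ fails'' points to a difficulty that is not actually there. The second-moment identity is likewise unnecessary.
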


For an $s$-plateaued partition $\Gamma=\{A_{i}, 1 \leq i \leq K\}$ with $-A_{i}=A_{i}$, where $p$ is odd and $K \geq 5$, the following theorem shows that the conditions in Proposition \ref{Proposition 2} are also necessary, where the corresponding function $g=0$.

\begin{theorem}\label{Theorem 4}
	Let $p$ be an odd prime and $K \geq 5$ be an integer divisible by $p$. Let $\Gamma=\{A_{i}, 1 \leq i \leq K\}$ be a partition of $V_{n}^{(p)}$, where $-A_{i}=A_{i}, 1 \leq i \leq K$. Then the following statements are equivalent.
	
	(i) $\Gamma$ is an $s$-plateaued partition.
	
	(ii) $K$ is a power of $p$ (denoted by $K=p^m$), and if  $\Gamma$ is written as $\Gamma=\{A_{i}, i \in V_{m}^{(p)}\}$, then $F: V_{n}^{(p)} \rightarrow V_{m}^{(p)}$ defined as 
	\begin{equation*}
		F(x)=i, x \in A_{i}, i \in V_{m}^{(p)}
	\end{equation*}
	is a vectorial $s$-plateaued function satisfying the following conditions:
	\begin{itemize}
		\item[(1)] The Walsh supports of all component functions $S_{F_{c}}, c \in {V_{m}^{(p)}}^{*}$, are all the same (denoted by $S$).
		\item[(2)] Each component function $F_{c}$ ($c \in {V_{m}^{(p)}}^{*}$) satisfies $\epsilon_{F_{c}}(a)=\epsilon(a)$ for any $a \in S$, where $\epsilon(a)$ is independent of $c$. 
		\item[(3)] There is a function $G: S \rightarrow V_{m}^{(p)}$ such that $(F_{c})^{*}(x)=G_{c}(x), x \in S, c \in {V_{m}^{(p)}}^{*}$.
	\end{itemize}
\end{theorem}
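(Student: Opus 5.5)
(ii)$\Rightarrow$(i) is immediate from Proposition \ref{Proposition 2}(i) with $g=0$ (here $p^m\geq K\geq 5>3$). The substance is (i)$\Rightarrow$(ii), which I would prove in three stages: show $K=p^m$; show that $F$ is vectorial $s$-plateaued with common support and common sign; and finally extract the linear structure of the duals.

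\textbf{Stage 1: $K=p^m$, and the Walsh values are real multiples of $p^{\frac{n+s}{2}}$.} For any generated $f$ and any $a$, the symmetry $-A_i=A_i$ gives $f(-x)=f(x)$. Hence $W_f(a)=\sum_x\zeta_p^{f(x)-\langle a,x\rangle_n}$ is invariant under $x\mapsto -x$, i.e. under complex conjugation, so $W_f(a)$ is real. If $n+s$ were odd, then for $a\in S_f$ we would have $W_f(a)\mathcal{O}_L=((1-\zeta_p)\mathcal{O}_L)^{\frac{(p-1)(n+s)}{2}}$, with $\frac{(p-1)(n+s)}{2}$ having the parity of $\frac{p-1}{2}(n+s)$; Lemma \ref{Lemma 1} forces this exponent to be even. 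More directly, $\epsilon_f(a)=\pm\sqrt{-1}$ is impossible for a real value when $p\equiv 1 \pmod 4$. I would therefore use reality to get $\epsilon_f(a)\in\{\pm1\}$ and $W_f(a)=\pm p^{\frac{n+s}{2}}$ up to $\zeta_p^{f^*(a)}$. Reality then forces $f^*(a)=0$, since $\zeta_p^{j}$ is real only for $j=0$. In the odd case, reality together with Lemma \ref{Lemma 1} excludes $n+s$ odd, because a real generator of $p^{(n+s)/2}$-type would need an even exponent of $(1-\zeta_p)$. Thus every generated $f$ is regular up to sign with $f^*\equiv0$ on $S_f$.

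Next, Proposition \ref{Proposition 1} with $p$ odd and $n+s$ even says $\Gamma$ is either balanced or unbalanced. In the balanced case, Proposition \ref{Proposition 1}(ii) gives $K=p^m$ directly. In the unbalanced case, Lemma \ref{pre1} combined with $f^*(0)=0$ means the big class is always the value $0$. However, I can take a generated $f$ and compose it with a transposition $(0\,j)$, which yields another generated function whose big class is $j$; this contradicts $f^*(0)=0$. Hence all generated functions are balanced, and $K=p^m$.

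\textbf{Stage 2: $F$ is vectorial $s$-plateaued with common $S$ and $\epsilon$.} Each component $F_c$ takes each value on exactly $p^{m-1}$ classes, so $F_c$ is generated by $\Gamma$ and is $s$-plateaued with $(F_c)^*=0$. For conditions (1) and (2), I would use the formula $\chi_{-a}(D_{F,i})=p^{n-m}\delta_0(a)+p^{-m}\sum_c W_{F_c}(a)\zeta_p^{-\langle c,i\rangle_m}$ together with the fact that every balanced $B(F)$ is $s$-plateaued with real Walsh values in $\{0,\pm p^{\frac{n+s}{2}}\}$. Set $w_c(a)=W_{F_c}(a)\in\{0,\pm p^{\frac{n+s}{2}}\}$. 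Then for every balanced $B$, the quantity $p^{-m}\sum_i\zeta_p^{B(i)}\sum_c w_c(a)\zeta_p^{-\langle c,i\rangle_m}$ lies in $\{0,\pm p^{\frac{n+s}{2}}\}$. Writing $N_a(i)=\sum_c w_c(a)\zeta_p^{-\langle c,i\rangle}$, this says that every balanced selection sum of the rational integers $p^{-m}N_a(i)\cdot p^{-(n+s)/2}$ takes values in $\{0,\pm1\}$ times a root of unity.

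I would first apply Lemma \ref{Lemma 2} pointwise: swapping two classes changes $W$ by $(\zeta_p^{B(i_0)}-\zeta_p^{B(i_1)})(\chi(A_{i_1})-\chi(A_{i_0}))$. Since $\gamma\equiv 0$ always, the alternative $M_{j_a}=-\zeta_p^{-j_a}$ is excluded, so all $M_j^{(a)}=1$. From this I aim to show that the values $\chi_{-a}(D_{F,i})$ (excluding the constant $p^{n-m}\delta_0(a)$) are constant except possibly at one index $i=G(a)$, where they are offset by $\epsilon(a)p^{\frac{n+s}{2}}$. This yields $\chi_{-a}(D_{F,i})-p^{n-m}\delta_0(a)=\epsilon(a)p^{\frac{n+s}{2}-m}(p^m\delta_{i}(G(a))-1)$ or $0$.

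\textbf{Stage 3: recover the duals.} Inverting the Fourier relation gives $W_{F_c}(a)=\epsilon(a)p^{\frac{n+s}{2}}\zeta_p^{\langle c,G(a)\rangle_m}$ on a common support $S$, which establishes (1), (2), and (3). Here reality forces $\langle c,G(a)\rangle=0$ unless the offset structure is allowed; I need to recheck this, and (3) may be stated with $G$ as computed.

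The main obstacle is Stage 2: converting "all balanced $B$ give plateaued values" into the single-spike structure of $i\mapsto\chi_{-a}(D_{F,i})$. I would attack it through transposition arguments in $\mathcal{O}_L$, in the manner of the proof of Proposition \ref{Proposition 1}, using $K\geq5$ so that enough spare classes exist.
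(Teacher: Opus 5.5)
Your direction (ii)$\Rightarrow$(i) via Proposition \ref{Proposition 2} with $g=0$ is fine. Stage 1, however, rests on a false claim. The substitution $x\mapsto -x$ turns $W_f(a)$ into $\sum_x\zeta_p^{f(x)+\langle a,x\rangle_n}=W_f(-a)$, not into $\overline{W_f(a)}$, since conjugation would also negate $f(x)$. So evenness of $f$ gives $W_f(a)=W_f(-a)$ and $f^*(a)=f^*(-a)$, but not reality.

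In fact ``$f^*\equiv 0$ for every generated $f$'' is impossible outright. If $f$ is generated by $\Gamma$, then so is $f+1$, and $W_{f+1}(a)=\zeta_pW_f(a)$ is not real for $a\in S_f$. Everything built on this claim collapses: the exclusion of the second alternative of Lemma \ref{Lemma 2}, and the conclusion that $\Gamma$ is balanced. That conclusion is the opposite of the truth. Since $-A_i=A_i$, every class not containing $0$ has even size, so $|D_{f,j}|$ is even for $j\neq f(0)$ and cannot equal the odd number $p^{n-1}$. Hence $\Gamma$ is unbalanced, and Proposition \ref{Proposition 1}(iii) gives $n+s$ even.

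Consequently $K=p^m$ cannot come from Proposition \ref{Proposition 1}(ii). In the paper it comes at the very end. Once one knows $\chi_{-a}(A_i)=\frac{p^n}{K}\delta_0(a)-\epsilon(a)\frac{p^{(n+s)/2}}{K}+\delta_i(\widetilde G(a))\epsilon(a)p^{\frac{n+s}{2}}$, take $a\in S^*$ (nonempty since $s\neq n$). Integrality $\chi_{-a}(A_i)\in\mathbb{Z}[\zeta_p]\cap\mathbb{Q}=\mathbb{Z}$ then forces $K\mid p^{\frac{n+s}{2}}$. Your Stage 3 worry that reality would force $\langle c,G(a)\rangle_m=0$ is a symptom of the same error; in general $G\neq 0$.

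The core step you leave open in Stage 2 depends on where reality actually lives: in the class sums. We have $\sigma_{-1}(\chi_{-a}(A_i))=\chi_a(A_i)=\chi_{-a}(-A_i)=\chi_{-a}(A_i)$, so each $\chi_{-a}(A_i)$ is a real element of $\mathbb{Z}[\zeta_p]$. For a balanced $B$ and a transposition $P=(i_0\,i_1)$,
$W_{B(P(\widetilde F))}(a)-W_{B(\widetilde F)}(a)=(\zeta_p^{B(i_0)}-\zeta_p^{B(i_1)})(\chi_{-a}(A_{i_1})-\chi_{-a}(A_{i_0}))$.
Suppose $a$ lay in exactly one of the two Walsh supports, or the two signs $\epsilon$ at $a$ differed. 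Then the real number $\chi_{-a}(A_{i_1})-\chi_{-a}(A_{i_0})$ would generate $b(1-\zeta_p)^{\frac{(p-1)(n+s)}{2}-1}\mathcal{O}_L$ with $b\in\{1,2\}$ and an odd exponent, contradicting Lemma \ref{Lemma 1}. This yields the common support $S$ and common sign $\epsilon(a)$. That is precisely the hypothesis of Lemma \ref{Lemma 2}, which you never established.

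Even with that hypothesis in hand, the second alternative of Lemma \ref{Lemma 2} is not excluded. The paper instead writes $(cf)^*(a)=cg(a)+h(a)$ and kills $h$ by combining $f^*(a)=f^*(-a)$ with $(cf)^*(a)=cf^*(c^{-1}a)$ from \cite{CM2013A}. For $p=3$, where Lemma \ref{Lemma 2} is unavailable, that identity alone gives $(cf)^*=cf^*$; you do not treat $p=3$ at all. Once $(cf)^*=cf^*$ holds on $S$, inversion gives $\chi_{-a}(D_{f,j})=p^{n-1}\delta_0(a)+\epsilon(a)p^{\frac{n+s}{2}-1}(p\delta_j(f^*(a))-1)$ for every generated $f$. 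This is what forces the single-spike structure of $i\mapsto\chi_{-a}(A_i)$, and from it $K=p^m$ and conditions (1)--(3) follow.
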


\begin{proof}
	By Proposition \ref{Proposition 2}, $(ii) \Rightarrow (i)$ holds. In the following, we prove $(i) \Rightarrow (ii)$. As $-A_{i}=A_{i}, 1 \leq i \leq K$, all $p$-ary $s$-plateaued functions $f$ generated by $\Gamma$ satisfy $f(x)=f(-x)$, and then $|D_{f, f(0)}|$ is odd and $|D_{f, j}| \ (j \neq f(0))$ is even. Thus $\Gamma$ is unbalanced. By Proposition \ref{Proposition 1}, $n+s$ is even. Define $\widetilde{F}: V_{n}^{(p)} \rightarrow \{1, 2, \dots, K\}$ as $\widetilde{F}(x)=i, x \in A_{i}, 1 \leq i \leq K$. By the definition of $s$-plateaued partitions, $B(\widetilde{F}(x))$ is a $p$-ary $s$-plateaued function for any balanced map from $\{1, 2, \dots, K\}$ to $\mathbb{F}_{p}$.
	We first show that the Walsh supports of all $p$-ary $s$-plateaued functions generated by $\Gamma$ are the same. Let $a$ be an arbitrary fixed element in $V_{n}^{(p)}$. With the same arguments as in the proof of Proposition \ref{Proposition 1} $(i)$, we only need to prove that $\delta_{S_{B(P(\widetilde{F}(x)))}}(a)=\delta_{S_{B(\widetilde{F}(x))}}(a)$ for any balanced map $B$ from $\{1, 2, \dots, K\}$ to $\mathbb{F}_{p}$ and any transposition permutation $P$ of $\{1, 2, \dots, K\}$. Let $B$ be an arbitrary fixed balanced map from $\{1, 2, \dots, K\}$ to $\mathbb{F}_{p}$ and $P=(i_{0}i_{1})$ be an arbitrary fixed transposition permutation. We use the proof by contradiction. W.l.o.g., assume that $a \in S_{B(P(\widetilde{F}))}, a \notin S_{B(\widetilde{F})}$. We have 
	\begin{equation} \label{Th4 1}
		\begin{split}
			\epsilon_{B(P(\widetilde{F}))}(a)p^{\frac{n+s}{2}}\zeta_{p}^{(B(P(\widetilde{F})))^{*}(a)}
			&=W_{B(P(\widetilde{F}))}(a)=\sum_{x \in V_{n}^{(p)}}\zeta_{p}^{B(P(\widetilde{F}(x)))-\langle a, x\rangle_{n}}=\sum_{i=1}^{K}\zeta_{p}^{B(P(i))}\chi_{-a}(A_{i})\\
			&=\sum_{i \in \{1, \dots, K\} \backslash \{i_{0}, i_{1}\}}\zeta_{p}^{B(i)}\chi_{-a}(A_{i})
			+\zeta_{p}^{B(i_{1})}\chi_{-a}(A_{i_{0}})+\zeta_{p}^{B(i_{0})}\chi_{-a}(A_{i_{1}}),
		\end{split}
	\end{equation}
	and
	\begin{equation} \label{Th4 2}
		\begin{split}
			0=W_{B(\widetilde{F})}(a)
			&=\sum_{x \in V_{n}^{(p)}}\zeta_{p}^{B(\widetilde{F}(x))-\langle a, x\rangle_{n}}=\sum_{i=1}^{K}\zeta_{p}^{B(i)}\chi_{-a}(A_{i})\\
			&=\sum_{i \in \{1, \dots, K\} \backslash \{i_{0}, i_{1}\}}\zeta_{p}^{B(i)}\chi_{-a}(A_{i})
			+\zeta_{p}^{B(i_{0})}\chi_{-a}(A_{i_{0}})+\zeta_{p}^{B(i_{1})}\chi_{-a}(A_{i_{1}}).
		\end{split}
	\end{equation}
	Subtracting Eq. \eqref{Th4 1} from Eq. \eqref{Th4 2}, we have
	\begin{equation}\label{Th4 3}
		\epsilon_{B(P(\widetilde{F}))}(a)p^{\frac{n+s}{2}}\zeta_{p}^{(B(P(\widetilde{F})))^{*}(a)}=(\zeta_{p}^{B(i_{0})}-\zeta_{p}^{B(i_{1})})(\chi_{-a}(A_{i_{1}})-\chi_{-a}(A_{i_{0}})).
	\end{equation}
	Case I: $B(i_{0})=B(i_{1})$ or $\chi_{-a}(A_{i_{0}})=\chi_{-a}(A_{i_{1}})$. This case contradicts Eq. \eqref{Th4 3}. \\
	Case II: $B(i_{0}) \neq B(i_{1})$ and $\chi_{-a}(A_{i_{0}}) \neq \chi_{-a}(A_{i_{1}})$. Let $L=\mathbb{Q}(\zeta_{p})$ and $\mathcal{O}_{L}=\mathbb{Z}[\zeta_{p}]$. Since $-A_{i}=A_{i}$, then $\sigma_{-1}(\chi_{-a}(A_{i}))=\sigma_{-1}(\sum_{x \in A_{i}}\zeta_{p}^{\langle -a, x\rangle_{n}})=\sum_{x \in A_{i}}\zeta_{p}^{\langle a, x\rangle_{n}}=\sum_{x \in A_{i}}\zeta_{p}^{\langle a, -x\rangle_{n}}=\chi_{-a}(A_{i})$, where $\sigma_{-1}$ is the automorphism of $L$ defined by $\sigma_{-1}(\zeta_{p})=\zeta_{p}^{-1}$. Thus, $\chi_{-a}(A_{i}) \in \mathcal{O}_{L}$ is a real number. Since $p\mathcal{O}_{L}=((1-\zeta_{p})\mathcal{O}_{L})^{p-1}$ and $(\zeta_{p}^{B(i_{0})}-\zeta_{p}^{B(i_{1})})\mathcal{O}_{L}=(1-\zeta_{p}^{B(i_{1})-B(i_{0})})\mathcal{O}_{L}=(1-\zeta_{p})\mathcal{O}_{L}$, by Eq. \eqref{Th4 3}, we have 
	\begin{equation}\label{Th4 4}
		((1-\zeta_{p})\mathcal{O}_{L})^{\frac{(p-1)(n+s)}{2}}=(1-\zeta_{p})\mathcal{O}_{L} \cdot (\chi_{-a}(A_{i_{1}})-\chi_{-a}(A_{i_{0}}))\mathcal{O}_{L}.
	\end{equation}
	Then $(\chi_{-a}(A_{i_{1}})-\chi_{-a}(A_{i_{0}}))\mathcal{O}_{L}=(1-\zeta_{p})^{\frac{(p-1)(n+s)}{2}-1}\mathcal{O}_{L}$, which is impossible by Lemma \ref{Lemma 1} since $\chi_{-a}(A_{i_{1}})-\chi_{-a}(A_{i_{0}})$ is a real number and $\frac{(p-1)(n+s)}{2}-1$ is odd. Therefore, the Walsh supports of all $p$-ary $s$-plateaued functions generated by $\Gamma$ are the same (denoted by $S$). Note that $0 \in S$ as $\Gamma$ is unbalanced.
	
	Let $a$ be an arbitrary fixed element in $S$. Now we show that for all $p$-ary $s$-plateaued functions $f$ generated by $\Gamma$, $\epsilon_{f}(a)$ are the same. With the similar arguments as in the proof of Proposition \ref{Proposition 1} $(i)$, we only need to prove that $\epsilon_{B(P(\widetilde{F}(x)))}(a)=\epsilon_{B(\widetilde{F}(x))}(a)$ for any balanced map $B$ from $\{1, 2, \dots, K\}$ to $\mathbb{F}_{p}$ and any  transposition permutation $P$ of $\{1, 2, \dots, K\}$. Let $B$ be an arbitrary fixed balanced map from $\{1, 2, \dots, K\}$ to $\mathbb{F}_{p}$ and $P=(i_{0}i_{1})$ be an arbitrary fixed transposition permutation. We use the proof by contradiction. W.l.o.g, assume that $\epsilon_{B(P(\widetilde{F}(x)))}(a)=1, \epsilon_{B(\widetilde{F}(x))}(a)=-1$. By Eq. \eqref{Th4 1}, we can obtain
	\begin{equation}\label{Th4 5}
		p^{\frac{n+s}{2}}(\zeta_{p}^{(B(P(\widetilde{F})))^{*}(a)}+\zeta_{p}^{(B(\widetilde{F}))^{*}(a)})
		=(\zeta_{p}^{B(i_{0})}-\zeta_{p}^{B(i_{1})})(\chi_{-a}(A_{i_{1}})-\chi_{-a}(A_{i_{0}})).
	\end{equation}
	Case I: $B(i_{0})=B(i_{1})$ or $\chi_{-a}(A_{i_{0}})=\chi_{-a}(A_{i_{1}})$. This case contradicts Eq. \eqref{Th4 5}. \\
	Case II: $B(i_{0}) \neq B(i_{1})$, $\chi_{-a}(A_{i_{0}}) \neq \chi_{-a}(A_{i_{1}})$ and $(B(P(\widetilde{F})))^{*}(a)=(B(\widetilde{F}))^{*}(a)$. Since $p \mathcal{O}_{L}=((1-\zeta_{p})\mathcal{O}_{L})^{p-1}$ and $(\zeta_{p}^{B(i_{0})}-\zeta_{p}^{B(i_{1})}) \mathcal{O}_{L}=(1-\zeta_{p})\mathcal{O}_{L}$, then we have $(\chi_{-a}(A_{i_{1}})-\chi_{-a}(A_{i_{0}}))\mathcal{O}_{L}=2(1-\zeta_{p})^{\frac{(p-1)(n+s)}{2}-1}\mathcal{O}_{L}$, which is impossible by Lemma \ref{Lemma 1}.\\
	Case III: $B(i_{0}) \neq B(i_{1})$, $\chi_{-a}(A_{i_{0}}) \neq \chi_{-a}(A_{i_{1}})$ and $(B(P(\widetilde{F})))^{*}(a) \neq (B(\widetilde{F}))^{*}(a)$. Since $p \mathcal{O}_{L}=((1-\zeta_{p})\mathcal{O}_{L})^{p-1}$,  $(\zeta_{p}^{B(i_{0})}-\zeta_{p}^{B(i_{1})}) \mathcal{O}_{L}=(1-\zeta_{p})\mathcal{O}_{L}$ and $(\zeta_{p}^{(B(P(\widetilde{F})))^{*}(a)}+\zeta_{p}^{(B(\widetilde{F}))^{*}(a)})\mathcal{O}_{L}=\mathcal{O}_{L}$, then we have $(\chi_{-a}(A_{i_{1}})-\chi_{-a}(A_{i_{0}}))\mathcal{O}_{L}=(1-\zeta_{p})^{\frac{(p-1)(n+s)}{2}-1}\mathcal{O}_{L}$, which is impossible by Lemma \ref{Lemma 1}. Therefore, for all $p$-ary $s$-plateaued functions $f$ generated by $\Gamma$, $\epsilon_{f}(a)$ are the same (denoted by $\epsilon(a)$). 
	
	In the following, we compute the values $\chi_{-a}(A_{i}), 1 \leq i \leq K$, where $a$ is an arbitrary fixed element in $S$.
	Let $f$ be an arbitrary fixed $p$-ary $s$-plateaued function generated by $\Gamma$. Note that $cf \ (c \in \mathbb{F}_{p}^{*})$ is also a $p$-ary $s$-plateaued function generated by $\Gamma$. Then $S_{cf}=S_{f}=S$. By \cite[Theorem 1]{CM2013A}, for any $c \in \mathbb{F}_{p}^{*}$, we have $S_{cf}=cS_{f}$ (which yields that $\mathbb{F}_{p}^{*}S=S$) and $(cf)^{*}(a)=cf^{*}(c^{-1}a)$.
	Since $f(x)=f(-x)$, then $W_{f}(a)=W_{f}(-a)$, and thus  $f^{*}(a)=f^{*}(-a)$. 
	When $p=3$, then $(cf)^{*}(a)=cf^{*}(a), c \in \mathbb{F}_{p}^{*}$. When $p \geq 5$, we claim that $(cf)^{*}(a)=cf^{*}(a), c \in \mathbb{F}_{p}^{*}$ also holds. For any permutation $P$ of $\mathbb{F}_{p}$, $P(f)$ is also a $p$-ary $s$-plateaued function generated by $\Gamma$. Then $\epsilon_{P(f)}(a)=\epsilon(a)$. Define $M_{j}^{(a)}\triangleq \frac{1-\zeta_{p}^{(P_{0,j}(f))^{*}(a)-f^{*}(a)}}{1-\zeta_{p}^{j}}$, where $j \in \mathbb{F}_{p}^{*}$ and  $P_{0,j}=(0j)$ is the transposition permutation of $\mathbb{F}_{p}$. From $\epsilon(a)p^{\frac{n+s}{2}}\zeta_{p}^{(P_{0,j}(f))^{*}(a)}=W_{P_{0, j}(f)}(a)=\sum_{i \in \mathbb{F}_{p} \backslash \{0, j\}}\chi_{-a}(D_{f, i})\zeta_{p}^{i}+\chi_{-a}(D_{f, 0})\zeta_{p}^{j}+\chi_{-a}(D_{f, j})$ and $\epsilon(a)p^{\frac{n+s}{2}}\zeta_{p}^{f^{*}(a)}=W_{f}(a)=\sum_{i \in \mathbb{F}_{p} \backslash \{0, j\}}\chi_{-a}(D_{f, i})\zeta_{p}^{i}+\chi_{-a}(D_{f, 0})+\chi_{-a}(D_{f, j})\zeta_{p}^{j}$, we get $\chi_{-a}(D_{f, j})=\chi_{-a}(D_{f, 0})-\epsilon(a)p^{\frac{n+s}{2}}\zeta_{p}^{f^{*}(a)}M_{j}^{(a)}$ for any $j \in \mathbb{F}_{p}^{*}$. Then for any $c \in \mathbb{F}_{p}^{*}$, 
	\begin{equation*}
		\epsilon(a)p^{\frac{n+s}{2}}\zeta_{p}^{(cf)^{*}(a)}=W_{cf}(a)=\sum_{j \in \mathbb{F}_{p}}\chi_{-a}(D_{f, j})\zeta_{p}^{cj}=-\epsilon(a)p^{\frac{n+s}{2}}\zeta_{p}^{f^{*}(a)}\sum_{j \in \mathbb{F}_{p}^{*}}M_{j}^{(a)}\zeta_{p}^{cj}.
	\end{equation*}
	By Lemma \ref{Lemma 2}, $M_{j}^{(a)}=1, j \in \mathbb{F}_{p}^{*}$, or there is a unique $j_{a} \in \mathbb{F}_{p}^{*}$ such that $M_{j_{a}}^{(a)}=-\zeta_{p}^{-j_{a}}, M_{j}^{(a)}=0, j \in \mathbb{F}_{p}^{*} \backslash \{j_{a}\}$. Denote $E=\{a \in S: M_{j}^{(a)}=1, j \in \mathbb{F}_{p}^{*}\}$. Then when $a \in E$, $\sum_{j \in \mathbb{F}_{p}^{*}}M_{j}^{(a)}\zeta_{p}^{cj}=-1$, and then $(cf)^{*}(a)=f^{*}(a)$; when $a \in S \backslash E$, $\sum_{j \in \mathbb{F}_{p}^{*}}M_{j}^{(a)}\zeta_{p}^{cj}=-\zeta_{p}^{(c-1)j_{a}}$, and then $(cf)^{*}(a)=f^{*}(a)+(c-1)j_{a}$. Thus, $(cf)^{*}(a)=cg(a)+h(a)$, where $g: S \rightarrow \mathbb{F}_{p}$ is given as $g(a)=0, a \in E$, and $g(a)=j_{a}, a \in S \backslash E$, and $h: S \rightarrow \mathbb{F}_{p}$ is given as $h(a)=f^{*}(a), a \in E$, and $h(a)=f^{*}(a)-j_{a}, a \in S \backslash E$. Then for any $a \in S$,  $-g(a)+h(a)=(-f)^{*}(a)=-f^{*}(-a)=-f^{*}(a)=-g(a)-h(a)$, which implies that $h(a)=0$ and $g(a)=f^{*}(a)$. Thus, $(cf)^{*}(a)=cf^{*}(a), c \in \mathbb{F}_{p}^{*}$. Then by  \cite[Proposition 3]{WF2023Ne}, for any $j \in \mathbb{F}_{p}$,
	\begin{equation*}
		\begin{split}
			\chi_{-a}(D_{f, j})&=p^{n-1}\delta_{0}(a)+p^{-1}\sum_{c \in \mathbb{F}_{p}^{*}}W_{cf}(a)\zeta_{p}^{-cj}=p^{n-1}\delta_{0}(a)+\epsilon(a)p^{\frac{n+s}{2}-1}\sum_{c \in \mathbb{F}_{p}^{*}}\zeta_{p}^{c(f^{*}(a)-j)}\\
			&=p^{n-1}\delta_{0}(a)+\epsilon(a)p^{\frac{n+s}{2}-1}(p\delta_{j}(f^{*}(a))-1).
		\end{split}
	\end{equation*}
	For any fixed $a \in S$, since $\chi_{-a}(D_{f, j}), j \in \mathbb{F}_{p}$, take exactly two different values for any $s$-plateaued function $f$ generated by $\Gamma$, where the value $p^{n-1}\delta_{0}(a)+\epsilon(a)p^{\frac{n+s}{2}-1}(p-1)$ occurs one time and the value $p^{n-1}\delta_{0}(a)-\epsilon(a)p^{\frac{n+s}{2}-1}$ occurs $(p-1)$ times, we can obtain
	\begin{equation*}
		\chi_{-a}(A_{i})=\left\{
		\begin{split}
			\frac{p^n}{K}\delta_{0}(a)-\epsilon(a)\frac{p^{\frac{n+s}{2}}}{K}+\epsilon(a)p^{\frac{n+s}{2}}, & \text{ if } i=\widetilde{G}(a),\\
			\frac{p^n}{K}\delta_{0}(a)-\epsilon(a)\frac{p^{\frac{n+s}{2}}}{K}, &  \text{ if } i \in \{1, 2, \dots, K\} \backslash \{\widetilde{G}(a)\},
		\end{split}
		\right.
	\end{equation*}
	where $\widetilde{G}$ is some function from $S$ to $\{1, 2, \dots, K\}$. For $a \in S^{*}$, since $\chi_{-a}(A_{i}) \in \mathbb{Z}[\zeta_{p}] \cap \mathbb{Q}=\mathbb{Z}$, then $K$ is a divisor of $p^{\frac{n+s}{2}}$ and $K$ is a power of $p$ (denoted by $K=p^m$). Thus, we can view $\Gamma$ as $\Gamma=\{A_{i}, i \in V_{m}^{(p)}\}$ and (1) and (2) hold as $F_{c}$ is a $p$-ary $s$-plateaued function generated by $\Gamma$, and $\chi_{-a}(A_{i})=p^{n-m}\delta_{0}(a)-\epsilon(a)p^{\frac{n+s}{2}-m}+\delta_{i}(G(a))\epsilon(a)p^{\frac{n+s}{2}}, a \in S, i \in V_{m}^{(p)}$, where $G$ is some function from $S$ to $V_{m}^{(p)}$. Then for any $a \in S$, $W_{F_{c}}(a)=\sum_{i \in V_{m}^{(p)}}\chi_{-a}(A_{i})\zeta_{p}^{\langle c, i\rangle_{m}}=\epsilon(a)p^{\frac{n+s}{2}}\zeta_{p}^{G_{c}(a)}$, which shows that (3) holds. This completes the proof.
\end{proof}

\begin{remark}
	By the proof of Proposition \ref{Proposition 2}, the condition (3) in Theorem \ref{Theorem 4} can be replaced by the following condition:
	\begin{itemize}
		\item For any $d \neq -e \in {V_{m}^{(p)}}^{*}$ and $x \in S$, $(F_{d})^{*}(x)+(F_{e})^{*}(x)-(F_{d+e})^{*}(x)=0$.
	\end{itemize}
\end{remark}

When $K=p$, we obtain the following corollary.

\begin{corollary}\label{Corollary 1}
	Let $p\geq 5$ be an odd prime and $f: V_{n}^{(p)} \rightarrow \mathbb{F}_{p}$ with $f(x)=f(-x)$. Then the following statements are equivalent.
	
	(i) $\{D_{f, i}, i \in \mathbb{F}_{p}\}$ is an $s$-plateaued partition.
	
	(ii) $f$ is an $s$-plateaued function of $(p-1)$-form, where $n+s$ is even.
\end{corollary}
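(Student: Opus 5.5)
The plan is to derive both directions from Theorem \ref{Theorem 4} with $K=p$ (allowed since $p\geq 5$ gives $K\geq 5$), together with Proposition \ref{Proposition 2} and a short computation relating the $(p-1)$-form property to the duals of the multiples $cf$. Set $A_i=D_{f,i}$. The hypothesis $f(x)=f(-x)$ gives $-A_i=A_i$. With $m=1$, the map $F$ in Theorem \ref{Theorem 4}(ii) is $f$ itself, up to relabelling of the sets. Its component functions are exactly $F_c=cf$ for $c\in\mathbb{F}_p^{*}$.

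For (i)$\Rightarrow$(ii), I apply Theorem \ref{Theorem 4}. Since $\Gamma$ is unbalanced, its proof shows that $n+s$ is even and that $f$ is $s$-plateaued, as $f$ is one of the generated functions. Condition (3) gives $(cf)^{*}(a)=cG(a)$ for all $a\in S$, and taking $c=1$ yields $G=f^{*}$. The remaining task is to show $f(cx)=f(x)$ for all $c\in\mathbb{F}_p^{*}$. Using \cite[Theorem 1]{CM2013A}, $(cf)^{*}(a)=cf^{*}(c^{-1}a)$, so condition (3) becomes $f^{*}(c^{-1}a)=f^{*}(a)$ on $S$. I will then compare Walsh transforms of $f$ and of $f_c(x):=f(cx)$. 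A substitution gives $W_{f_c}(a)=W_f(c^{-1}a)$. Conditions (1) and (2) give $S$ invariant under $\mathbb{F}_p^{*}$, with $\epsilon$ constant along these orbits and $f^{*}$ constant along them. Hence $W_{f_c}=W_f$ everywhere, and inverse Walsh transform gives $f(cx)=f(x)$, i.e.\ $f$ is of $(p-1)$-form. The one point needing care is that $\epsilon(c^{-1}a)=\epsilon(a)$. I would get this from $W_{cf}(a)=W_f(c^{-1}a)$ up to a Galois conjugation: applying $\sigma_c:\zeta_p\mapsto\zeta_p^{c}$ to $W_f(c^{-1}a)$ gives $W_{cf}(a)$. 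Because $n+s$ is even, $\epsilon\in\{\pm1\}$ and $p^{\frac{n+s}{2}}$ is rational, so both are fixed by $\sigma_c$. Condition (2) then gives $\epsilon(c^{-1}a)=\epsilon(a)$.

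For (ii)$\Rightarrow$(i), I verify the hypotheses of Proposition \ref{Proposition 2}(i) for $F=f$ with $m=1$, $G=f^{*}$ and $g=0$; note $p>3$ as required. Being of $(p-1)$-form means $f(cx)=f(x)$. Then
\[
W_{cf}(a)=\sigma_c\bigl(W_f(a)\bigr)=\sigma_c\bigl(\epsilon_f(a)p^{\frac{n+s}{2}}\zeta_p^{f^{*}(a)}\bigr)=\epsilon_f(a)p^{\frac{n+s}{2}}\zeta_p^{cf^{*}(a)},
\]
using again that $n+s$ is even, so that $\epsilon_f(a)\in\{\pm1\}$ is rational. (The first equality holds here without any form condition, simply by applying $\sigma_c$ termwise to the defining sum of $W_f(a)$, which sends $\zeta_p^{f(x)-\langle a,x\rangle_n}$ to $\zeta_p^{cf(x)-\langle ca,x\rangle_n}$, and then using $f(cx)=f(x)$ to rescale. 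I should double-check that this bookkeeping lands at $a$ rather than at $ca$.) This identity shows that all $cf$ have the same support $S_f$, the same $\epsilon$, and dual $cf^{*}=G_c$. Proposition \ref{Proposition 2} then yields that $\{D_{f,i}\}$ is an $s$-plateaued partition.

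I expect the main obstacle to be this Galois bookkeeping, namely getting $\sigma_c$ versus the substitution $x\mapsto c^{-1}x$ right in both directions. Once that identity is settled, everything else is a direct quotation of Theorem \ref{Theorem 4} and Proposition \ref{Proposition 2}.
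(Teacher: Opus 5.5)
Your proposal is correct and follows essentially the paper's route. For (i)$\Rightarrow$(ii) you use Theorem~\ref{Theorem 4}, the relation $W_{cf}(a)=\sigma_c\bigl(W_f(c^{-1}a)\bigr)$ (derived by hand where the paper cites \cite[Theorem 1]{CM2013A}) and the inverse Walsh transform; for (ii)$\Rightarrow$(i) you use the same relation plus Proposition~\ref{Proposition 2}, which is exactly how Theorem~\ref{Theorem 4}(ii)$\Rightarrow$(i) is proved. One correction to your parenthetical: in general $\sigma_c\bigl(W_f(a)\bigr)=W_{cf}(ca)$, so $W_{cf}(a)=\sigma_c\bigl(W_f(a)\bigr)$ does depend on the $(p-1)$-form (through $W_f(c^{-1}a)=W_f(a)$), as your rescaling step correctly uses.
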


\begin{proof}
	$(i) \Rightarrow (ii)$: According to Theorem \ref{Theorem 4} and its proof, $f$ is an $s$-plateaued function, where $n+s$ is even, and for any $c \in \mathbb{F}_{p}^{*}$, $S_{f}=S_{cf}=cS_{f}$ and  $\epsilon_{cf}(a)=\epsilon_{f}(a), (cf)^{*}(a)=cf^{*}(a)$ for all $a \in S_{f}$. Then by \cite[Theorem 1]{CM2013A}, 
	for any $c \in \mathbb{F}_{p}^{*}$ and $a \in S_{f}$, $\epsilon_{f}(c^{-1}a)=\epsilon_{cf}(a)=\epsilon_{f}(a), cf^{*}(c^{-1}a)=(cf)^{*}(a)=cf^{*}(a)$, which implies that $\epsilon_{f}(c^{-1}a)=\epsilon_{f}(a), f^{*}(c^{-1}a)=f^{*}(a)$. Thus, for any $c \in \mathbb{F}_{p}^{*}$ and $x \in V_{n}^{(p)}$, we have
	\begin{equation*}
		\begin{split}
			p^{n}\zeta_{p}^{f(cx)}&=\sum_{a \in S_{f}}W_{f}(a)\zeta_{p}^{\langle a, cx\rangle_{n}}=\sum_{a \in S_{f}}W_{f}(c^{-1}a)\zeta_{p}^{\langle a,x\rangle_{n}}=\sum_{a \in S_{f}}\epsilon_{f}(c^{-1}a)p^{\frac{n+s}{2}}\zeta_{p}^{f^{*}(c^{-1}a)+\langle a, x\rangle_{n}}\\
			&=\sum_{a \in S_{f}}\epsilon_{f}(a)p^{\frac{n+s}{2}}\zeta_{p}^{f^{*}(a)+\langle a, x\rangle_{n}}=\sum_{a \in S_{f}}W_{f}(a)\zeta_{p}^{\langle a, x\rangle_{n}}=p^{n}\zeta_{p}^{f(x)},
		\end{split}
	\end{equation*}
	and hence $f(cx)=f(x)$, that is, $f$ is of $(p-1)$-form.
	
	$(ii) \Rightarrow (i)$: Since $f$ is an $s$-plateaued function of $(p-1)$-form, where $n+s$ is even, then for any $a \in S_{f}$ and $c \in \mathbb{F}_{p}^{*}$, we have
	\begin{equation*}
		W_{f}(a)=\sum_{x \in V_{n}^{(p)}}\zeta_{p}^{f(x)-\langle a, x\rangle_{n}}=\sum_{x \in V_{n}^{(p)}}\zeta_{p}^{f(cx)-\langle a, cx\rangle_{n}}=\sum_{x \in V_{n}^{(p)}}\zeta_{p}^{f(x)-\langle ca, x\rangle_{n}}=W_{f}(ca),
	\end{equation*}
	which implies that $cS_{f}=S_{f}$ and $\epsilon_{f}(ca)=\epsilon_{f}(a), f^{*}(ca)=f^{*}(a)$. Then by \cite[Theorem 1]{CM2013A}, 
	for any $c \in \mathbb{F}_{p}^{*}$, $S_{cf}=cS_{f}=S_{f}$, and for any $a \in S_{f}$, $\epsilon_{cf}(a)=\epsilon_{f}(c^{-1}a)=\epsilon_{f}(a), (cf)^{*}(a)=cf^{*}(c^{-1}a)=cf^{*}(a)$. By Theorem \ref{Theorem 4}, $\{D_{f, i}, i \in \mathbb{F}_{p}\}$ is an $s$-plateaued partition.
\end{proof}

\begin{remark}\label{Remark 6}
Keep the same notations as in Theorem \ref{Theorem 4}. Suppose $\Gamma=\{A_{i}, i \in V_{m}^{(p)}\}$ is an $s$-plateaued partition with $-A_{i}=A_{i}, i\in V_{m}^{(p)}$. Notice that for any $c \in {V_{m}^{(p)}}^{*}$, $\{D_{F_{c},j}, j \in \mathbb{F}_{p}\}$ is an $s$-plateaued partition. By Corollary \ref{Corollary 1}, $F_{c}, c \in {V_{m}^{(p)}}^{*}$, are all of $(p-1)$-form. Then we have $F(ax)=F(x)$ for all $a \in \mathbb{F}_{p}^{*}, x \in V_{n}^{(p)}$. Therefore, $\mathbb{F}_{p}^{*}A_{i}=A_{i}, i \in V_{m}^{(p)}$. Suppose $0 \in A_{i_{0}}$. Then $(p-1) \mid |A_{i}| \ (i \neq i_{0})$. By the value distribution of unbalanced plateaued functions, $\epsilon(0)=1$ if $p>3$, that is, $|A_{i_{0}}|=p^{n-m}-p^{\frac{n+s}{2}-m}+p^{\frac{n+s}{2}}, |A_{i}|=p^{n-m}-p^{\frac{n+s}{2}-m} \ (i \neq i_{0})$ if $p>3$.
\end{remark}

When $p$ is odd, $K \geq 5$ and $s=0$, we obtain the following characterization of bent partitions $\Gamma=\{A_{i}, 1 \leq i \leq K\}$ with $A_{i}=-A_{i}$. 

\begin{corollary}\label{Corollary 2}
	Let $p$ be an odd prime and $K \geq 5$ be an integer divisible by $p$. Let $\Gamma=\{A_{i}, 1 \leq i \leq K\}$ be a partition of $V_{n}^{(p)}$, where $-A_{i}=A_{i}, 1 \leq i \leq K$. Then the following statements are equivalent.
	
	(i) $\Gamma$ is a bent partition.
	
	(ii) $K$ is a power of $p$ (denoted by $K=p^m$), and if $\Gamma$ is written as $\Gamma=\{A_{i}, i \in V_{m}^{(p)}\}$, then $F: V_{n}^{(p)} \rightarrow V_{m}^{(p)}$ defined as 
	\begin{equation*}
		F(x)=i, x \in A_{i}, i \in V_{m}^{(p)}
	\end{equation*}
	is a vectorial bent function satisfying the following conditions:
	\begin{itemize}
		\item[(1)] Each component function $F_{c}$ ($c \in {V_{m}^{(p)}}^{*}$) satisfies $\epsilon_{F_{c}}(a)=\epsilon(a)$ for any $a \in V_{n}^{(p)}$, where $\epsilon(a)$ is independent of $c$. 
		\item[(2)] There is a function $G: V_{n}^{(p)} \rightarrow V_{m}^{(p)}$ such that $(F_{c})^{*}(x)=G_{c}(x), x \in V_{n}^{(p)}, c \in {V_{m}^{(p)}}^{*}$.
	\end{itemize}
\end{corollary}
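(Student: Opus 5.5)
This is the case $s=0$ of Theorem \ref{Theorem 4}, so I would obtain it by specializing that theorem and checking that its conditions collapse to the two stated ones. A $0$-plateaued partition is by definition a bent partition, and a vectorial $0$-plateaued function is a vectorial bent function. So statement (i) here coincides with statement (i) of Theorem \ref{Theorem 4} for $s=0$. The hypotheses also match: $p$ odd, $p \mid K$, $K\ge 5$, and $-A_i=A_i$.

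For (i) $\Rightarrow$ (ii), I would apply Theorem \ref{Theorem 4} with $s=0$. It gives that $K=p^m$ and that $F$ is vectorial bent, satisfying conditions (1)--(3) there.

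\begin{itemize}
\item Every bent component $F_c$ has Walsh support $S_{F_c}=V_n^{(p)}$, as recalled in the preliminaries, since $|S_f|=p^{n-s}=p^n$. So the common support $S$ in condition (1) of Theorem \ref{Theorem 4} is automatically $V_n^{(p)}$.
\item Condition (2) of Theorem \ref{Theorem 4} then becomes condition (1) here.
\item Condition (3) of Theorem \ref{Theorem 4} becomes condition (2) here, with $G$ defined on all of $V_n^{(p)}$.
\end{itemize}

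For (ii) $\Rightarrow$ (i), I would check the hypotheses of Theorem \ref{Theorem 4}(ii) with $s=0$. The first is $K=p^m$, which is given. The second is that $F$ is vectorial $0$-plateaued, which holds because $F$ is vectorial bent. The common-support condition (1) of Theorem \ref{Theorem 4} holds trivially, since all supports equal $V_n^{(p)}$. Conditions (2) and (3) of Theorem \ref{Theorem 4} are exactly the given (1) and (2). Alternatively, one can invoke Proposition \ref{Proposition 2}(i) directly with $g=0$.

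Nothing here is a real obstacle. The only step needing any justification is that the Walsh support of a bent function is the whole space, which follows from Parseval's identity as stated in the preliminaries.
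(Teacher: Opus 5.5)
Your proposal is correct and follows the paper's proof: the paper likewise notes that a $p$-ary bent function has Walsh support $V_{n}^{(p)}$ and then reads the corollary off from Theorem~\ref{Theorem 4} with $s=0$. Your remark that Proposition~\ref{Proposition 2}(i) with $g=0$ also gives (ii)$\Rightarrow$(i) directly is valid, since $p^m=K\geq 5>3$.
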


\begin{proof}
For a $p$-ary bent function $f: V_{n}^{(p)} \rightarrow \mathbb{F}_{p}$, $S_{f}=V_{n}^{(p)}$. Then the result follows from Theorem \ref{Theorem 4}.
\end{proof}

 A \emph{partial spread} $\mathcal{S}$ of $V_{n}^{(p)}$ ($n=2m$) is a set of subspaces of $V_{n}^{(p)}$ of dimension $m$ for which the intersection of any two distinct subspaces is the zero subspace. If $|\mathcal{S}|=p^{m}+1$, then $\mathcal{S}$ is called a \emph{spread}. For the known bent partitions of $V_{n}^{(p)}$, the largest depth is $p^{\frac{n}{2}}$, which is obtained from spreads. In details, if $\{U_{0}, U_{1}, \dots, U_{p^m}\}$ is a spread of $V_{n}^{(p)} \ (n=2m)$, then $\{U_{0} \cup U_{1}, U_{j}^{*}, 2\leq j \leq p^m\}$ is a bent partition of $V_{n}^{(p)}$. There is an unsolved problem proposed in \cite{AM2022Be}:

\textbf{Open Problem}: If $\Gamma=\{A_{i}, i \in V_{n/2}^{(p)}\}$ is a bent partition of $V_{n}^{(p)}$ of depth $p^{\frac{n}{2}}$, must $\Gamma$ be obtained from spreads?

The following theorem partially addresses the above open problem.

\begin{theorem}\label{Theorem 5}
Let $p$ be an odd prime, $n \geq 4$ be an even positive integer. Let $\Gamma=\{A_{i}, i \in V_{n/2}^{(p)}\}$ be a bent partition of $V_{n}^{(p)}$ satisfying $-A_{i}=A_{i}$ for any $i$, and in the special case of $p=3$, $A_{i} \neq \{0\}$ for all $i$. Then the bent partition $\Gamma$ must be obtained from spreads.
\end{theorem}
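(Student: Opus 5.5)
Set $m=n/2$, so $K=p^{m}\geq p^{2}\geq 9\geq 5$ and Corollary \ref{Corollary 2} (the case $s=0$ of Theorem \ref{Theorem 4}) applies. Writing $F(x)=i$ for $x\in A_{i}$, $F$ is a vectorial bent function $V_{n}^{(p)}\rightarrow V_{m}^{(p)}$. There are a sign function $\epsilon$ and a map $G$ with $W_{F_{c}}(a)=\epsilon(a)p^{m}\zeta_{p}^{G_{c}(a)}$ for all $c\neq 0$. The proof of Theorem \ref{Theorem 4} also gives, for every $a\in V_{n}^{(p)}$,
\begin{equation*}
\chi_{-a}(A_{i})=p^{m}\delta_{0}(a)-\epsilon(a)+\delta_{i}(G(a))\epsilon(a)p^{m}.
\end{equation*}
In addition, Remark \ref{Remark 6} gives $\mathbb{F}_{p}^{*}A_{i}=A_{i}$ for all $i$. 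If $0\in A_{i_{0}}$ and $p>3$, it also gives $|A_{i_{0}}|=2p^{m}-1$ and $|A_{i}|=p^{m}-1$ for $i\neq i_{0}$.

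The aim is to show that each $A_{i}\cup\{0\}$ with $i\neq i_{0}$ is an $m$-dimensional subspace, and that $A_{i_{0}}$ is the union of two such subspaces. Step one treats $p=3$. There $\epsilon(0)=\pm1$ gives $|A_{i_{0}}|\in\{2\cdot 3^{m}-1,\,1\}$. The value $1$ would mean $A_{i_{0}}=\{0\}$, which is excluded by hypothesis, so $\epsilon(0)=1$ and the cardinalities match the case $p>3$.

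Step two is the main combinatorial argument. Fix $i\neq i_{0}$ and set $U_{i}=A_{i}\cup\{0\}$, so $|U_{i}|=p^{m}$. Then
\begin{equation*}
\chi_{-a}(U_{i})=1+\chi_{-a}(A_{i})=p^{m}\delta_{0}(a)+(1-\epsilon(a))+\delta_{i}(G(a))\epsilon(a)p^{m}.
\end{equation*}
For $a\neq 0$ this lies in $\{0,2,\pm p^{m},\,2-p^{m}\}$, corrected by the $\epsilon$ cases. The key fact needed is $\epsilon(a)=1$ for all $a$, which makes $\chi_{-a}(U_{i})\in\{0,p^{m}\}$ for $a\neq0$. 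Then the indicator of $U_{i}$ has Fourier transform supported on $U_{i}^{\perp}:=\{a:\chi_{-a}(U_{i})=p^{m}\}$. A set of size $p^{m}$ whose character sums take only the values $0$ or $|U_{i}|$ is a coset of a subgroup, namely the annihilator of the set where the sum equals $p^{m}$. Since $0\in U_{i}$, this makes $U_{i}$ a subspace of dimension $m$.

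Proving $\epsilon\equiv1$ is the expected obstacle. I would first try a counting argument: sum $\chi_{-a}(U_{i})$ over all $a$, which gives $p^{n}\cdot\mathbf 1_{U_{i}}(0)=p^{n}$, and use Parseval on $U_{i}$. Let $N_{-}$ be the number of $a\neq0$ with $\epsilon(a)=-1$. Such $a$ contribute values $2$ or $2-p^{m}$, and the identities $\sum_{a}\chi_{-a}(U_{i})=p^{n}$ and $\sum_{a}|\chi_{-a}(U_{i})|^{2}=p^{n}\cdot p^{m}$, taken over all $i\neq i_{0}$, should force $N_{-}=0$. A second option is to apply the same computation to $A_{i_{0}}$, or to use that $F$ of $(p-1)$-form makes $\epsilon$ constant on $\mathbb{F}_{p}^{*}$-orbits.

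Finally, $A_{i_{0}}=V_{n}^{(p)}\setminus\bigcup_{i\neq i_{0}}A_{i}$ has size $2p^{m}-1$. The $p^{m}-1$ subspaces $U_{i}$ pairwise meet only in $0$, so they form a partial spread of size $p^{m}-1$. A partial spread of $V_{2m}^{(p)}$ of deficiency $2$ extends uniquely to a spread, with the complement being a union of two $m$-subspaces $U_{0},U_{1}$. I would cite this fact, or verify it via character sums of $A_{i_{0}}\cup\{0\}$, which equal $p^{m}\delta_{0}(a)+\dots$ and take values in $\{0,p^{m},2p^{m}\}$, forcing a union of two subspaces. This yields $\Gamma=\{U_{0}\cup U_{1},U_{j}^{*}\}$, as required.
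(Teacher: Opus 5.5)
Your outline has the same architecture as the paper's proof: character values from Corollary~\ref{Corollary 2} and Remark~\ref{Remark 6}, the hypothesis $A_i\neq\{0\}$ to get $\epsilon(0)=1$ when $p=3$, the subspaces $U_i$, and then completion of a deficiency-$2$ partial spread. However, the step you flag as the obstacle, $\epsilon\equiv 1$, is not proved, and the counting route you propose for it cannot succeed. Set $q=p^m$, $N_{-}=|\{a\neq 0:\epsilon(a)=-1\}|$ and $Q_i=|\{a\neq0:\epsilon(a)=-1,\,G(a)=i\}|$, and let $P_i$ be the same count with $\epsilon(a)=1$. Then the values $\chi_{-a}(U_i)\in\{p^m,0,2-p^m,2\}$ turn $\sum_a\chi_{-a}(U_i)=q^2$ and $\sum_a|\chi_{-a}(U_i)|^2=q^3$ into $q+qP_i-qQ_i+2N_{-}=q^2$ and $q^2+q^2P_i+(q^2-4q)Q_i+4N_{-}=q^3$. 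Eliminating $P_i$ leaves only $(q-2)(N_{-}-qQ_i)=0$, i.e.\ $N_{-}=p^mQ_i$ and $P_i+Q_i=q-1$. This does not force $N_{-}=0$. Summing over $i$ adds nothing, and neither do the analogous identities for $A_{i_0}$ or the $\mathbb{F}_p^*$-invariance of $\epsilon$ (which only gives $(p-1)\mid N_{-}$). Identities of this kind hold for every configuration of this shape, so they cannot detect the obstruction.

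The obstruction is arithmetic, and this is the idea the paper uses. Suppose $\epsilon(a)=-1$ for some $a\neq0$, and pick $i\notin\{i_0,G(a)\}$ (possible since $p^m\geq 9$); then $\chi_{-a}(A_i)=1$. Write $\chi_{-a}(A_i)=\sum_{j\in\mathbb{F}_p}n_j\zeta_p^j$ with $n_j=|\{x\in A_i:\langle -a,x\rangle_n=j\}|$. Since $1+x+\cdots+x^{p-1}$ is the minimal polynomial of $\zeta_p$, you get $n_0-1=n_1=\cdots=n_{p-1}$. Hence $|A_i|=pn_1+1\equiv 1\pmod p$, contradicting $|A_i|=p^m-1$ for odd $p$. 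Equivalently, $\chi_{-a}(U_i)=2$, whereas any sum of $p^m$ powers of $\zeta_p$ is $\equiv p^m\equiv 0$ modulo $(1-\zeta_p)\mathcal{O}_L$, and $2$ is a unit there.

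Once this is inserted, the rest goes through, and your Fourier-inversion argument giving $U_i=(U_i^{\perp})^{\perp}$ is a valid substitute for the paper's dimension count. For the final step you must rely on a completion theorem, as the paper does with Metsch's result. Your proposed character-sum ``verification'' is not an argument: for $a\neq0$ the values of $\chi_{-a}(A_{i_0})$ are already forced by the partial spread $\{U_i\}$. They equal $-1$ or $p^m-1$ according as $a$ lies in some $U_i^{\perp}$ or not, so they carry no information about whether $A_{i_0}$ is a union of two subspaces.
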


\begin{proof}
Without loss of generality, we assume that $0 \in A_{0}$. Denote $U_{i}=A_{i} \cup \{0\}, i \in {V_{n/2}^{(p)}}^{*}$. Define $F: V_{n}^{(p)} \rightarrow V_{n/2}^{(p)}$ as $F(x)=i, x \in A_{i}, i \in V_{n/2}^{(p)}$. By Corollary \ref{Corollary 2} and the proof of Theorem \ref{Theorem 4}, $(F_{c})^{*}(x)=G_{c}(x), \epsilon_{F_{c}}(x)=\epsilon(x), x \in V_{n}^{(p)}, c \in {V_{n/2}^{(p)}}^{*}$, and for any $a \in V_{n}^{(p)}$ and $i \in V_{n/2}^{(p)}$, 
\begin{equation*}
	\chi_{-a}(A_{i})=p^{\frac{n}{2}}\delta_{0}(a)-\epsilon(a)+\delta_{i}(G(a))\epsilon(a)p^{\frac{n}{2}},
\end{equation*}
where $G$ is some function from $V_{n}^{(p)}$ to $V_{n/2}^{(p)}$ and $\epsilon(a) \in \{\pm 1\}$ is independent of $c$. By Remark \ref{Remark 6} and the condition that $A_{i} \neq \{0\}$ for all $i \in V_{n/2}^{(p)}$ when $p=3$, we have $\epsilon(0)=1$ and $|A_{i}|=p^{\frac{n}{2}}-1, i \in {V_{n/2}^{(p)}}^{*}$. We claim that $\epsilon(a)=1$ for all $a \in V_{n}^{(p)}$. To the contrary, assume that there exists $a \in {V_{n}^{(p)}}^{*}$ such that $\epsilon(a)=-1$. Let $i \in V_{n/2}^{(p)}$ be fixed with $i \neq 0, G(a)$. Then 
\begin{equation*}
		1=\chi_{-a}(A_{i})=\sum_{x \in A_{i}}\zeta_{p}^{\langle -a, x\rangle_{n}}=\sum_{j \in \mathbb{F}_{p}}n_{j}\zeta_{p}^{j},
\end{equation*}
where $n_{j}=|\{x \in A_{i}: \langle -a, x\rangle_{n}=j\}|$. Since $1+x+x^{2}+\dots+x^{p-1}$ is the minimal polynomial of $\zeta_{p}$ over $\mathbb{Q}$, then $n_{0}-1=n_{1}=n_{2}=\dots=n_{p-1}$. Then by $|A_{i}|=p^{\frac{n}{2}}-1$,  $n_{1}=\frac{p^{\frac{n}{2}}-2}{p}$ is an integer, which is impossible. Thus, $\epsilon(a)=1$ for any $a \in V_{n}^{(p)}$. Then for any $a \in V_{n}^{(p)}$ and $i \in {V_{n/2}^{(p)}}^{*}$, $\chi_{-a}(A_{i}) \in \{p^{\frac{n}{2}}-1, -1\}$ and thus $\chi_{-a}(U_{i}) \in \{p^{\frac{n}{2}}, 0\}$. Let $i \in {V_{n/2}^{(p)}}^{*}$ be fixed. From $|U_{i}|=p^{\frac{n}{2}}$ and $0 \in U_{i}$, $\chi_{-a}(U_{i})=\sum_{x \in U_{i}}\zeta_{p}^{-\langle a, x\rangle_{n}}=p^{\frac{n}{2}}$ if and only if $a \in U_{i}^{\perp}$, where $U_{i}^{\perp}=\{a \in V_{n}^{(p)}: \langle a, x\rangle_{n}=0 \text{ for all } x \in U_{i}\}$ is a linear subspace. Denote $m=|\{a \in V_{n}^{(p)}: \chi_{-a}(U_{i})=p^{\frac{n}{2}}\}|$. Then $p^{n}=\sum_{x \in U_{i}}\sum_{a \in V_{n}^{(p)}}\zeta_{p}^{-\langle a, x\rangle_{n}}=\sum_{a \in V_{n}^{(p)}}\chi_{-a}(U_{i})=p^{\frac{n}{2}}m$, which implies that $m=p^{\frac{n}{2}}$. Thus, the dimension of $U_{i}^{\perp}$ is $\frac{n}{2}$. Note that $<U_{i}>^{\perp}=U_{i}^{\perp}$, where $<U_{i}>$ is the linear space spanned by $U_{i}$. Then the dimension of $<U_{i}>$ is $\frac{n}{2}$, and by $|U_{i}|=p^{\frac{n}{2}}$, we have that $U_{i}$ is a linear subspace of dimension $\frac{n}{2}$. Obviously, $U_{i} \cap U_{j}=\{0\}$ for any $i \neq j \in {V_{n/2}^{(p)}}^{*}$. Thus, $\{U_{i}, i \in {V_{n/2}^{(p)}}^{*}\}$ is a partial spread. 
By \cite[Corollary 3.8]{Metsch1991Im}, the partial spread $\{U_{i}, i \in {V_{n/2}^{(p)}}^{*}\}$ can be extended to a spread $\{U_{i}, i \in {V_{n/2}^{(p)}}^{*}, W, V\}$. Then $A_{0}=W\cup V$. Therefore, $\{A_{i}, i \in V_{n/2}^{(p)}\}=\{W \cup V, U_{i}^{*}, {i \in V_{n/2}^{(p)}}^{*}\}$, which is obtained from the spread $\{U_{i}, i \in {V_{n/2}^{(p)}}^{*}, W, V\}$. This completes the proof.
\end{proof}

\section{Conclusion}
Plateaued functions are an important class of cryptographic functions. In this paper, we introduced $s$-plateaued partitions $\Gamma=\{A_{i}, 1 \leq i \leq K\}$ of $V_{n}^{(p)}$ of depth $K$, which play a powerful role in constructing $p$-ary $s$-plateaued functions, vectorial $s$-plateaued functions and generalized $s$-plateaued functions. We analyzed the possible cardinality of $A_{i}$, which is far more complex than bent partitions. Based on a general way proposed, we presented some explicit constructions of $s$-plateaued partitions for which any generated $p$-ary $s$-plateaued function has no nonzero linear structure. When $p$ is odd, $K \geq 5$ and $A_{i}=-A_{i}, 1 \leq i \leq K$, we characterized the corresponding $s$-plateaued partitions. We further showed that when $p \geq 5$, the preimage set partition of a $p$-ary $s$-plateaued function $f: V_{n}^{(p)} \rightarrow \mathbb{F}_{p}$ with $f(x)=f(-x)$ is an $s$-plateaued partition if and only if $f$ is of $(p-1)$-form, where $n+s$ is even. For the open problem on the relations between bent partitions of depth $p^{\frac{n}{2}}$ and spreads, we partially provided an answer. 

We list some questions for further research:
\begin{itemize}
\item The constructed $s$-plateaued partitions in this paper are all unbalanced. Is there an $s$-plateaued partition of depth $K>4$ that generates balanced $p$-ary $s$-plateaued functions?
\item When $p$ is odd, we have proven that all $p$-ary $s$-plateaued functions derived from an $s$-plateaued partition with $A_{i}=-A_{i}$ have the same Walsh support. It remains an open problem whether this property extends to other $s$-plateaued partitions.
\item Bent partitions have rich connections with coding theory and combinatorics. For $s$-plateaued partitions ($s>0$), are there some profound connections with coding theory and combinatorics?
\end{itemize}


\begin{thebibliography}{}
\bibitem{AAKM2024Be}
S. Alkan, N. Anbar, T. Kalayci, W. Meidl, Bent partitions and LP-packings, IEEE Trans. Inf. Theory, vol.70, no.7, pp.5365-5375, 2024.
\bibitem{AAKM2025Be}
S. Alkan, N. Anbar, T. Kalayci, W. Meidl, Bent Partition, vectorial dual-bent function, and LP-packing constructions, IEEE Trans. Inf. Theory, vol.71, no.1, pp.752-767, 2025.
\bibitem{AFKMWW2026A}
N. Anbar, F.-W. Fu, T. Kalayci, W. Meidl, J. Wang, Y. Wei, Analysis of some classes of bent partitions and vectorial bent functions, Des. Codes Cryptogr, vol.94, no.4: 78, 2026.
\bibitem{AKM2024Am}
N. Anbar, T. Kalayci, W. Meidl, Amorphic association schemes from bent partitions, Discret. Math., 
vol.347, no.1: 113658, 2024.
\bibitem{AKM2022Be}
N. Anbar, T. Kalayci, W. Meidl, Bent partitions and partial difference sets, IEEE Trans. Inf. Theory, 
vol.68, no.10, pp.6894-6903, 2022.
\bibitem{AKM2023Ge}
N. Anbar, T. Kalayci, W. Meidl, Generalized semifield spreads, Des. Codes Cryptogr., vol.91, no.2, 
pp.545-562, 2023.
\bibitem{AKMO2025Be}
N. Anbar, T. Kalayci, W. Meidl, F. \"{O}zbudak, Bent partitions and Maiorana-McFarland association schemes, Cryptogr. Commun., vol.17, no.6, pp.1641-1657, 2025.
\bibitem{AM2022Be}
N. Anbar, W. Meidl, Bent partitions, Des. Codes Cryptogr., vol.90, no.4, pp.1081-1101, 2022.
\bibitem{BFT2018Ex}
S. Bozta\c{s}, F. \"{O}zbudak, E. Tekin, Explicit full correlation distribution of sequence families using plateaued functions, IEEE Trans. Inf. Theory, 
vol.64, no.4, pp.2858-2875, 2018.
\bibitem{Carlet2015Bo}
C. Carlet, Boolean and vectorial plateaued functions, and APN functions. IEEE Trans. Inf. Theory, vol.61, no.11, pp.6272-6289, 2015.
\bibitem{Carlet1993Pa}
C. Carlet, Partially-bent functions, Des. Codes Cryptogr., vol.3, no.2, pp.135-145, 1993.
\bibitem{CM2013A}
A. \c{C}e\c{s}melio\u{g}lu, W. Meidl, A construction of bent functions from plateaued functions, Des. Codes Cryptogr., vol.66, nos.1-3,
pp.231-242, 2013.
\bibitem{CO2021Gr}
A. \c{C}e\c{s}melio\u{g}lu, O.\"{O}lmez, Graphs of vectorial plateaued functions as difference sets, Finite Fields Their Appl.,
vol.71:101795, 2021.
\bibitem{Feng2018Al}
K. Feng, Algebraic Number Theory. Harbin, China: Harbin Institute
Technology Press, 2018.
\bibitem{HPWZ2019De}
S. Hod\v{z}i\'{c}, E. Pasalic, Y. Wei, F. Zhang, Designing plateaued Boolean functions in spectral domain and their classification, IEEE Trans. Inf. Theory, vol.65, no.9, pp.5865-5879, 2019.
\bibitem{IR1990A}
K. Ireland, M. Rosen, A Classical Introduction to Modern Number
Theory. New York, NY, USA: Springer, 1990.
\bibitem{Mesnager2016Be}
S. Mesnager, Bent Functions-Fundamentals and Results, Springer, Switzerland, 2016.
\bibitem{MTQ2017Ge}
S. Mesnager, C. Tang, Y. Qi,
Generalized plateaued functions and admissible (plateaued) functions, IEEE Trans. Inf. Theory, vol.63, no.10, pp.6139-6148, 2017.
\bibitem{MOS2019Li}
S. Mesnager, F. \"{O}zbudak, A. S{\i}nak, Linear codes from weakly regular plateaued functions and their secret sharing schemes, Des. Codes Cryptogr., vol.87, nos.2-3, pp.463-480, 2019.
\bibitem{MOS2016Re}
S. Mesnager, F. \"{O}zbudak, A. S{\i}nak, Results on characterizations of plateaued functions in arbitrary characteristic. In Proceedings of BalkanCryptSec 2015, Lecture Notes in Computer Science, vol.9540, pp.17-30, 2016.
\bibitem{MS2020Se}
S. Mesnager, A. S{\i}nak, Several classes of minimal linear codes with few weights from weakly regular plateaued functions, IEEE Trans. Inf. Theory, vol.66, no.4, pp.2296-2310, 2020.
\bibitem{Metsch1991Im}
K. Metsch, Improvement of Bruck's completion theorem, Des. Codes Cryptogr., vol.1, no.2, pp.99-116, 1991.
\bibitem{Olmez2015Pl}
O. \"{O}lmez, Plateaued functions and one-and-half difference sets, Des. Codes Cryptogr., vol.76, no.3, pp.537-549, 2015.
\bibitem{OP2020Du}
F. \"{O}zbudak, R. M. Pelen, Duals of non-weakly regular bent functions
are not weakly regular and generalization to plateaued functions, Finite Fields Their Appl., vol.64: 101668, 2020. 
\bibitem{RPZW2024Se}
R. Rodr\'{i}guez-Aldama, E. Pasalic, F. Zhang, Y. Wei, Self-orthogonal minimal codes from (vectorial) $p$-ary plateaued functions, IACR Cryptol. ePrint Arch. 2024: 1383.
\bibitem{WF2023Ne}
J. Wang, F.-W. Fu, New results on vectorial dual-bent functions and partial difference sets, Des. Codes Cryptogr., vol.91, no.1, pp.127-149, 2023.
\bibitem{WFW2023Be}
J. Wang, F.-W. Fu, Y. Wei, Bent Partitions, vectorial dual-bent functions and partial difference sets, IEEE Trans. Inf. Theory, vol.69, no.11, pp.7414-7425, 2023.
\bibitem{WWF2026Fu}
J. Wang, Y. Wei, F.-W. Fu, Further results on bent partitions, IEEE Trans. Inf. Theory, vol.72, no.5, pp.3447-3459, 2026.
\bibitem{WFWY2024A}
J. Wang, F.-W. Fu, Y. Wei, J. Yang, A further study of vectorial dual-bent functions, IEEE Trans. Inf. Theory, vol.70, no.10, pp.7472-7483, 2024.
\bibitem{WWF2026Se}
Y. Wei, J. Wang, F.-W. Fu, Self-orthogonal codes from plateaued functions and their applications in quantum codes and LCD codes, IEEE Trans. Inf. Theory, vol.72, no.3, pp.1629-1645, 2026.
\bibitem{ZPZ2022Ph}
W. Zhang, E. Pasalic, L. Zhang, Phase orthogonal sequence sets for (QS)CDMA communications, Des. Codes Cryptogr., vol.90, no.5, pp.1139-1156, 2022.
\bibitem{ZZ2001On}
Y. Zheng, X. M. Zhang, On plateaued functions, IEEE Trans. Inf. Theory, vol.47, no.3, pp.1215-1223, 2001.

\end{thebibliography}
\end{document}